\documentclass[a4paper,11pt]{article}
\pdfoutput=1 

\usepackage[utf8]{inputenc}
\usepackage{jheppub}
\usepackage{natbib}
\usepackage{ulem}

\usepackage{float}

\usepackage{amssymb,amsmath,amsbsy,amsthm,mathtools}
\usepackage{braket}
\usepackage[mathscr]{euscript}

\newtheorem{definition}{Definition}[section]

\newtheorem{lemma}{Lemma}[section]
\newtheorem{theorem}{Theorem}[section]
\newtheorem{corollary}{Corollary}[section]

\theoremstyle{definition}
\newtheorem{remark}{Remark}

\usepackage{todonotes}
\usepackage{graphicx,color}
\usepackage{multirow}
\usepackage{hyperref}
\usepackage{subcaption}
\newcommand{\mC}{\mathcal{C}}

\newcommand{\mE}{\mathcal{E}}

\newcommand{\mP}{\mathcal{P}}

\newcommand{\msR}{\mathscr{R}}

\newcommand{\lp}{\left(}
\newcommand{\rp}{\right)}

\title{Combinatorial aspects of holographic quantum secret sharing}

\author[a,b]{Ning Bao}
\author[a]{Keiichiro Furuya}
\author[a]{Jacob March}

\affiliation[a]{Department of Physics, Northeastern University, Boston, MA, 02115, USA}
\affiliation[b]{Computational Science Initiative, Brookhaven National Laboratory, Upton, NY 11973 USA}

\emailAdd{ningbao75@gmail.com}
\emailAdd{k.furuya@northeastern.edu}
\emailAdd{j.march@northeastern.edu}

\makeatletter
\gdef\@fpheader{}
\makeatother

\abstract{We introduce combinatorial holographic quantum secret sharing (CHQSS) for a bulk subregion in AdS$_3$/CFT$_2$ to study how logical information of the bulk subregion is encoded in the boundary and protected from erasures of boundary subregions. We introduce a distance, a reconstruction threshold, and a secret threshold to characterize CHQSS schemes. We present the phase transitions of multipartite entanglement wedges in a symmetric setup and observe multiple distinct phase transition points. The distance and thresholds depend on the holographic phase and the choice of bulk subregion. We derive the maximum distance. Moreover, we derive the relations between the distance and the thresholds. We construct a family of CHQSS schemes in the symmetric setting that includes perfect threshold CHQSS and perfect non-threshold CHQSS.}

\begin{document}
\maketitle


    

\section{Introduction}


In quantum information processing, encoding logical information is essential for protecting it from errors and achieving quantum tasks. For example, one prepares a quantum error-correction code with a choice of encoding so that the encoded logical information is undisturbed by a set of errors and recoverable.

Secret sharing is an encoding scheme that makes secret information accessible only to certain sets of shares. Those shares that can jointly reconstruct the secret are called authorized sets, and the ones that cannot are called unauthorized sets. An access structure is a set of authorized sets, and a forbidden structure is a set of unauthorized sets. Various types of classical and quantum secret sharing schemes \cite{Gottesman:1999jzr,Cleve:1999qg,Hillery:1998yq,Smith:2000wvp,Zhang:2014jqy,Xiao:2004gua,Cakan:2023izz,cryptoeprint:2025/518,Qin:2007eih,Gheorghiu:2012obj,Applebaum2026secretsharing,Lie2018secretsharing,cryptoeprint:2012/412,Helwig:2012nha,Markham:2008kqi,Chiwaki:2025erf,Matsumoto:2019wct,Matsumoto:2017xuk,Javelle:2011dpw,Sarvepalli:2012nfk,Keet:2010rlc} have been proposed and studied. It has been proven that some of them are equivalent to erasure codes \cite{Cleve:1999qg}. In short, secret sharing schemes can be characterized by examining how easy or difficult it is to reconstruct the logical information.

In a holographic setup, a quantum secret sharing scheme is reminiscent of bulk reconstruction or holographic encoding, which maps bulk degrees of freedom into a boundary theory. In other words, for instance, in the entanglement wedge reconstruction \cite{Jafferis:2015del, Dong:2016eik,Harlow:2016vwg,Pastawski:2015qua,Almheiri:2014lwa,Furuya:2020tzv,Leutheusser:2022bgi,Cotler:2017erl,Chen:2019gbt}, a boundary subregion is an authorized set that can reconstruct the secret in a bulk subregion if the corresponding entanglement wedge of the boundary subregion contains the bulk subregion. The complement of the boundary subregion with respect to the entire boundary is an unauthorized set.

As studied in \cite{Gottesman:1999jzr}, a quantum secret sharing scheme exists for any access structure if and only if it satisfies a monotonicity structure and the no-cloning theorem. We will see in the main text that the access structure can be defined in the holographic setup by the entanglement wedge reconstruction, and that the two properties, the monotonicity and the no-cloning theorem, follow from entanglement wedge nesting \cite{Dong:2016eik,Akers:2017ttv,Czech2025EW,Czech:2023xed,Akers:2023fqr,Bao2025entanglementwedge} together with the complementarity of entanglement wedges. Thus, the quantum secret sharing scheme naturally arises in the holographic setup, as mentioned in the papers \cite{Almheiri:2014lwa,Pastawski:2015qua}. 

We should note that quantum error correction and quantum secret sharing capture different aspects of encoding, although encoding is a key process in both. Holographic quantum error correction (HQEC) studies how encoded bulk logical information is protected against erasures of boundary subregions, whereas holographic quantum secret-sharing (HQSS) focuses on how bulk logical information is encoded to the holographic states on the boundary, that is, how easy or difficult it is to reconstruct, and how secret it is. 

Holographic quantum secret sharing can be either a pure-state or mixed-state scheme based on how many boundary regions one has access to. 

On the one hand, we have a pure-state scheme if we have access to the entire boundary for the bulk reconstruction. We then have infinitely many choices of boundary subregions to reconstruct the bulk logical information in $b$.
Moreover, the logical information in $b$ is protected against any boundary erasure unless the entanglement wedge of the erased boundary subregion contains $b$. 

On the other hand, we have a mixed-state QSS if we restrict ourselves to a part of the boundary. In this case, achieving the bulk reconstruction of the bulk logical information in $b$ depends on the shape and the configuration of the corresponding entanglement wedges. 

Especially for holographic mixed states, holographic phase transitions and redundancy in the holographic encoding affect the quality of holographic quantum secret sharing associated with a bulk subregion.

\begin{figure}
    \centering
    \includegraphics[width=0.9\linewidth]{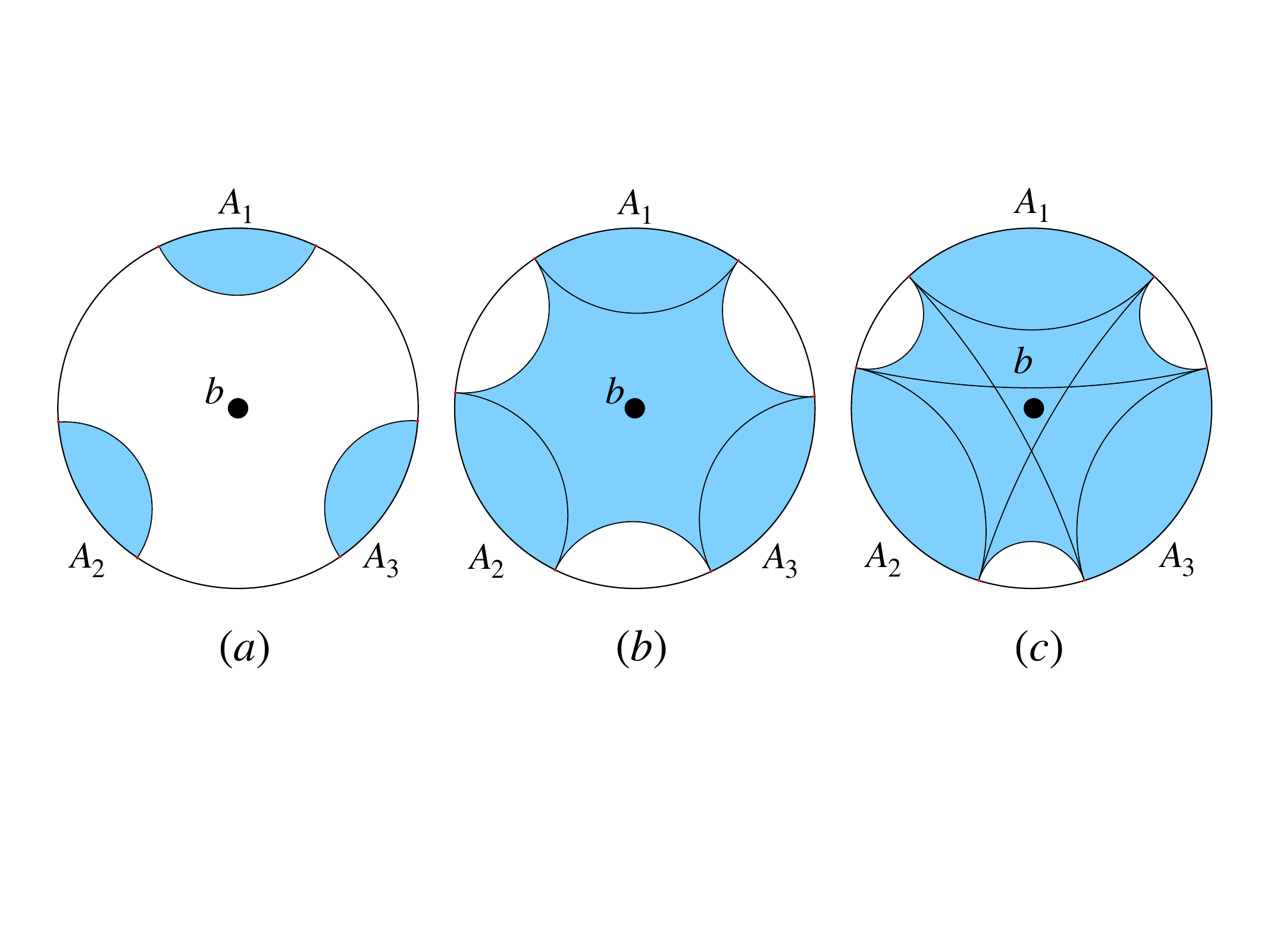}
    \caption{\small{Three distinct holographic phases for $n=3$ boundary subregions. The RT surfaces of the element in $\mP[n]=\{A_1,A_2,A_3,A_1A_2, A_1A_3, A_2A_3, A_1A_2A_3\}$ are drawn with a black line. The entanglement wedge of $A_1A_2A_3$ is highlighted blue. The black dot represents a spatial point in the bulk subregion $b$: (a) Disconnected phase, (b) Only the tripartite entanglement wedge, $EW(A_1A_2A_3)$, is connected, (c) The entanglement wedges of all combinations of boundary subregions are connected.}}
    \label{fig:geometry_n-3}
\end{figure}

In figure \ref{fig:geometry_n-3}, we have three distinct holographic phases for $n=3$ boundary subregions, $[n=3]:=\{A_1,A_2,A_3\}$. Reconstructability of the bulk logical information at a representative spatial point $b$ in the bulk subregion, represented as a black dot, from the boundary subregions $A_1,A_2,A_3$ in the figure depends on a holographic phase. 

In the case of $(a)$, the bulk logical information at $b$ cannot be reconstructed from any combination of $[n]$. However, it is reconstructible in the case of figure \ref{fig:geometry_n-3}-(b) and (c). In (b), simultaneous access to three boundary subregions, such as $A_1A_2A_3$, is required to jointly reconstruct the bulk logical information. In (c), at least two boundary subregions are required to reconstruct the bulk logical information. In both (b) and (c), none of the single boundary subregions can reconstruct the bulk logical information.


The redundancy of the holographic encoding matters when considering erasures of boundary subregions. We say that bulk logical information is \textit{redundantly} encoded\footnote{Its definition is given in definition \ref{def:redundant_encoding}.} against the erasure if the logical information remains reconstructible after the erasure. For example, in the case of $(c)$, the bulk subregion $b$ is redundantly encoded because the bulk logical information in $b$ can still be reconstructed from $A_1A_2$ even if one loses access to $A_3$. However, the bulk logical information at $b$ in figure \ref{fig:geometry_n-3}-(b) is encoded less redundantly and can never be retrieved if the boundary subregion $A_3$ is erased in the holographic phase, where only the tripartite entanglement wedge is connected. 



Therefore, there are apparent differences in the quality of holographic encoding between pure and mixed states, arising from holographic phase transitions and the connectivity of entanglement wedges. Particularly in the case of mixed-state holographic encoding, the existence of bulk logical information that can be reconstructed from $EW(R_1\cup R_2)$, but not from each individual $EW(R_1)$ and $EW(R_2)$, has been observed as an obstruction for the holographic QEC to be exact \cite{Faulkner:2020hzi,Leutheusser2025quantumtasksQEC}.  

The main goal of this paper is to characterize the quality of holographic encoding in terms of quantum secret sharing and to understand the difference between pure-state and mixed-state holographic encoding, including the obstruction to the exactness of the holographic QEC.

To study the holographic quantum secret sharing, we consider the following coarse-grained situation. For a set $[n]:=\{A_1,\cdots, A_n\}$ of $n$ boundary subregions, consider a certain phase of entanglement wedges of $\mP[n]$, the power set of $[n]$. Then, the dual bulk geometry is partitioned by the RT surfaces. We coarse-grain the holographic geometry by introducing the RT-region graph studied in \cite{Bao:2015bfa,Bao:2024azn,Bao:2025sjn}. Each bulk subregion enclosed by portions of RT surfaces is represented as a vertex. Two vertices are connected by an edge if the geodesics between the vertices cross an RT surface once. 

For each vertex $b$ in the region graph, we construct a set $\msR_b$ of boundary subregions whose entanglement wedge contains $b$. The set $\msR_b$ is the access structure that satisfies the monotonicity and no-cloning theorem. With the access structure, we define \textit{combinatorial holographic quantum secret sharing} (CHQSS). We aim to understand how easy or difficult it is to reconstruct the bulk logical information and how well it is protected against erasures of boundary subregions.

As in $(\!(r,s,n)\!)$-QSS\footnote{A perfect threshold QSS is often written as $(\!(r,n)\!)$-QSS without $s$ because $r=s+1$\cite{Cleve:1999qg,Gottesman:1999jzr}.} for a finite quantum system, we define a reconstruction threshold $r_b$ and a secret threshold $s_b$ for the bulk subregion $b$ in our setup and characterize the scheme by them, i.e., $(\!(r_b,s_b,n)\!)$-CHQSS. Intuitively, the reconstruction threshold measures how easy it is to reconstruct the bulk logical information, and the secret threshold measures how difficult it is to reconstruct. In addition, we define the distance $d_b$, which measures how well the bulk logical information in $b$ is protected against the erasure errors. The distance is analogous to the distance defined in the study of quantum error-correction codes. We derive the relations among $r_b$, $s_b$, and $d_b$ in theorem \ref{thm:reconstruction_distance} and \ref{thm:secret_distance}.




In section \ref{sec:transition_points_MEW}, we first present that, for a symmetric case, the holographic phase transition of multipartite entanglement wedges can occur at different transition points. In section \ref{sec:redundant_encoding}, we give a precise definition of the redundancy of the holographic encoding. We also introduce the distance $d_b$ and find the maximum distance as a function of $n$. 
In section \ref{sec:CHQSS}, we construct combinatorial holographic quantum secret sharing. After defining the reconstruction and secret thresholds, we evaluate them for a choice of $n$ and a bulk subregion $b$ in a given holographic phase. We then explore CHQSS in terms of several types of known QSS, such as threshold/non-threshold QSS. We conclude the paper with discussions in section \ref{sec:discussions}.

\section{Holographic phase transitions of multipartite entanglement wedges}\label{sec:transition_points_MEW}

In this section, we present examples of holographic phase-transition points of multipartite entanglement wedges. We consider a symmetric setup where there are $n$ boundary subregions with the same angular interval $\phi$. Moreover, they are evenly spaced along the boundary. We observe several distinct transition points for multipartite entanglement wedges.

\subsection{Symmetric cases}


In pure AdS$_3$, the area of the RT surface of a boundary subregion $A$ whose angular interval $\phi$ is
\begin{equation}
    \gamma_A = 2\log \frac{2}{\varepsilon}\sin\lp\frac{\phi}{2}\rp,
\end{equation}
in the units of the AdS radius, where $\varepsilon$ is a UV regulator, $\phi$ is an angular interval, measured in radians, of the subregion $A$, see, for instance, figure \ref{fig:angular_interval-n-5}.

We consider a set $[n]$ of $n$ boundary subregions in pure AdS$_3$. The angular size of each subregion is denoted by $\phi$, and all of them are evenly distributed on the boundary. If the union of $n$ boundary subregions does not cover the entire boundary, we have the complement of the union as a purifier $O$. 
We label the boundary subregions in counterclockwise order, with subscripts increasing along the boundary, see figure \ref {fig:angular_interval-n-5}. 

\begin{figure}
    \centering
    \includegraphics[width=0.5\linewidth]{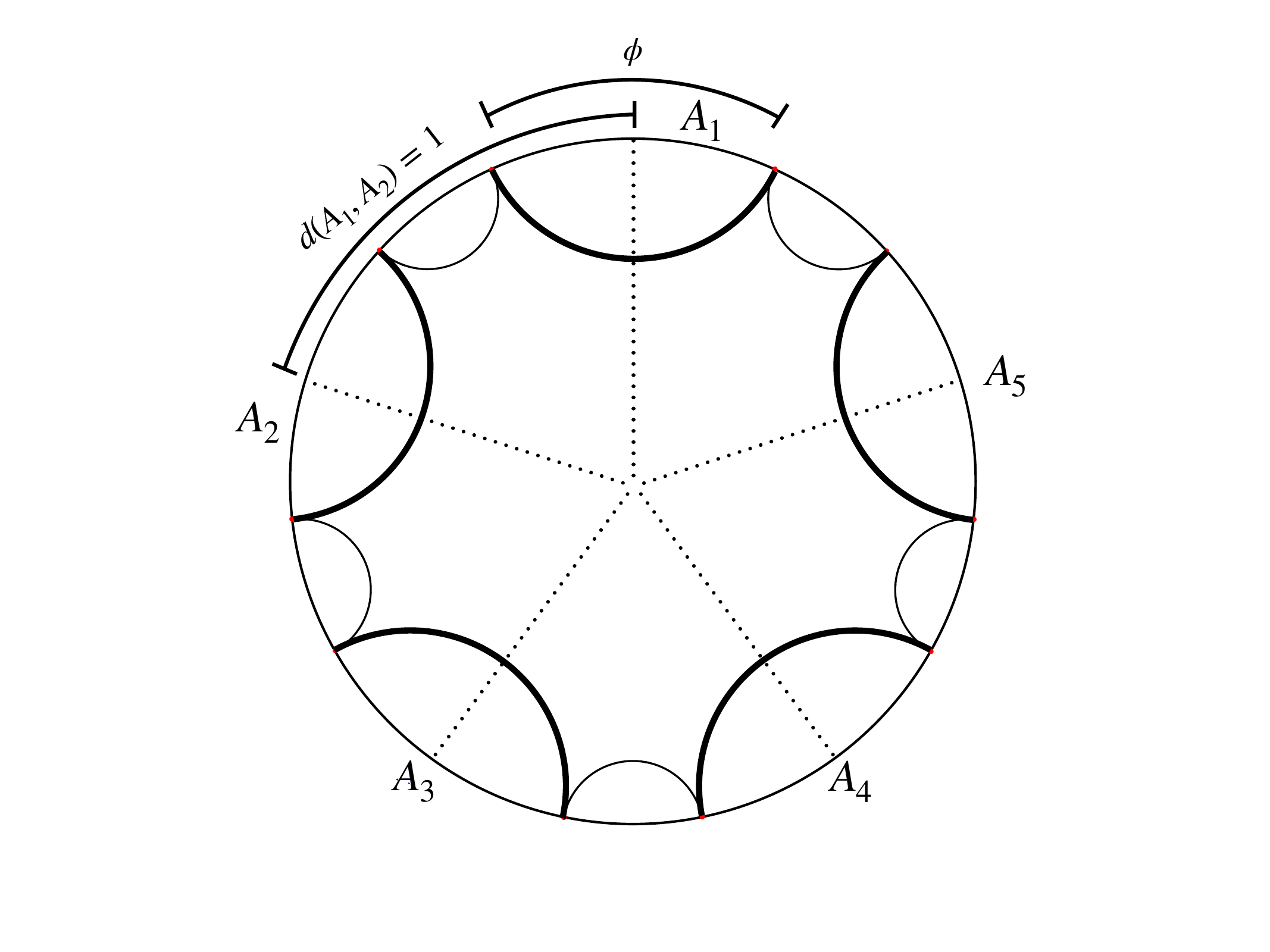}
    \caption{\small{The symmetric setup for $n=5$ on a constant time slice of AdS$_3$. The entanglement wedges of $A_1$, $A_2$, $A_3$, $A_4$, $A_5$, and $A_1A_2A_3A_4A_5$ are drawn. The angular interval of each boundary subregion is $\phi$. The directed distance between $A_1$ and $A_2$ is $d(A_1,A_2)=1$. }}
    \label{fig:angular_interval-n-5}
\end{figure}

When computing an entanglement entropy of $m$-partite boundary subregions, we need to choose the configuration of entanglement wedges, which gives the minimum entanglement entropy. We denote a set of the boundary subregions of an $m$-partite entanglement wedge by $\{A_{i_p}\}_{p=1}^{m} \subseteq [n]$ where $i_p \in \{1,\cdots, n\}$ and we adopt the cyclicity $A_{i_{n+1}}:=A_{i_1}$. For example, $\{A_{i_1} = A_1,A_{i_2}=A_3,A_{i_3}=A_4\}\subset [n=5]$ is a set of boundary subregions for a tripartite entanglement wedge.

We compute the entanglement entropy of $\{A_{i_p}\}_{p=1}^m$ 
\begin{equation}
    S_{i_1 \cdots i_m} = \frac{\gamma_{i_1 \cdots i_m}}{4G_N},
\end{equation}
where $\gamma_{i_1 \cdots i_m}$ is the total area of the minimal configuration of RT surfaces homologous to the union of the $m$ boundary subregions.


The phase transition occurs when the configuration of the entanglement wedges switches from one to another\footnote{See \cite{Ju:2024hba,Ju:2025tgg} for the studies of holographic multipartite entanglement in multipartite entanglement wedges and their connectivity criteria.}. We have numerically computed the transition points for $n=3,4,5$, see figure \ref{fig:transition_points_n-3-4} and \ref{fig:transition_geometry_n-5-221}.

\begin{figure}[t]
    \centering
    \includegraphics[width=1.0\linewidth]{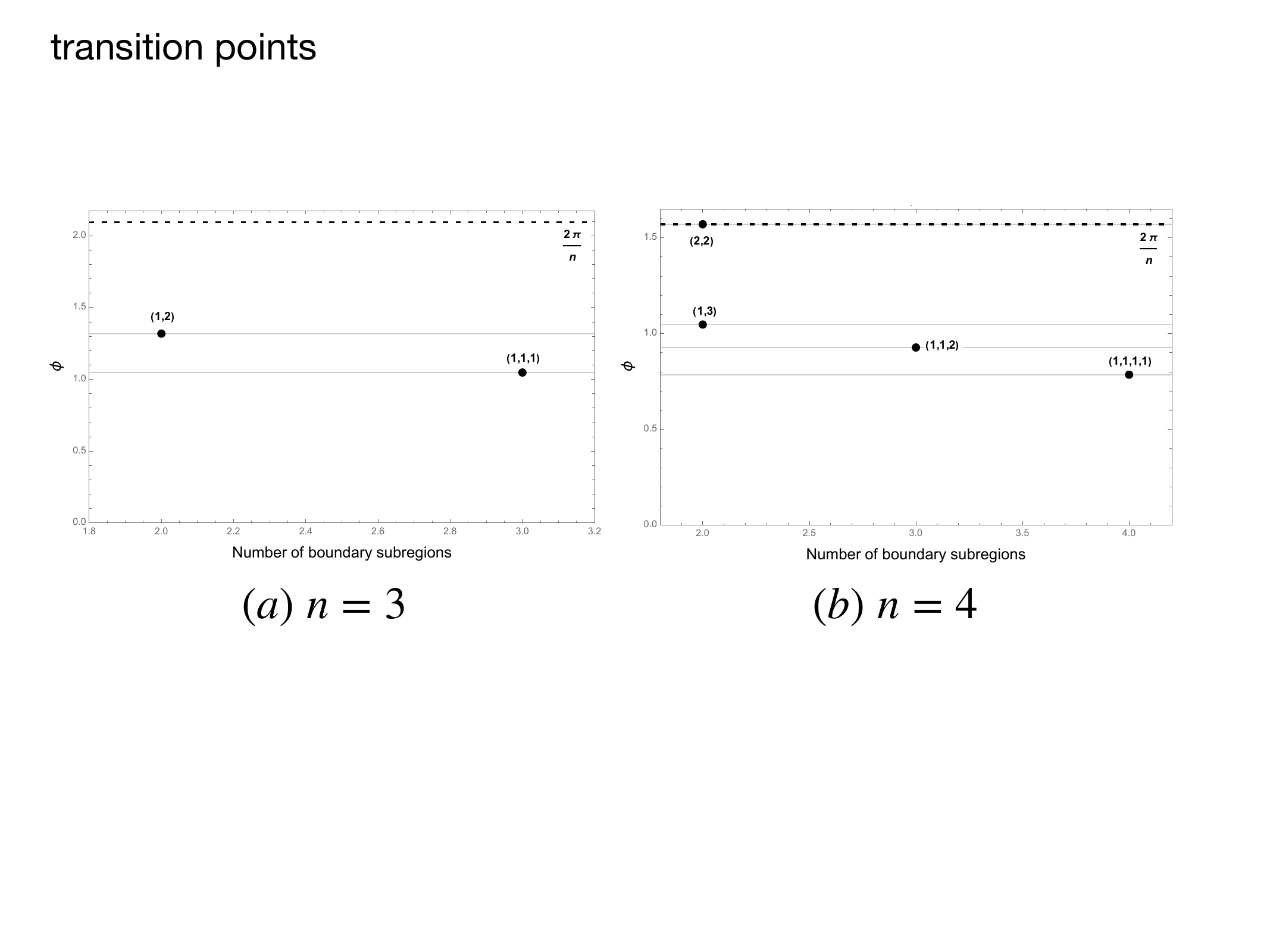}
    \caption{\small{The plots of phase transition points for $n=3,4$. The horizontal axis $m$ is the number of boundary subregions for $m$-partite entanglement wedges. The vertical axis is the angular interval $0\leq \phi \leq 2\pi/n$. We put the black dotted line when $\phi = 2\pi/n$. The gray lines represent the value of transition points $\phi_t$. For $n=3$, we have $\phi_0=1.05\sim \pi/3$ and $\phi_1=1.32$ radians. For $n=4$, we have $\phi_0=0.785$, $\phi_1=0.927$, $\phi_2=1.05\sim \pi/3$ and $\phi_3=1.57 \sim 2\pi/4$ radians. The lowest gray horizontal line corresponds to the transition point $\phi_{t=0}$ from the fully disconnected phase to the first connected phase. The labels $(x_{i_1},\cdots, x_{i_3})$ for $i_p\in \{1,2,3\}$ and $(x_{i_1},\cdots, x_{i_4})$ for $i_p\in \{1,2,3,4\}$ are the sets of boundary subregions whose entanglement wedges are in their connected phase after $\phi_t$.}}
    \label{fig:transition_points_n-3-4}
\end{figure}

We introduce the label $(x_{i_1},\cdots, x_{i_{m}})$ for a set of boundary subregions whose $m$-partite entanglement wedge is connected. Here, $x_{i_m}$ denotes the spacing pattern defined below. We define a directed distance function $d$ in the counterclockwise direction in the unit of $2\pi/n$ between the boundary subregions $A_i,A_j$ as, see figure \ref{fig:angular_interval-n-5},
\begin{equation}\label{def:set_distance}
    d (A_i, A_j) := (j-i) \mod n.
\end{equation}
For example, for $n=5$, 
\begin{equation}
    d (A_1, A_3) = 2 \mod 5 = 2,
\end{equation}
whereas
\begin{equation}
    d (A_3, A_1) = -2 \mod 5 = 3.
\end{equation}

With the distance function, we write 
\begin{equation}
    x_{i_p} := d(A_{i_{p}},A_{i_{p+1}})    
\end{equation}
for $p=1,\cdots,m$. 
Moreover, $x_{i_p}$ satisfies 
\begin{equation}
    \sum_{p} x_{i_p} = n.
\end{equation}
Note that the set of boundary subregions represented by $(x_{i_1},\cdots, x_{i_{m}})$ depends only on the set $\{x_{i_1},\cdots, x_{i_{m}}\}$.  

For example, consider $n=5$ and $m=3$-partite entanglement wedges. There are two sets of boundary subregions represented by $(1,2,2)$ and $(1,1,3)$, i.e.,
\begin{equation}
    (1,2,2) = \{A_1A_2A_4, A_2A_3A_5,A_3A_4A_1,A_4A_5A_2,A_5A_1A_3\},
\end{equation}
and 
\begin{equation}
    (1,1,3) = \{A_1A_2A_3, A_2A_3A_4,A_3A_4A_5,A_4A_5A_1,A_5A_1A_2\}.
\end{equation}
In particular, $A_3A_4A_1$ is the element of the set of boundary subregions represented by $(1,2,2)$ because
\begin{equation}
    d(A_3,A_4) = 1,\; d(A_4,A_1) = 2,\;d(A_1,A_3) = 2,
\end{equation}
see figure \ref{fig:transition_geometry_n-5-221}-(b).

\subsubsection{Examples}

\begin{figure}
    \centering
    \includegraphics[width=1.0\linewidth]{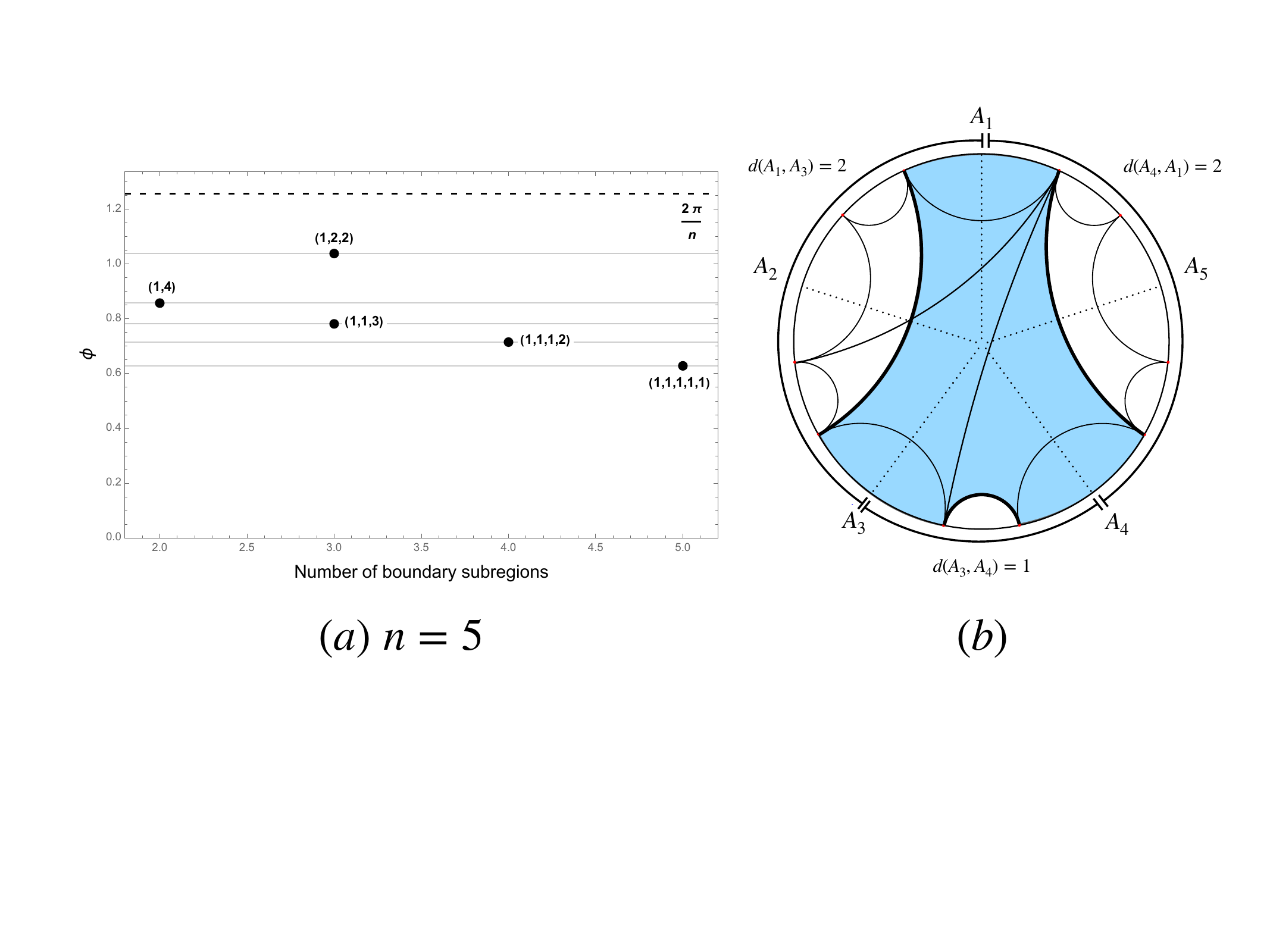}
    \caption{\small{(a) The plot for the transition points for $n=5$. We have five transition points, $\phi_0=0.628\sim \pi/5$, $\phi_1=0.715$, $\phi_2=0.781$, $\phi_3=0.857$, $\phi_4=1.04$ radians. (b) The configuration of entanglement wedges of $A_1$, $A_2$, $A_3$, $A_4$, $A_5$, $A_1A_2$, $A_1A_2A_3$, $A_1A_2A_3A_4$, $A_1A_2A_3A_4A_5$, and $A_1A_3A_4$ when $\phi \geq \phi_{t=4}$ on a constant time slice of AdS$_3$. The entanglement wedge of $A_1A_3A_4$ is highlighted blue.}}
    \label{fig:transition_geometry_n-5-221}
\end{figure}

Let us present the example cases for $n=3, 4, 5$.  
From now on, we label the distinct holographic phase transition points by $t=0, 1,\cdots$. The order of the integers, e.g., $t=0 < t=1$, is consistent with the ordering of the angular intervals, e.g., $\phi_{t=0}<\phi_{t=1}$. We set $t=0$ as the transition point from a fully disconnected phase to the first connected phase. 

For $n=3$, there are two transition points and three phases, see figure \ref{fig:geometry_n-3} and \ref{fig:transition_points_n-3-4}. The first transition is from the fully disconnected phase into the connected phase of the tripartite entanglement wedge whose boundary subregion is represented by $(1,1,1) = \{A_1A_2A_3\}$. It should be noted that the bipartite connected entanglement wedges are not connected in the phase $\phi_{t=0} \leq \phi < \phi_{t=1} $. When $\phi\geq \phi_{t=1}$, all of the bipartite entanglement wedges are connected in addition to the tripartite entanglement wedge. Those bipartite entanglement wedges are represented as $(1,2) = \{A_1A_2, A_2A_3,A_3A_1\}$.

The case $n=4$ has four transition points $t=0,\cdots,3$ in figure \ref{fig:transition_points_n-3-4}. We see that two types of bipartite entanglement wedges enter the connected phase at different transition points. The set of boundary subregions of the first type is represented by $(1,3) = \{A_1A_2,A_2A_3,A_3A_4,A_4A_1\}$. The second type is represented by $(2,2) = \{A_1A_3,A_2A_4\}$. The entanglement wedges of the boundary subregions represented by $(2,2)$ are disconnected until the last transition point $\phi_{t=3}$.

The case $n=5$ in figure \ref{fig:transition_geometry_n-5-221} has multiple distinct shapes of bipartite and tripartite entanglement wedges. The bipartite entanglement wedges whose boundary subregions represented by $(1,4)=\{A_1A_2, A_2A_3, A_3A_4,A_4A_5,A_5A_1\}$, get connected. However, the bipartite entanglement wedges of the boundary subregions represented by 
\begin{equation}
    (2,3) = \{A_1A_3,A_1A_4,A_2A_4,A_2A_5,A_3A_5\} 
\end{equation}
do not get connected before $\phi=2\pi/5$, which is the upper bound of the angular interval in the current symmetric setup. 

All possible tripartite entanglement wedges get connected at the different transition points. The sets of corresponding boundary subregions are
\begin{equation}
    (1,2,2) = \{A_1A_2AA_4, A_2A_3A_5,A_3A_4A_1,A_4A_5A_2,A_5A_1A_3\},    
\end{equation}
and
\begin{equation}
    (1,1,3) = \{A_1A_2A_3, A_2A_3A_4,A_3A_4A_5,A_4A_5A_1,A_5A_1A_2\}.   
\end{equation}

\begin{figure}
    \centering
    \includegraphics[width=0.6\linewidth]{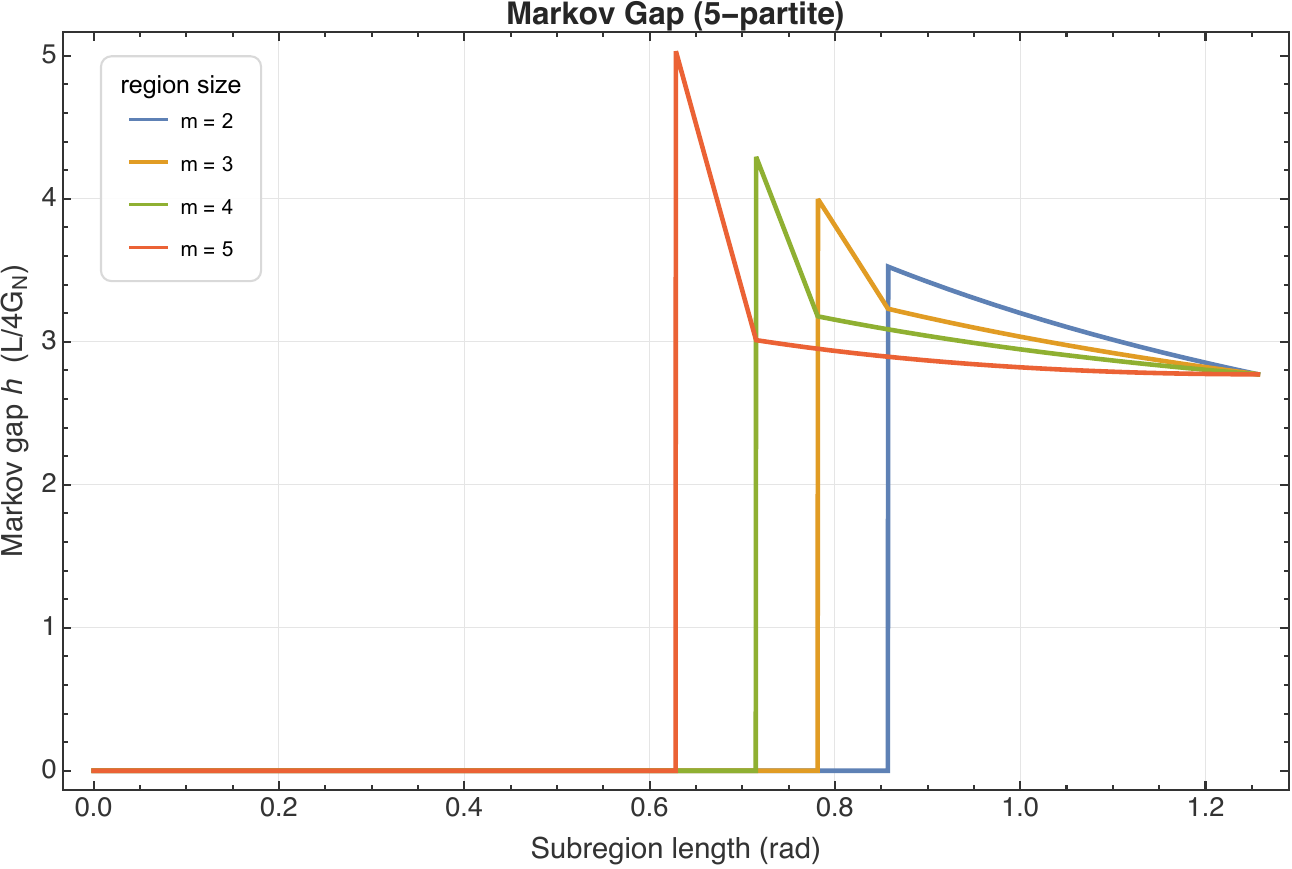}
    \caption{\small{The Markov gaps of adjacent unions of $m=2,\cdots,5$ boundary subregions in the symmetric $n=5$ setup, as functions of the angular interval $\phi$. The transition points coincide with the transition points in figure \ref{fig:transition_geometry_n-5-221}-(a).}}
    \label{fig:markov_gap}
\end{figure}




We have also numerically computed the multipartite information \cite{Ju:2024hba} and the Markov gaps in the symmetric setup, see figure \ref{fig:markov_gap}. The Markov gap is defined as the difference between the reflected entropy and the mutual information \cite{Hayden:2021gno}, i.e.,
\begin{equation}
    S_R -I.
\end{equation}
Here, for two boundary subregions $A$ and $B$, the mutual information is
\begin{equation}
    I(A:B) = S_A + S_B - S_{AB},
\end{equation}
and the reflected entropy $S_R(A:B)$ is the entanglement entropy of $AA^*$ in the canonical purification $\ket{\sqrt{\rho_{AB}}}$ of $\rho_{AB}$, where $A^*B^*$ denotes the copy of $AB$ \cite{Dutta:2019gen}. In figure \ref{fig:markov_gap}, we compute the Markov gaps with the bipartitions of $A_1:A_2$, $A_1:A_2A_3$, $A_1:A_2A_3A_4$, $A_1:A_2A_3A_4A_5$, and $A_1:A_2A_3A_4A_5A_6$. Holographically, the reflected entropy is computed by twice the area of the entanglement wedge cross section of $EW(AB)$ \cite{Dutta:2019gen}. Note that, for time-symmetric states in pure AdS$_3$, the Markov gap is universally bounded from below by $\frac{\log2}{2G_N}$ times the number of endpoints of the entanglement wedge cross section, in units of the AdS radius \cite{Hayden:2021gno} which is seen in the limiting behavior of figure~\ref{fig:markov_gap}.

The transition points extracted from the numerical computations of these entropic quantities are the same\footnote{They match at least at the order of four significant figures.} as the transition points computed above, such as figure \ref{fig:transition_points_n-3-4} and \ref{fig:angular_interval-n-5}. This agreement is expected because these quantities are computed from the same minimal surface configurations whose exchange defines the transitions, and it serves as a consistency check of our numerics.

\section{Redundancy in holographic encoding}\label{sec:redundant_encoding}

As we briefly discussed in the introduction, for $[n=3]=\{A_1,A_2,A_3\}$, let us consider two holographic phases when i) all bipartite and tripartite entanglement wedges are connected, figure \ref{fig:geometry_n-3}-(c), and ii) only the tripartite entanglement wedge is connected, figure \ref{fig:geometry_n-3}-(b).

In the first phase, such as figure \ref{fig:geometry_n-3}-(c), the bulk subregion $b$, which is inside all the entanglement wedges, is well protected against the erasure of any single boundary subregion. Even more, the bulk logical information in $b$ can be reconstructed from any pair of boundary subregions. This bulk subregion appears when the bipartite entanglement wedges are connected, see figure \ref{fig:mixed_error_correction_volume}\footnote{In this paper, we characterize the spatial bulk subregions in $AdS_3$ by the reconstructibility. However, it could be interesting to characterize them by volume. We comment on this perspective in section \ref{sec:discussions} as a future direction.}.

\begin{figure}
    \centering
    \includegraphics[width=0.9\linewidth]{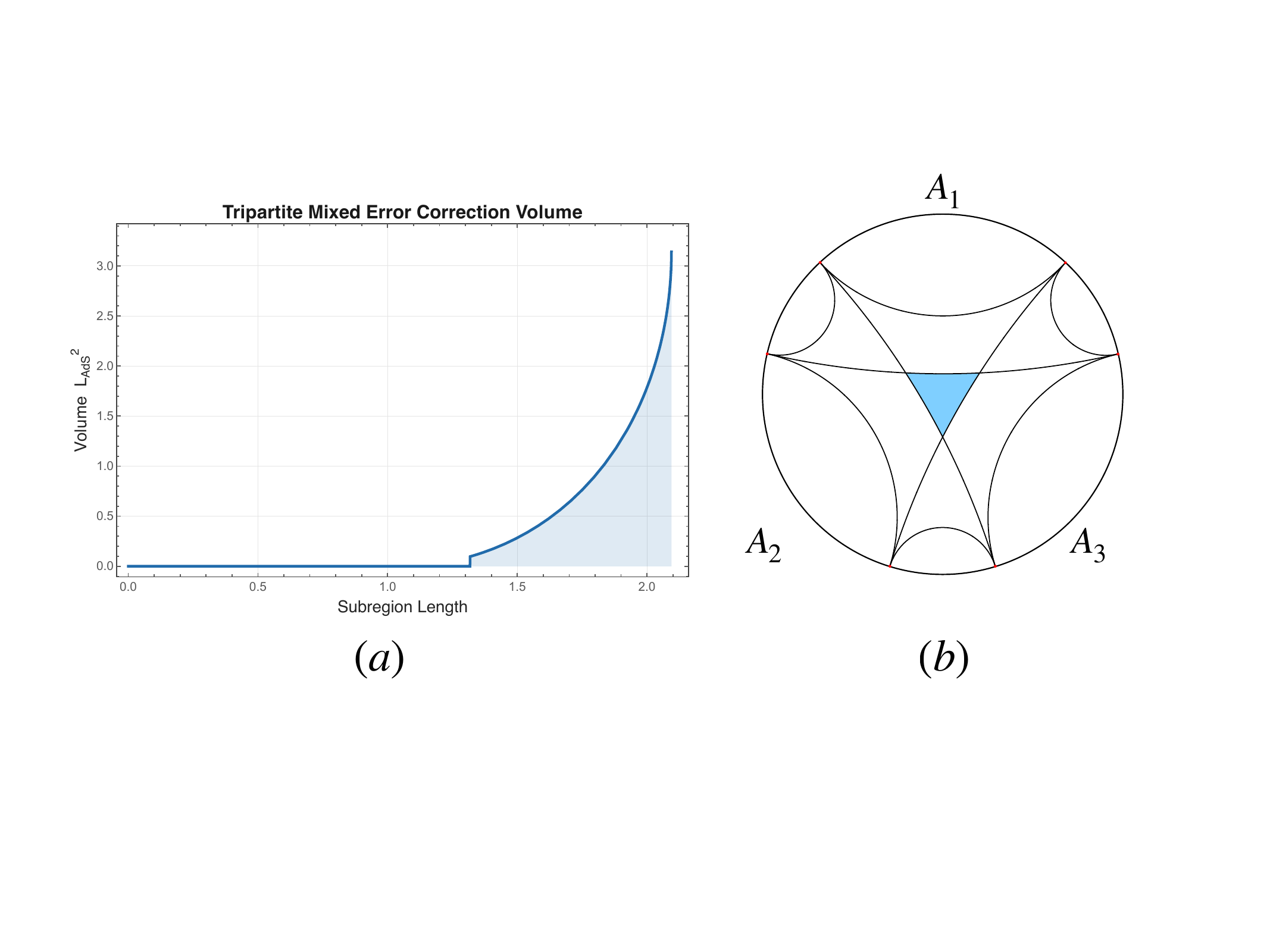}
   \caption{\small{(a) The volume of the bulk subregion contained in every bipartite entanglement wedge in the symmetric $n=3$ setup, as a function of the size of the boundary subregions. The volume vanishes until the bipartite entanglement wedges become connected and grows monotonically thereafter. (b) The corresponding bulk subregion, highlighted blue, bounded by the RT surfaces of the bipartite entanglement wedges.}}
    \label{fig:mixed_error_correction_volume}
\end{figure}

In the second phase, such as figure \ref{fig:geometry_n-3}-(b), the bulk subregion $b$ lies only within the tripartite entanglement wedge and is outside of the entanglement wedges of the single boundary subregions. Hence, we need three boundary subregions to reconstruct. Moreover, it cannot be reconstructed after the erasure of any single boundary subregion. 

In this section, we introduce a distance in the combinatorial setup to study how well a bulk subregion is protected against erasure\footnote{The distance introduced below can be considered as a coarse-grained version of the distance introduced in \cite{Pastawski:2016qrs}.}. However, we study it only in the coarse-grained setup, which we now define.

For a set $[n]:=\{A_1,\cdots, A_n\}$ of $n$ boundary subregions, consider a certain phase of entanglement wedges of $\mP[n]$. We coarse-grain the holographic geometry by introducing the RT-region graphs.
\begin{definition}[RT-region Graph\cite{Bao:2015bfa,Bao:2024azn,Bao:2025sjn}]\label{def:RTtoPC}
    Consider the RT surfaces of all elements of $\mP[n]$ in AdS$_{D+1}$/CFT$_D$ in a given holographic phase, which partition the bulk constant-time slice into bulk subregions. The RT-region graph  $G=(V,E)$ has the following elements.
    \begin{itemize}
        \item Set $V$ of vertices: Assign a vertex to every bulk subregion
            \begin{itemize}
                \item Boundary vertices $\{v_{A_i}\}\subseteq V$: A vertex assigned to the bulk subregion homologous to the boundary region $A_i\in [n]$.
                \item Bulk vertices $\{v_b\}\subset V$; the vertices that are not boundary vertices.
            \end{itemize}
        \item  Set $E$ of edges: Assign an edge between two vertices if the geodesics between them crosses an RT surface once. 
        \begin{itemize}
            \item The edge weights are proportional to the portion of the area of RT surfaces where the edges cross.
        \end{itemize}
    \end{itemize}
    
\end{definition}

We will use the graphs for visualizations, e.g., figure \ref{fig:graphs_n-3_distance_disconnected_partial_full}, and the definition of combinatorial holographic QSS in section \ref{sec:CHQSS}. Note that we often use $b$ instead of $v_b$ to represent a vertex in an RT-region graph.







\subsection{Setup and definitions}


Consider a bulk subregion $b$ and $n$ boundary subregions $[n]:= \{A_1, \cdots, A_n\}$ whose union may or may not cover the entire boundary\footnote{We denote by $O$ the complement of $\cup_{i=1}^n A_i$.}. \textit{Entanglement wedge reconstruction} \cite{Jafferis:2015del, Dong:2016eik,Harlow:2016vwg,Pastawski:2015qua,Almheiri:2014lwa,Furuya:2020tzv,Leutheusser:2022bgi,Cotler:2017erl,Chen:2019gbt} states that logical information in the bulk subregion $b$ can be reconstructed on the boundary subregion $R$ if 
\begin{equation}
    b\subseteq EW(R).
\end{equation}

Based on the entanglement wedge reconstruction, we define a set $\mathscr{R}_b$ of boundary subregions whose entanglement wedge contains the bulk subregion $b$, i.e.,
\begin{equation}\label{def:reconstructible_wedges_1}
    \mathscr{R}_{b}: = \Big\{R\in \mP[n]\; \big|\; b\subseteq EW(R) \Big\},
\end{equation}
where $\mP[n]$ is a power set of $[n]$.

The set $\msR_b$ enjoys an upward monotonicity property because of the \textit{entanglement wedge nesting}(EWN)\cite{Dong:2016eik,Akers:2017ttv,Czech2025EW,Czech:2023xed,Akers:2023fqr,Bao2025entanglementwedge}. The entanglement wedge nesting states that $R_1 \subseteq R_2$ implies $EW(R_1) \subseteq EW(R_2)$. If $b$ can be reconstructed from $R\in \msR_b$, then, any element $R' \in \mP[n]$ such that $R\subseteq R'$ should also be able to reconstruct $b$ because $R\subseteq R'$ implies that 
\begin{equation}
     b\subseteq EW(R) \subseteq EW(R')
\end{equation}
from the EWN. Thus, $R'\in \msR_b$.
\begin{lemma}[Upward monotonicity]\label{lem:upward_monotonicity}
    For $R\in \msR_b$ and any $R'\in\mP[n]$,
    \begin{equation}
        R\subseteq R' 
    \end{equation}
    implies that $R'\in \msR_b$.
\end{lemma}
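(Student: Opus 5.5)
The plan is to argue directly from the definition of $\msR_b$ in \eqref{def:reconstructible_wedges_1}, with entanglement wedge nesting (EWN) as the only input. Fix $R\in\msR_b$ and $R'\in\mP[n]$ with $R\subseteq R'$. Unpacking the definition, membership $R\in\msR_b$ means exactly $b\subseteq EW(R)$. To conclude $R'\in\msR_b$, it suffices to show $b\subseteq EW(R')$, because $R'$ is already an element of $\mP[n]$ by hypothesis.

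The key step is to invoke EWN for the pair $R\subseteq R'$. Here $R$ and $R'$ are collections of boundary subregions, and I read them as the unions $\cup_{A_i\in R}A_i\subseteq\cup_{A_j\in R'}A_j$. Inclusion of index sets gives inclusion of these unions, so EWN applies and yields $EW(R)\subseteq EW(R')$. Transitivity of set inclusion then gives $b\subseteq EW(R)\subseteq EW(R')$, hence $R'\in\msR_b$.

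The only point needing care, which I expect to be the mild obstacle, is that EWN is being used inside a fixed holographic phase. The entanglement wedges of both $R$ and $R'$ must be those of the minimal RT configuration in the given phase of $\mP[n]$, so that the coarse-grained statement "$b\subseteq EW(R)$" on the RT-region graph matches the geometric one. Since EWN is a statement about these minimal (homologous) surfaces for nested boundary regions in the same state, this holds.

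At the level of the RT-region graph, the same conclusion follows because $b$ is a vertex, and containment in a wedge is containment of the corresponding bulk subregion. No further combinatorics is needed.
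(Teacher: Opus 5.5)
Your proposal is correct and is essentially the paper's own argument: unpack $R\in\msR_b$ as $b\subseteq EW(R)$, apply entanglement wedge nesting to $R\subseteq R'$, and conclude $b\subseteq EW(R)\subseteq EW(R')$, hence $R'\in\msR_b$. Your added remarks (reading $R$ as the union of its boundary subregions, and working within a fixed holographic phase) are harmless clarifications that the paper leaves implicit.
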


We can identify the set of minimal elements that together generate $\msR_b$, i.e.,
\begin{equation}
    \msR_b^{\min} := \{R\in \msR_b \; \big|\; \nexists R'\in \msR_b, R' \subset R \}.
\end{equation}
Hence, the definition of $\msR_b$ in \eqref{def:reconstructible_wedges_1} can be restated as
\begin{equation}\label{def:reconstructible_wedges_2}
    \mathscr{R}_b = \Big\{R'\in \mP[n]\; \big|\; \exists R \in  \msR_b^{\min}, R \subseteq R' \Big\}.
\end{equation}

In addition to the EWN, we assume the complementarity of entanglement wedges for the entire boundary. That is, the entanglement wedge of the complement $R^c$ of the boundary subregion $R$ including the purifier $O$ is the complement $EW(R)^c$ of $EW(R)$, i.e.,
\begin{equation}
    EW(R^c) = EW(R)^c.
\end{equation}

We now establish the following lemma.
\begin{lemma}[Pair-wise intersections]\label{lem:pair-wise_intersection}
    Any pair of elements $R,R' \in \mathscr{R}_b^{\min}$ has non-trivial intersection, i.e.,
    \begin{equation}\label{eq:pair-wise_intersection}
        R\cap R' \neq \emptyset,\; \forall R,R'\in \mathscr{R}_b^{\min}
    \end{equation}
\end{lemma}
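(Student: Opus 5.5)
The plan is to argue by contradiction, combining entanglement wedge nesting (EWN) with the complementarity assumption $EW(R^c)=EW(R)^c$ stated just before the lemma. Suppose there exist $R,R'\in\msR_b^{\min}$ with $R\cap R'=\emptyset$. Then $R'\subseteq R^c$, where $R^c$ denotes the complement of $R$ in the full boundary, including the purifier $O$. Here I treat $R'$ as a boundary region, namely the union of its subregions $A_i$, so that set-theoretic containment among boundary regions makes sense. Applying EWN to $R'\subseteq R^c$, which is Lemma \ref{lem:upward_monotonicity} extended to regions that may include $O$, gives
\begin{equation}
 b\subseteq EW(R')\subseteq EW(R^c)=EW(R)^c .
\end{equation}
Since $R\in\msR_b$, we also have $b\subseteq EW(R)$. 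Hence $b\subseteq EW(R)\cap EW(R)^c=\emptyset$. This contradicts the assumption that $b$ is a nonempty bulk subregion, or a vertex of the RT-region graph, so the claim follows.

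Minimality of $R,R'$ is not actually used. The argument shows that any two elements of $\msR_b$ intersect, which is the holographic no-cloning statement, and the lemma is its restriction to $\msR_b^{\min}\subseteq\msR_b$. I will note this remark at the end, because it is what later identifies $\msR_b$ as a valid quantum secret sharing access structure.

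The main point needing care is the meaning of $EW(R)^c$ at the level of the RT-region graph. $EW(R)$ and $EW(R^c)$ share the RT surface as a common boundary, so the complementarity relation is exact only up to this measure-zero interface. I would handle this by taking $b$ to be an open bulk region, such as an RT-region vertex, which lies strictly on one side of every RT surface. Then $b\subseteq EW(R)$ and $b\subseteq EW(R)^c$ cannot both hold, and the contradiction is robust. I would also make sure to allow $O$ in $R^c$ when invoking EWN, since $R^c$ need not belong to $\mP[n]$; EWN holds for arbitrary boundary regions, so this causes no problem.
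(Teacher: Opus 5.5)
Your proposal is correct and follows essentially the same argument as the paper. Disjointness gives $R'\subseteq R^c$, EWN gives $EW(R')\subseteq EW(R^c)$, and complementarity makes $EW(R^c)$ meet $EW(R)$ only on the RT surface, which contradicts $b\subseteq EW(R)\cap EW(R')$. Your two remarks go slightly beyond what the paper states explicitly: minimality is never used, and taking $b$ open disposes of the shared RT-surface interface.
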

\begin{proof}
    Suppose two disjoint elements $R,R'\in \msR_b^{\min}$, i.e., $R\cap R'=\emptyset$, can reconstruct a bulk subregion $b$ each individually. It then should be that $b\subseteq EW(R)$ and  $b\subseteq EW(R')$, and hence, $b\subseteq EW(R)\cap EW(R')$. However, it contradicts that $EW(R)\cap EW(R') = \emptyset$ by the EWN and the complementarity. 
    
    Since $R\cap R'=\emptyset$, the geometric complement $R^c$\footnote{In this paper, we denote by $R^c$ and $EW(R)^c$ the geometric complement of spatial regions $R$ and $EW(R)$ with respect to the entire boundary or bulk spacetime. We denote by $\overline{R}$ the set-theoretic complement with respect to the set $[n]$.\label{ftnt:complement}} contains $R'$, i.e., $R^c\supset R'$. From the complementarity, we have $EW(R^c)$, which intersects $EW(R)$ only on the RT surface of $R$.  The EWN implies $EW(R^c) \supseteq EW(R')$ from $R^c\supseteq R'$. Hence, if $R\cap R'=\emptyset$, then, $EW(R)\cap EW(R')=\emptyset$.

    By definition, $b$ should be constructed from every element in $\msR_b^{\min}$, i.e., $EW(R)\cap EW(R')\neq\emptyset$ for any pair of $R,R'\in \msR_b^{\min}$. Therefore, \eqref{eq:pair-wise_intersection} holds.

\end{proof}



Here, we define how an erasure of boundary subregions applies to $\msR_b$. For a subset $\mE_E \subseteq [n]$, a \textit{set erasure} $\setminus_E$\footnote{The symbol $\setminus$ represents a set minus. The symbol $\setminus_E$ represents a specific set operation defined in \eqref{eq:set_erasure}. } of $\mE_E $ on $\msR_b$ is defined as
\begin{equation}\label{eq:set_erasure}
    \msR_b \setminus_E \mE_E  : = \msR_b \setminus \{R\; \big|\; R \cap \mE_E\neq \emptyset \} = \{R\in\msR_b\; \big|\; R \cap \mE_E =  \emptyset \} .
\end{equation}
That is, the erasure of $\mE_E$ removes from $\msR_b$ every combination of boundary subregions that contain at least one erased boundary subregion.

We now introduce the redundant encoding below.
\begin{definition}[Redundant encoding]\label{def:redundant_encoding}
    Consider a set of $n$ boundary subregions $[n]:=\{A_1, \cdots, A_n\}$, and a subset $\mE_E  \subseteq [n]$ of boundary subregions to be erased. A bulk subregion $b$ is redundantly encoded to subsets of $[n]$ against the erasure of $\mE_E $ if there is at least a single remaining union of boundary subregions in $[n]$, whose entanglement wedge contains the bulk subregion $b$, or 
    \begin{equation}
        |\mathscr{R}_b\setminus_E \mE_E |\geq 1,
    \end{equation}
    where $|\cdot|$ is the cardinality of a set\footnote{In this paper, $|\cdot|$ represents the cardinality of a set.}.
\end{definition}

\subsection{Distance}

We introduce a distance $d_b$ for a choice of bulk subregion $b$ to characterize the quality of the holographic encoding. The distance $d_b$ is the minimum size of erasures $\mE_E$ for which $b$ cannot be reconstructed\footnote{This definition of the distance is a direct analogue of the distance in the study of quantum error correction codes.}. Thus, there could be $\mE_E$ whose size is bigger than $d_b$, yet $b$ is still reconstructible. 

\begin{definition}[Distance]
    A distance on $\msR_b$ is defined as 
    \begin{equation}\label{eq:worst_case_distance}
    d_b := \underset{\mE_E \in [n]}{\min}|\mE_E | 
\end{equation} 
such that
\begin{equation}
    |\msR_b\setminus_E \mE_E | = 0.
\end{equation}
\end{definition}

\begin{figure}[t]
    \centering
    \includegraphics[width=1.0\linewidth]{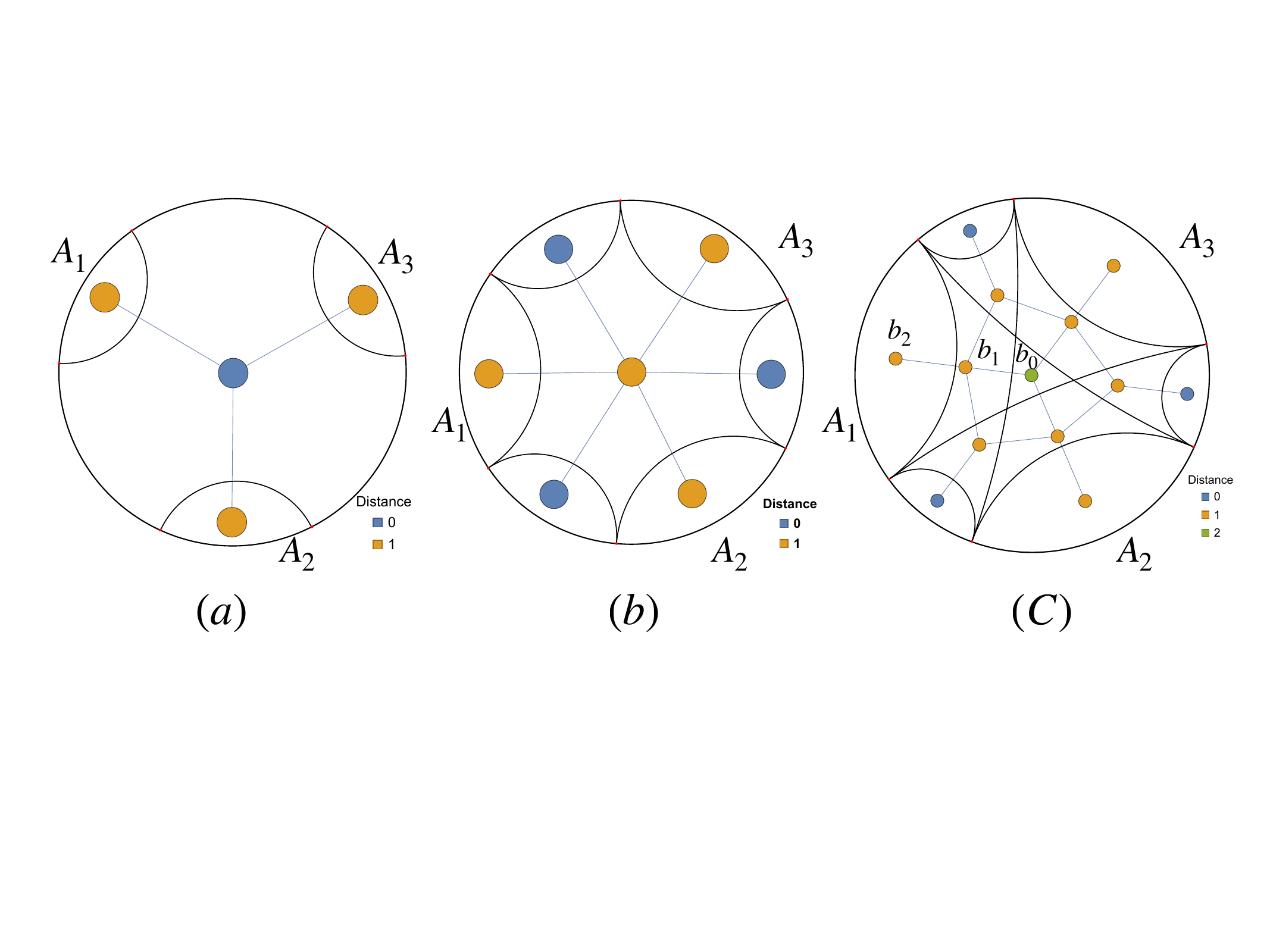}
    \caption{\small{The region graphs whose vertices are colored based on the values of the distance $d_b$: (a) Disconnected phase ($\phi<\phi_{t=0}$), (b) Only tripartite entanglement wedge is connected ($\phi_{t=1} \leq \phi<\phi_{t=2}$), (c) All possible entanglement wedges are connected($\phi\geq\phi_{t=2}$). }}
    \label{fig:graphs_n-3_distance_disconnected_partial_full}
\end{figure}

For example, in figure \ref{fig:graphs_n-3_distance_disconnected_partial_full}-(a), (b), and (c), we construct the region graphs for the three distinct holographic phases. Their vertices have different colors based on the distance.

For a choice of $n$, the maximum distance is determined as follows.
\begin{theorem}[Maximum distance, $d_{max}$]\label{prop:max_distance}
    In pure AdS$_3$, for fixed $n$, suppose there is a holographic phase such that entanglement wedges of all $\lceil \frac{n+1}{2} \rceil$ boundary subregions are connected, the maximum distance is given by 
    \begin{equation}
        d_{max}  := \underset{b}{\max} \; d_b =  \Big\lceil \frac{n}{2} \Big\rceil.
    \end{equation}
\end{theorem}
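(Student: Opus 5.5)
We prove the two inequalities $d_b\le\lceil n/2\rceil$ for every $b$, and $d_b\ge\lceil n/2\rceil$ for some $b$, separately. The upper bound uses only the combinatorics of $\msR_b$ (Lemma \ref{lem:pair-wise_intersection} together with complementarity). The lower bound uses the stated phase assumption to construct an explicit bulk subregion.

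\textbf{Upper bound.} We first reformulate the distance. Since $\msR_b\setminus_E\mE_E=\emptyset$ exactly when every $R\in\msR_b$ meets $\mE_E$, and it suffices to hit the generators, $d_b$ is the minimum size of a hitting set (transversal) of $\msR_b^{\min}$. Now fix $b$ and take any $R\in\msR_b^{\min}$. Consider the erasure $\mE_E=R$. Every $R'\in\msR_b$ must intersect $R$: if $R'\cap R=\emptyset$, then $R'\subseteq\overline{R}$, and the argument in the proof of Lemma \ref{lem:pair-wise_intersection} (EWN plus complementarity) gives $EW(R)\cap EW(R')=\emptyset$, which contradicts $b\subseteq EW(R)\cap EW(R')$. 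Hence $d_b\le|R|$ for every $R\in\msR_b^{\min}$.

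Similarly, take $\mE_E=\overline{R}$. Any $R'\in\msR_b$ avoiding $\overline{R}$ satisfies $R'\subseteq R$, so by minimality $R'=R$. Adding a single element of $R$ to $\mE_E$ therefore kills everything, giving $d_b\le n-|R|+1$. Combining the two bounds, $d_b\le\min(|R|,\,n-|R|+1)\le\lceil n/2\rceil$, since $\min(k,n-k+1)\le\lceil (n+1)/2\rceil$ for integer $k$.

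This is off by one when $n$ is even, so that case needs a sharper argument. For even $n=2k$ we must rule out $d_b=k+1$, which would require every $R\in\msR_b^{\min}$ to have size exactly $k+1$. To do this, pick $R\in\msR_b^{\min}$, choose $A\in R$, and erase $\mE_E=(R\setminus\{A\})\cup\{\text{one element of }\overline{R}\}$, of size $k$. The surviving sets must avoid $\mE_E$, so they lie in $\{A\}\cup(\overline{R}\text{ minus one element})$, which has size $k$. That is too small to contain any element of size $k+1$. Hence $d_b\le k$, and in general the bound $d_b\le\lceil n/2\rceil$ holds. Alternatively, one can use directly that $|\mE_E|=\lceil n/2\rceil$ leaves at most $\lfloor n/2\rfloor$ survivors, and then argue that a minimal set of size $\le\lfloor n/2\rfloor$ is hit by the first bound.

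\textbf{Lower bound and main obstacle.} Set $m=\lceil (n+1)/2\rceil$. By hypothesis, the entanglement wedge of every $m$-subset of $[n]$ is connected. We want a bulk region $b$ lying in all of these connected wedges but in no wedge of a subset of size $<m$; the natural candidate is the region around the bulk center in the symmetric configuration. For such a $b$, $\msR_b$ contains all $m$-subsets. Any erasure of size $\le n-m$ leaves at least $m$ regions, hence an intact $m$-subset, so $d_b\ge n-m+1=\lceil n/2\rceil$ (check: $n$ odd gives $(n+1)/2$, $n$ even gives $n/2$). The main obstacle is the geometric claim that the intersection of all the connected $m$-partite wedges is nonempty. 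I would argue this via complementarity: the complement of an $m$-subset is a set of $n-m<m$ regions, including the purifier $O$. When the $m$-wedge is connected, the complementary wedge consists of disconnected pieces hugging the individual complementary boundary intervals and excludes the center. Therefore the center lies in every such wedge, and we may take $b$ to be the RT-region-graph vertex containing the center. Together with the upper bound, this gives $d_{\max}=\lceil n/2\rceil$.
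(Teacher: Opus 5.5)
Your upper bound is the paper's argument. Erasing a minimal authorized set $R$ empties $\msR_b$ by pairwise intersection, and erasing $\overline{R}$ plus one element of $R$ also empties it, so $d_b\le\min(|R|,n-|R|+1)$.

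There is no off-by-one for even $n$. Since $\min(k,n+1-k)\le (n+1)/2$ and the left side is an integer, it is at most $\lfloor (n+1)/2\rfloor=\lceil n/2\rceil$ for every $k$. Drop the even-$n$ patch, which is also miscounted. With $n=2k$ and $|R|=k+1$, the set $(R\setminus\{A\})\cup\{x\}$ has $k+1$ elements, not $k$, so it proves nothing. The erasure that works is $\overline{R}\cup\{A\}$, of size $k$, which is just your second bound.

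Your lower-bound logic is correct and cleaner than the paper's. If every $m$-subset, $m=\lceil (n+1)/2\rceil$, is authorized, then any erasure of size $\le n-m$ leaves an intact one, so $d_b\ge n-m+1=\lceil n/2\rceil$. The paper's hockey-stick count instead exhibits an erasure of size $\lceil n/2\rceil$ that empties $\msR_b^{\min}$, which only re-proves the upper bound. You also do not need $b$ to avoid the smaller wedges.

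The step that is not established is the geometric one: that some $b$ lies in every connected $m$-partite wedge. The paper merely asserts this, and your justification does not work as written.

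\begin{itemize}
\item The count $n-m<m$ concerns the set-theoretic complement $\overline{R}$. The geometric complement of $m$ disjoint arcs instead has $m$ components (the gaps), and $EW(R)^c$ is the union of the caps cut off by the geodesics over those gaps.
\item Such a cap contains the center exactly when its gap exceeds $\pi$. Connectivity alone does not prevent this: two large intervals separated by a tiny gap have a connected wedge whose other gap exceeds $\pi$, and that wedge misses the center.
\end{itemize}

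So you must use the symmetric setup. A gap spanning $x$ spacings has angular size $2\pi x/n-\phi$, with $x\le n-m+1$.

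\begin{itemize}
\item For even $n$ this is at most $\pi-\phi<\pi$.
\item For odd $n$ it is at most $\pi+\pi/n-\phi$, which is below $\pi$ iff $\phi>\pi/n$. This follows from the hypothesis. If $\phi\le\pi/n$, the consecutive subset $A_1\cdots A_m$ cannot be connected. Each of its $m-1$ short gaps is at least $\phi$, and its long-gap geodesic also subtends an arc $(n-1)\pi/n+\phi\in[\phi,\pi]$. Hence every factor $\sin(\theta/2)$ in the connected configuration is at least $\sin(\phi/2)$, so that configuration is no shorter than the disconnected one.
\end{itemize}

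With all gaps below $\pi$, each connected $EW(R)$ contains a neighborhood of the center. A region-graph vertex meeting that neighborhood then lies in every such wedge, since each vertex is entirely inside or entirely outside each $EW(R)$. This also covers the pure limit, where some RT surfaces pass through the center.
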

\begin{proof}
    Appendix \ref{app:proof_max_distance}
\end{proof}
We can see in figure \ref{fig:graphs_n-3_distance_disconnected_partial_full} that the distance $d_b$ increases as $b$ gets further from the boundary.

\section{Combinatorial holographic quantum secret sharing}\label{sec:CHQSS}


We study how the logical information stored in a bulk subregion can be encoded to, or reconstructed from, boundary subregions. Instead of the holographic QEC\cite{Almheiri:2014lwa,Harlow:2016vwg,Furuya:2020tzv,Pastawski:2015qua,Leutheusser2025quantumtasksQEC,Faulkner:2020hzi}, we study the question within the framework of quantum secret sharing (QSS). We construct a family of \textit{combinatorial holographic quantum secret sharing} (CHQSS) schemes, each defined for a bulk subregion. In particular, consider a set $[n]$ of $n$ boundary subregions and the entanglement wedges of the combinations of boundary subregions in $\mP[n]$ at a certain holographic phase. The set of RT surfaces corresponding to the entanglement wedges partitions a constant-time slice of AdS$_3$. A secret, or the logical information in the bulk subregion $b$, is distributed to $n$ players, or $n$ boundary subregions. The access and forbidden structures are constructed based on the entanglement wedge reconstruction. We introduce the reconstruction threshold $r_b$ and the secret threshold $s_b$ for $b$. We write as $(\!(r_b,s_b,n)\!)$-CHQSS the holographic quantum secret sharing of the bulk subregion $b$. 


At the end of this section, we present some examples using the symmetric setup studied in section \ref{sec:transition_points_MEW}. 




\subsection{Setup and definitions}


The access structure of $b$ consists of authorized sets. Those sets are the combinations of boundary subregions whose entanglement wedges contain $b$. Hence, it is the set $\msR_b$ defined in \eqref{def:reconstructible_wedges_1} and \eqref{def:reconstructible_wedges_2}. Let us remind ourselves of the definition of $\msR_b$ here, which is
\begin{equation}
    \mathscr{R}_{b}: = \Big\{R\in \mP[n]\; \big|\; b\subseteq EW(R) \Big\}.
\end{equation}
As in lemma \ref{lem:upward_monotonicity}, $\msR_b$ satisfies the monotonicity property, and is generated by the set of minimal elements $\msR_b^{\min}$.

On the contrary, the forbidden structure of $b$ has unauthorized sets, which are the combinations of boundary subregions whose entanglement wedges do not contain $b$, and thus cannot reconstruct it. Thus, it is defined as
\begin{equation}
    \msR_b^*: = \Big\{R \in \mP[n]\; \big|\; b\not\subseteq EW(R)\Big\}.
\end{equation}
One can simply write $\msR_b^*$ as the complement of $\msR_b$ with respect to $\mP[n]$, i.e.,
\begin{equation}
    \msR_b^* = \mP[n] \setminus \msR_b.
\end{equation}
Unlike the access structure, the forbidden structure\footnote{The forbidden structure is an abstract simplicial complex. See, for instance, \cite{volic2025topology}, for homological aspects of secret sharing.} satisfies downward monotonicity and contains maximal elements. 
\begin{lemma}[Downward monotonicity]\label{lem:downward_monotonicity}
    For $R\in \msR_b^*$ and any $R'\in\mP[n]$,
    \begin{equation}
        R'\subseteq R 
    \end{equation}
    implies that $R'\in \msR_b^*$.
\end{lemma}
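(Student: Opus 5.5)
The plan is to prove Lemma~\ref{lem:downward_monotonicity} as the contrapositive of Lemma~\ref{lem:upward_monotonicity}, using only that $\msR_b^* = \mP[n]\setminus\msR_b$.

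Fix $R\in\msR_b^*$ and $R'\in\mP[n]$ with $R'\subseteq R$. Suppose, for contradiction, that $R'\notin\msR_b^*$. Since $\msR_b^*$ is the complement of $\msR_b$ in $\mP[n]$, this means $R'\in\msR_b$. Lemma~\ref{lem:upward_monotonicity} applied to $R'\in\msR_b$ and $R\in\mP[n]$ with $R'\subseteq R$ then gives $R\in\msR_b$. This contradicts $R\in\msR_b^*$, so $R'\in\msR_b^*$.

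One could also argue geometrically, directly from entanglement wedge nesting. $R'\subseteq R$ gives $EW(R')\subseteq EW(R)$. If $b\subseteq EW(R')$, then $b\subseteq EW(R)$, contradicting $b\not\subseteq EW(R)$. Hence $b\not\subseteq EW(R')$.

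There is no real obstacle; the only care needed is to use the set-theoretic complement $\msR_b^*=\mP[n]\setminus\msR_b$ consistently, and not the geometric complement.
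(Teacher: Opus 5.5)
Your proof is correct and follows the paper's reasoning. The paper gives no separate proof; it states the lemma right after writing $\msR_b^*=\mP[n]\setminus\msR_b$, so the lemma is the contrapositive of Lemma~\ref{lem:upward_monotonicity} (equivalently, of entanglement wedge nesting), exactly as in both of your arguments.
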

The set of maximal elements is defined by
\begin{equation}
    \msR_b^{*\max} := \{R \; \big|\; \nexists R'\in \msR_b^*, R \subset R' \}.
\end{equation}

    

\begin{definition}[Combinatorial holographic quantum secret sharing (CHQSS)]
    Consider the RT-region graph for a set $[n]$ of $n$ boundary subregions, which may or may not cover the entire boundary, and their entanglement wedges of $\mP[n]$ in pure AdS$_{D+1}$. Combinatorial holographic secret sharing (CHQSS) for a bulk subregion $b$, or a vertex $v_b$, is an encoding scheme with an access structure $\msR_b$, which satisfies the following.
    \begin{enumerate}
        \item The entanglement wedge reconstruction: Logical information in the bulk subregion $b$ can be reconstructed on the boundary subregion $R$ if 
        \begin{equation}
            b\subseteq EW(R).
        \end{equation}
        \item The entanglement wedge nesting: $R_1 \subseteq R_2$ implies $EW(R_1) \subseteq EW(R_2)$ for boundary subregions $R_1,R_2$.
        \item The geometric complementarity of entanglement wedges: The entanglement wedge of the complement $R^c$ of the boundary subregion $R$ including the purifier $O$ is the complement $EW(R)^c$ of $EW(R)$\footnote{As mentioned in footnote \ref{ftnt:complement}, we denote by $R^c$ and $EW(R)^c$ the geometric complement of spatial regions $R$ and $EW(R)$ with respect to the entire boundary or bulk 
        space. We denote by $\overline{R}$ the set-theoretic complement with respect to the set $[n]$.}, i.e.,
        \begin{equation}
        EW(R^c) = EW(R)^c.
        \end{equation}

    \end{enumerate}
\end{definition}
Note that we obtain a family of CHQSS from an RT-region graph if the graph has several vertices. 

\begin{remark}
    By lemma \ref{lem:upward_monotonicity}, the access structure $\msR_b$ is monotone, and by lemma \ref{lem:pair-wise_intersection}, no two disjoint elements of $\mP[n]$ are both authorized. By the result of \cite{Gottesman:1999jzr}, these two properties guarantee that a quantum secret sharing scheme with the access structure $\msR_b$ exists. 
\end{remark}
\begin{remark}
    We call a CHQSS scheme \textit{pure-state} if the union of the boundary subregion in $[n]$ covers the entire boundary, i.e., $O=\emptyset$, and \textit{mixed-state} otherwise. 
\end{remark}
\begin{remark}\label{rem:types_QSS}
    A general access structure for QSS can be characterized by the parameters $(\!(r,s,n)\!)$ and is classified in three ways. They are (i) threshold or non-threshold and (ii) perfect or ramp. The first classification is a \textit{threshold} or \textit{non-threshold} QSS. On the one hand, the threshold QSS is fully characterized by the parameters $r$, $s$, and $n$, and does not depend on the details of the authorized sets in the access structure. On the other hand, the non-threshold QSS depends on the details of the authorized sets. For the second one, a QSS is a ramp if there is an intermediate set in addition to the authorized and unauthorized sets. The intermediate sets contain partial information about the encoded information. In our holographic setup, we do not find a ramp QSS because the intermediate sets are empty due to the holographic phase transition. Consider, for example, the tripartite connected phase in figure \ref{fig:geometry_n-3}-(b). Two boundary subregions do not contain any bulk logical information in $b$. However, once all three boundary subregions are accessible, the bulk logical information can be reconstructed. This sharp transition hinders a boundary subregion from possessing partial bulk logical information.
\end{remark}

We now introduce two parameters, \textit{reconstruction threshold} $r_b$ and \textit{secret threshold} $s_b$ defined as follows.
\begin{definition}[Reconstruction threshold]
    Any combination of $r_b$ boundary subregions, or more, can reconstruct $b$, i.e.,
    \begin{equation}\label{eq:def_reconstruction_threshold}
        r_b = \underset{R\in X_b^R}{min} |R|
    \end{equation}
    where the minimization is over a set\footnote{Note that $X_b^R$ is in general a subset of $\msR_b$, i.e., $X_b^R \subseteq \msR_b$. The set $X_b^R$ can be understood as a set of boundary subregions that can uniformly reconstruct the bulk logical information in $b$. It is \textit{uniform} because any combinations of at least $|R|$ boundary subregions can reconstruct the bulk logical information. \label{fnt:uniform}},
    \begin{equation}
        X_b^R := \{R\in \mP[n]\;| \; b\subseteq EW(R'),\;\forall R'\in \mP[n] \text{ with } |R'|\geq |R|\}.
    \end{equation}
    $|R|$ is the number of boundary subregions composing $R$.
\end{definition}
The reconstruction threshold has a relation to the distance $d_b$. 
\begin{theorem}\label{thm:reconstruction_distance}
    
    For a bulk subregion $b$ with $b\subset EW([n])$,
    \begin{equation}\label{eq:reconstruction_distance}
    r_b =  n-d_b+1. 
\end{equation}    
\end{theorem}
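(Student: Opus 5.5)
The proof will be a double inequality, $r_b \le n-d_b+1$ and $r_b \ge n-d_b+1$, obtained by translating between erasures and their set-theoretic complements $\overline{\mE_E}=[n]\setminus\mE_E$. The key observation is that, by upward monotonicity (lemma \ref{lem:upward_monotonicity}), $|\msR_b\setminus_E\mE_E|\ge 1$ holds if and only if $\overline{\mE_E}\in\msR_b$. Indeed, any $R\in\msR_b$ with $R\cap\mE_E=\emptyset$ satisfies $R\subseteq\overline{\mE_E}$, so $\overline{\mE_E}\in\msR_b$. Conversely, if $\overline{\mE_E}\in\msR_b$, then $\overline{\mE_E}$ itself survives the erasure. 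Hence
\begin{equation}
d_b=\min\{|\mE_E| \,:\, \overline{\mE_E}\notin\msR_b\} = n-\max\{|R| \,:\, R\in\msR_b^*\}.
\end{equation}
The hypothesis $b\subset EW([n])$ guarantees $[n]\in\msR_b$, so $\msR_b$ is nonempty and $d_b\ge 1$. The forbidden structure is also nonempty since $\emptyset\in\msR_b^*$, so $d_b\le n$ and both quantities are well defined.

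Next I will characterize $r_b$. Unwinding the definition of $X_b^R$, the condition $R\in X_b^R$ depends only on $k=|R|$: it says every subset of $[n]$ of size at least $k$ lies in $\msR_b$. The set $X_b^R$ is nonempty because $k=n$ works, by the hypothesis. So $r_b$ is the smallest $k$ such that all subsets of size $\ge k$ are authorized. Writing $s^*:=\max\{|R|\,:\,R\in\msR_b^*\}$, every subset of size $\ge s^*+1$ is authorized by maximality, so $r_b\le s^*+1$. On the other hand, some subset of size $s^*$ is unauthorized, so $k=s^*$ fails the condition; any $k\le s^*$ fails as well, since a forbidden set of size $s^*\ge k$ exists. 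Thus $r_b=s^*+1$.

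Combining the two steps gives $r_b = s^*+1 = (n-d_b)+1$, which is the claim.

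The step I expect to need the most care is the first equivalence, specifically the direction showing that a surviving authorized set forces $\overline{\mE_E}$ to be authorized. This is exactly where upward monotonicity (entanglement wedge nesting) enters, and I will state it explicitly. I also need to handle the edge case in which the minimum in $r_b$ could range over $|R|=0$, which would require $\emptyset\in\msR_b$. This cannot happen for a bulk region $b$ away from the boundary, since $EW(\emptyset)=\emptyset$, so $\emptyset$ is always forbidden and $s^*\ge 0$. This keeps the formula consistent at the extremes.
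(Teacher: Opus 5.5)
Your proof is correct and uses the same complement duality as the paper: the paper writes $R=[n]\setminus\mE_E$ to get $r_b=n-\max|\mE_E|$ over uniformly correctable erasures, then identifies that maximum with $d_b-1$. Your equivalence $|\msR_b\setminus_E\mE_E|\ge 1 \Leftrightarrow \overline{\mE_E}\in\msR_b$, proved via upward monotonicity, is exactly what justifies the paper's terse final step ``Hence, it equals $d_b-1$,'' and your argument needs only monotonicity, not the complementarity-based implication $b\subseteq EW(R)\Rightarrow b\not\subseteq EW(\mE_E)$ that the paper builds into $X_b^E$.
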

\begin{proof}
    For any $R\in\mP[n]$, there is $\mE_E$ such that 
    \begin{equation}
        R = [n]\setminus \mE_E,
    \end{equation}
    and we have
    \begin{equation}\label{eq:b-in-R_b-out-E}
        b\subseteq EW(R) \Rightarrow b\not\subseteq EW(\mE_E).
    \end{equation}
    Then, from \eqref{eq:def_reconstruction_threshold},
    for a set of correctable erasures of boundary subregions\footnote{The erasure errors in the set $X_b^E$ are uniformly correctable because the bulk logical information in $b$ is uniformly reconstructable in the sense of footnote \footref{fnt:uniform}. },
    \begin{equation}\label{eq:X_b_def}
        X_b^E := \{\mE_E\in \mP[n] \;|\; b\not\subseteq EW(\mE'_E),\; b\subseteq EW(R'=[n]\setminus \mE'_E),\;\forall \mE'_E\in \mP[n] \text{ with } |\mE'_E|\leq |\mE_E|\},  
    \end{equation}
    we get 
    \begin{equation}
        r_b = \underset{\mE_E \in X_b^E}{\min} |[n]\setminus\mE_E| =n - \underset{\mE_E \in X_b^E}{\max} |\mE_E|.
    \end{equation}
    The second term is the maximum size of $\mE_E$ such that $b$ can still be reconstructed from $R=[n]\setminus \mE_E$. Hence, it equals $d_b-1$ and proves \eqref{eq:reconstruction_distance}.
    
\end{proof}

From theorem \ref{prop:max_distance}, we get the minimum of $r_b$, i.e.,
\begin{equation}\label{eq:r_min}
    r_{\min} =n- d_{\max} +1 =\Big\lfloor \frac{n}{2} \Big\rfloor +1.
\end{equation}

Note that $r_{\min}$ does not violate the bound on the reconstruction threshold given by the no-cloning theorem, i.e., it satisfies $2r_{\min}> n$ \cite{Gottesman:1999jzr,Cleve:1999qg}. That is, we have
\begin{equation}
    r_{\min} =  \Big\lfloor \frac{n}{2} \Big\rfloor +1 > \frac{n}{2}.
\end{equation}

The original derivation of the bound was as follows. If cloning an unknown quantum state is possible, then two disjoint elements of the access structure can individually reconstruct it. It should then be $ 2r \leq n$. However, the no-cloning theorem prohibits the copying of a quantum state. Thus, it should be $2r>n$.

In the holographic set, the bound can be proved as follows\footnote{Holographic versions of the no-cloning theorem has been discussed in \cite{Bousso:2025fgg,Bousso:2023sya,Bousso:2022hlz}.}. Suppose cloning is possible and $2r_b\leq n$. Then, there are two disjoint combinations $R,R'\in \msR_b^{\min}$ of size $|R|=|R'|=r_b$. Both $R$ and $R'$ should be able to reconstruct $b$ by the definition of $b$. However, this contradicts $EW(R)\cap EW(R') =\emptyset$ as discussed in the proof of lemma \ref{lem:pair-wise_intersection}. Thus, we have the bound $2r_b > n$.



\begin{definition}[Secret threshold]
    Any combination of $s_b$ boundary subregions, or less, cannot reconstruct $b$. That is,
    \begin{equation}
        s_b = \underset{R\in X_b^S}{max} |R|
    \end{equation}
    for a set\footnote{The set $X_b^S$ represents the uniform secrecy of the bulk logical information in $b$. That is, the bulk logical information in $b$ is kept secret from any combination of at most $|R|$ boundary subregions. We should also note that $X_b^E$ and $X_b^S$ are not the same in general. The set $X_b^E$ assumes $b\not\subseteq EW(\mE'_E)$ and $ b\subseteq EW(R'=[n]\setminus \mE'_E)$ whereas the set $X_b^S$ assumes $b\not\subseteq EW(\mE'_E)$ only.},
    \begin{equation}
        X_b^S:= \{R\in \mP[n]\; | \;b\not\subseteq EW(R'),\;\forall R'\in\mP[n]\text{ with } |R'|\leq |R|\}.
    \end{equation}
\end{definition}

The maximum of $s_b$ for $b\subseteq EW([n])$ is 
\begin{equation}
    s_{max} = n-1.
\end{equation}
This can be checked as follows. Suppose $b\subseteq EW(R)$ for some $R$ with $|R|=m$, and $b\not\subseteq EW(R')$ for every $R'$ with $|R'|\leq m-1$. Then, $s_b=m-1$ by its definition. As we observed in section \ref{sec:transition_points_MEW}, there is a phase where an $n$-partite entanglement wedge is connected, but $n'$-partite entanglement wedges for any $n'<n$ are not connected. Thus, $s_{max}=n-1$.

We need extra care with the relations between the secret threshold and the distance, as opposed to those for the reconstruction threshold and the distance, as in \eqref{eq:reconstruction_distance}, because of \textit{superadditivity} of entanglement wedges\footnote{In \cite{Faulkner:2020hzi,Leutheusser2025quantumtasksQEC}, it was pointed out and studied that the superadditivity of local bulk algebras conflicts with the framework of exact quantum error correction in holography. For future work, it would be interesting to construct an operator algebraic holographic QSS and study the role of the superadditivity of local bulk algebras.}. Entanglement wedges are said to be superadditive if 
\begin{equation}
    EW(R_1) \cup EW(R_2) \subset EW(R_{1}\cup R_{2})
\end{equation}
for boundary subregions $R_1,R_2$. If the entanglement wedges of the boundary subregions $R_1,R_2$ are superadditive, there is a bulk subregion such that
\begin{equation}
    b \not\subseteq EW(R_1) \cup EW(R_2)\text{ and } b \subseteq EW(R_{1}\cup R_{2}).
\end{equation}
For such a bulk subregion, it follows that
\begin{equation}\label{eq:notin_complement}
    b\not\subseteq EW(R_i) \text{ and } b\not\subseteq EW(\overline{R_i}=[n]\setminus R_i)
\end{equation}
for $i=1,2$. 

\begin{figure}
    \centering
    \includegraphics[width=0.9\linewidth]{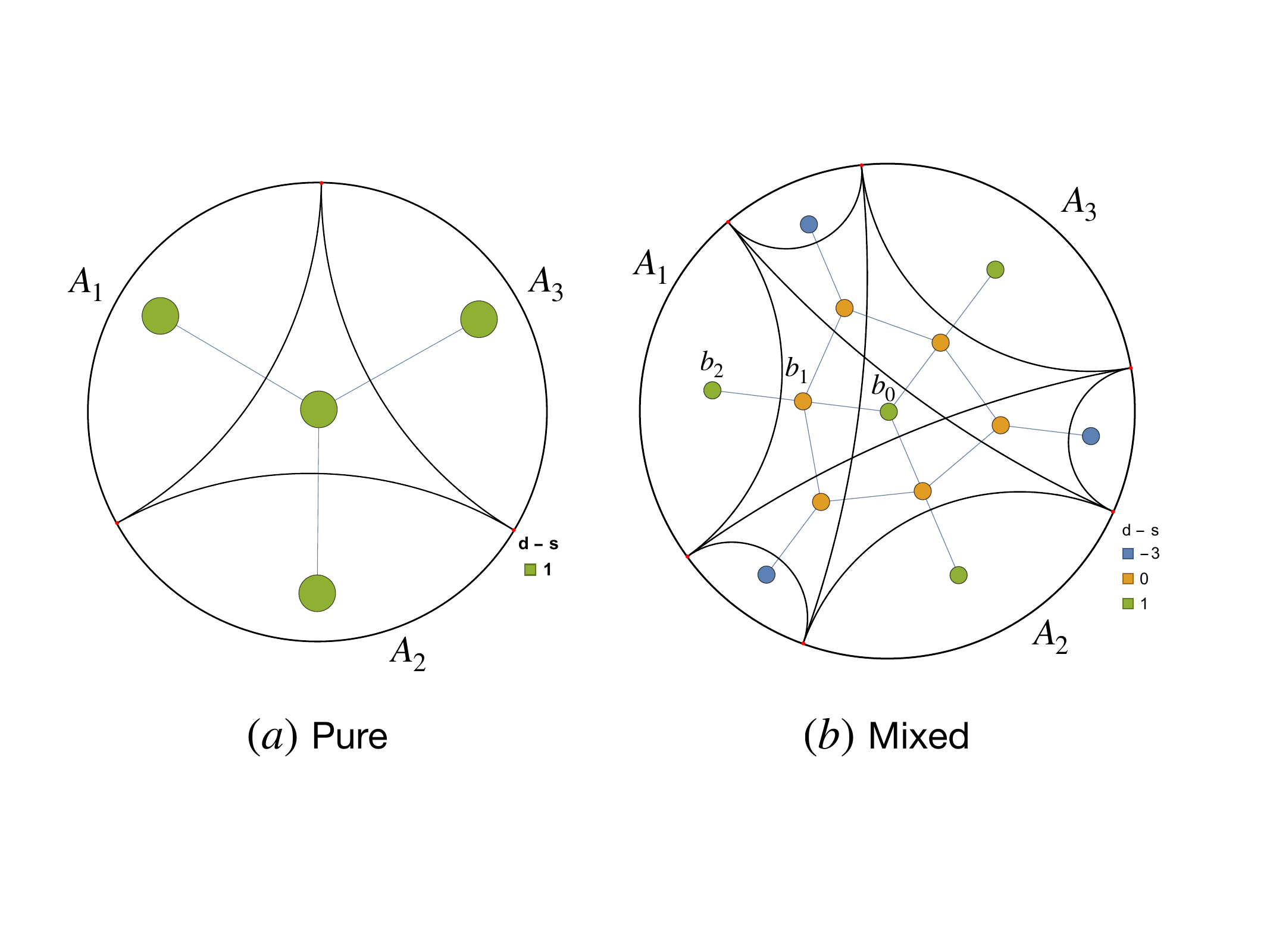}
    \caption{\small{The region graphs whose vertices are colored based on the value of the difference $d-s$: (a) A family of pure-state CHQSS schemes, (b) A family of mixed-state CHQSS schemes.} The bulk subregion represented as the vertex $b_1$ satisfies \eqref{eq:notin_complement}. See \eqref{eq:notin_complement_b1} for details.}
    \label{fig:graphs_n-3_d_s_difference_pure_mixed}
\end{figure}

We are especially interested in the following case, i.e.,
\begin{equation}\label{eq:notin_complement_pure}
    EW(R)\cup EW(\overline{R}) \subset EW([n]=R\cup \overline{R}).
\end{equation}
This can occur only in the mixed-state CHQSS, where the geometric complement $O$ of the $n$ boundary subregions in the entire boundary is not trivial, because, for $O=\emptyset$, the entanglement wedges  $EW(R)$ and $EW(\overline{R})$ are complementary and together cover the constant-time slice, see figure \ref{fig:graphs_n-3_d_s_difference_pure_mixed}. 

In the proof of \eqref{eq:reconstruction_distance}, we used the fact that $b\subseteq EW(R)$ implies that $b\not\subseteq EW(\overline{R})$, which holds for any bulk subregion. However, as in \eqref{eq:notin_complement}, $b\not\subseteq EW(R)$ is necessary but not sufficient for $b\subseteq EW(\overline{R})$. This distinction plays a key role in the following theorem.


\begin{theorem}\label{thm:secret_distance}

    Consider $b\subseteq EW([n])$, and write $\overline{R}:=[n]\setminus R$ for $R\in \mP[n]$. One of the following two statements holds.
    
    \begin{enumerate}
        \item Additive: $b\subseteq EW(\overline{R})$ for every $R\in \mP[n]$ with $|R|=s_b$ if and only if
        \begin{equation}\label{eq:secret_distance}
            s_b = d_b-1.
        \end{equation}
    
        \item Superadditive: $b\not\subseteq EW(R)$ and $b\not\subseteq EW(\overline{R})$ for some $R\in \mP[n]$ with $|R|=s_b$ if and only if
        \begin{equation}\label{eq:secret_distance_superadditive}
            s_b \geq d_b.
        \end{equation}
    \end{enumerate}

\end{theorem}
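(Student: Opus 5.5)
The plan is to rewrite both $s_b$ and $d_b$ as extremal sizes of authorized and unauthorized sets, and then compare them by complementation inside $[n]$.

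\textbf{Step 1: reformulate $s_b$ and $d_b$.} By lemma \ref{lem:upward_monotonicity} and lemma \ref{lem:downward_monotonicity}, the ``uniform'' definition of $s_b$ reduces to
\[
s_b+1=\min\{|R|\,:\,R\in\msR_b\}.
\]
Since $b\subseteq EW([n])$, this minimum exists and $s_b\le n-1$. For the distance, note that $R\cap\mE_E=\emptyset$ if and only if $R\subseteq[n]\setminus\mE_E$. Combined with upward monotonicity, $|\msR_b\setminus_E\mE_E|=0$ holds if and only if $[n]\setminus\mE_E\in\msR_b^*$. Hence
\[
d_b=n-U_{\max},\qquad U_{\max}:=\max\{|U|\,:\,U\in\msR_b^*\}.
\]

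\textbf{Step 2: a general bound.} Take a minimal authorized set $A$ with $|A|=s_b+1$. Its complement $\overline{A}$ is disjoint from $A$. By the argument of lemma \ref{lem:pair-wise_intersection} (EWN plus complementarity), $\overline{A}$ cannot be authorized. So $U_{\max}\ge n-s_b-1$, which gives
\[
d_b\le s_b+1 \quad\text{always}.
\]

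\textbf{Step 3: the additive case.} Assume $\overline{R}\in\msR_b$ for every $R$ with $|R|=s_b$. Suppose some unauthorized $U$ had $|U|\ge n-s_b$. Then $|\overline{U}|\le s_b$, so we can choose $R\supseteq\overline{U}$ with $|R|=s_b$. This gives $\overline{R}\subseteq U$, and upward monotonicity would make $U$ authorized, a contradiction. Hence $U_{\max}=n-s_b-1$ and $d_b=s_b+1$.

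Conversely, if $d_b=s_b+1$, then $U_{\max}=n-s_b-1$. So every set of size $n-s_b$ is authorized, which is exactly the statement $\overline{R}\in\msR_b$ for every $|R|=s_b$.

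\textbf{Step 4: the superadditive case.} Suppose some $R$ with $|R|=s_b$ has $\overline{R}$ unauthorized (and $R$ is automatically unauthorized, since $|R|=s_b$). Then $U_{\max}\ge n-s_b$, so $d_b\le s_b$.

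Conversely, if $d_b\le s_b$, there is an unauthorized $U$ with $|\overline{U}|\le s_b$. Pick $R\supseteq\overline{U}$ with $|R|=s_b$. Then $\overline{R}\subseteq U$ is unauthorized by downward monotonicity, and $R$ is unauthorized by the definition of $s_b$.

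\textbf{Conclusion and main obstacle.} The hypotheses of the two items are logical negations of each other. Together with Step 2 ($d_b\le s_b+1$), this shows that exactly one of the two statements holds.

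The main care point is Step 1. One must justify that the uniform sets $X_b^S$ and the erasure definition of $d_b$ genuinely reduce, via monotonicity, to the min/max sizes used above. One must also handle the padding choice $R\supseteq\overline{U}$ of size exactly $s_b$, which is possible because $|\overline{U}|\le s_b\le n$.
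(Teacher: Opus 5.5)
Your proof is correct and is essentially the paper's argument: the paper passes through $r_b=n-d_b+1$ (Theorem \ref{thm:reconstruction_distance}), and you use the equivalent quantity $U_{\max}+1$ in its place. Your Step 2 makes explicit the disjointness (no-cloning) bound $d_b\le s_b+1$, which the paper uses only tacitly when it asserts $n-s_b=r_b$ in the additive direction. Your padding argument also pins down the existence of an $R$ with $|R|=s_b$ and $\overline{R}$ unauthorized in the superadditive converse, a point the paper glosses over.
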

\begin{proof}
    Appendix \ref{app:proof_secret_distance}.
\end{proof}

\begin{corollary}\label{cor:pure-mixed}
    The pure-state CHQSS schemes are additive. The mixed-state CHQSS schemes are either additive or superadditive. 
\end{corollary}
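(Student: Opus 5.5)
The corollary has two parts, and I plan to derive both from Theorem \ref{thm:secret_distance}. The theorem splits every $b\subseteq EW([n])$ into two alternatives. In the additive case, $b\subseteq EW(\overline{R})$ holds for every $R$ with $|R|=s_b$. In the superadditive case, $b\not\subseteq EW(R)$ and $b\not\subseteq EW(\overline{R})$ hold for some such $R$. The whole task is therefore to show that the superadditive alternative cannot occur when $O=\emptyset$.

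For the pure-state part, set $O=\emptyset$, so that $\bigcup_i A_i$ is the entire boundary. Then for any $R\in\mP[n]$ the geometric complement satisfies $R^c=\overline{R}$, because no purifier remains. Complementarity (item 3 of the CHQSS definition) gives $EW(\overline{R})=EW(R)^c$. Hence $EW(R)\cup EW(\overline{R})$ is the whole constant-time slice, with the two wedges meeting only along the RT surface of $R$.

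The key step is to treat $b$ as a vertex of the RT-region graph, i.e.\ a cell bounded by portions of RT surfaces. Such a cell lies entirely on one side of the RT surface of $R$, since that surface belongs to the partition defining the cells. So either $b\subseteq EW(R)$ or $b\subseteq EW(\overline{R})$. Take any $R$ with $|R|=s_b$. By the definition of $s_b$ we have $R\in X_b^S$, so $b\not\subseteq EW(R)$, and therefore $b\subseteq EW(\overline{R})$. This is exactly the hypothesis of the additive case, so Theorem \ref{thm:secret_distance} yields $s_b=d_b-1$ and the scheme is additive. I expect this step to be the main point requiring care: I need that $b$ cannot straddle an RT surface, and this should follow from the coarse-graining in Definition \ref{def:RTtoPC}. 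I also need that $R=\emptyset$ or $R=[n]$ cause no trouble. For $R=\emptyset$ we have $\overline{R}=[n]$ and $b\subseteq EW([n])$ by assumption. The case $R=[n]$ cannot occur at $|R|=s_b$, because $b\subseteq EW([n])$ forces $s_b\le n-1$.

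For the mixed-state part, the dichotomy in Theorem \ref{thm:secret_distance} already gives "additive or superadditive". To show that both options genuinely occur, I would point to examples. Any bulk vertex in a region where the relevant complementary wedges cover $b$ is additive. The vertex $b_1$ in figure \ref{fig:graphs_n-3_d_s_difference_pure_mixed}-(b), which satisfies \eqref{eq:notin_complement}, is superadditive. In that example $EW(R)\cup EW(\overline{R})\subset EW([n])$ holds strictly, which is possible only because $O\neq\emptyset$.
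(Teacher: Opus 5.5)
Your proposal is correct and takes the same route as the paper, which justifies the corollary through the remark just before Theorem~\ref{thm:secret_distance}: when $O=\emptyset$, complementarity makes $EW(R)$ and $EW(\overline{R})$ cover the whole slice, so the superadditive alternative cannot occur, and when $O\neq\emptyset$ the theorem's dichotomy is exhaustive. Your explicit point that $b$, as a cell of the RT-region partition, cannot straddle the RT surface of $R$, together with your handling of the edge cases $R=\emptyset$ and $R=[n]$, fills in steps the paper leaves implicit.
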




If the CHQSS scheme is additive,
\begin{equation}
    r_b+s_b = n,
\end{equation}
whereas if it is superadditive,
\begin{equation}
    r_b+s_b \geq n+1.
\end{equation}

In a general QSS, the second equation implies a ramp CHQSS. However, in the CHQSS, we cannot have a ramp CHQSS because the intermediate sets are empty in the holographic setup as explained in remark \ref{rem:types_QSS}.


In the next subsection, we explore a family of pure- and mixed-state CHQSS in the symmetric cases. We will observe that CHQSS schemes can be either a perfect threshold QSS or a perfect non-threshold QSS. 






\subsection{Symmetric cases}

\begin{figure}[t]
    \centering
    \includegraphics[width=0.9\linewidth]{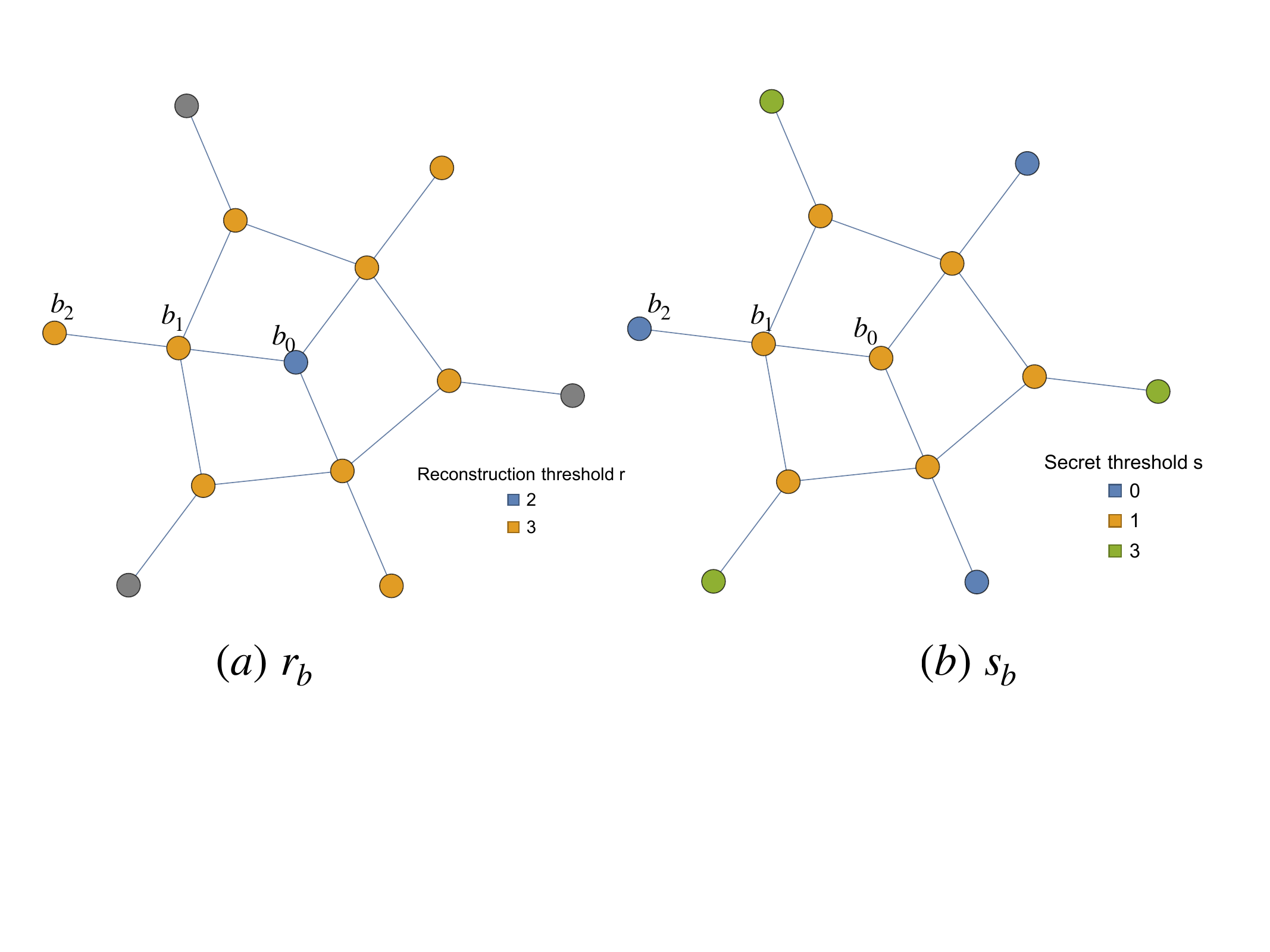}
    \caption{\small{The region graphs for $n=3$ boundary subregions in the holographic phase $\phi \geq \phi_{t=2}$ (a) The vertices are colored based on the value of the reconstruction threshold $r_b$. (b) The vertices are colored based on the value of the secret threshold $s_b$. }}
    \label{fig:graphs_n-3_r_s_threshold_full}
\end{figure}

\begin{figure}[t]
    \centering
    \includegraphics[width=0.9\linewidth]{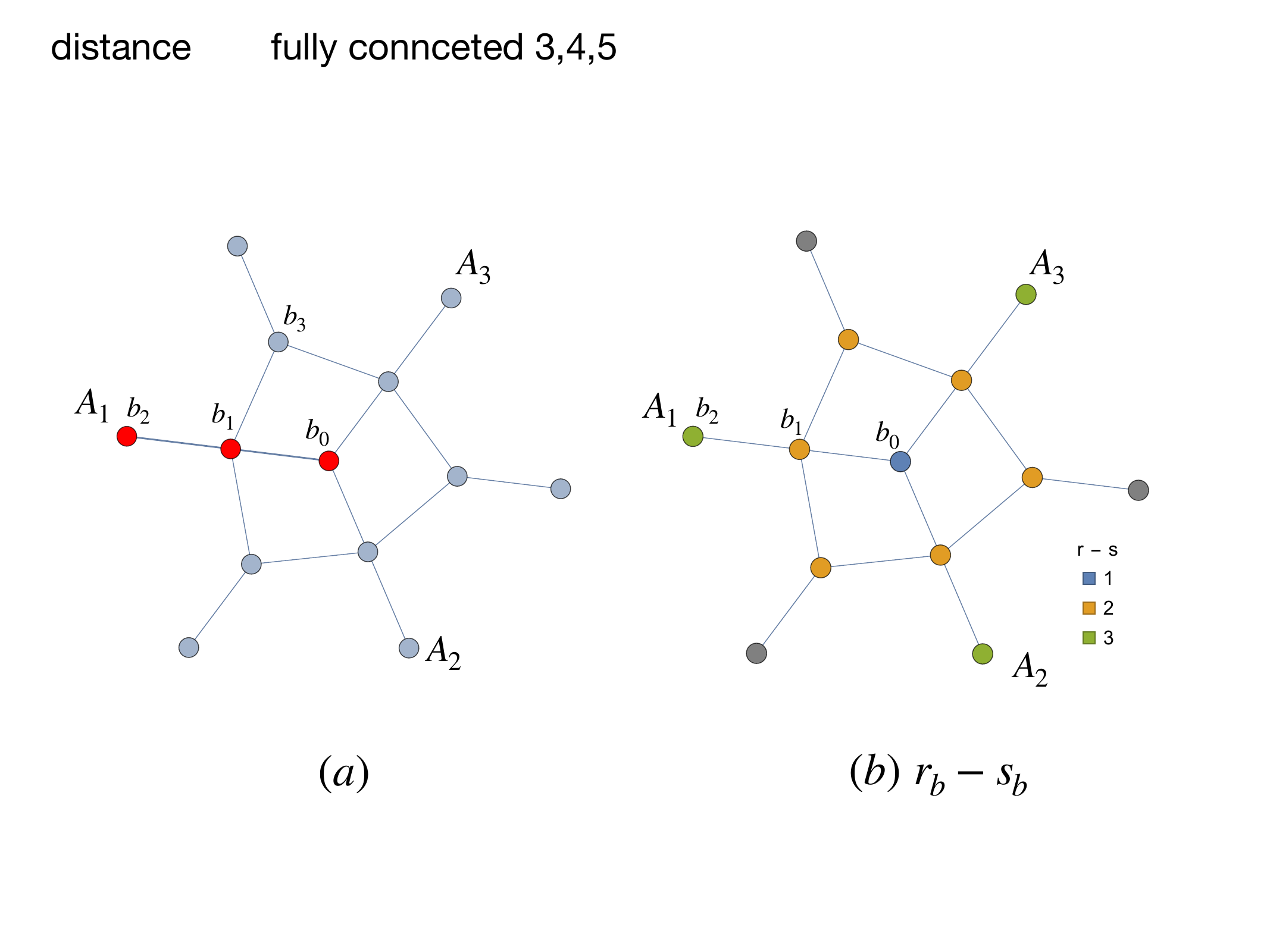}
    \caption{\small{The region graphs for $n=3$ boundary subregions in the holographic phase $\phi \geq \phi_{t=2}$ (a) The vertices $b_0$, $b_1$, and $b_2$ in the path from the center of the region graph to the boundary vertex $A_2$ are colored red. (b) The vertices are colored based on the value of the difference $r_b-s_b$. }}
    \label{fig:graphs_n-3_graph-center_ramp_full}
\end{figure}

In this subsection, we work out the example of $n=3$ boundary subregions in the holographic phase where the entanglement wedges of all elements of $\mP[n]$ are connected. We evaluate the properties of the family of mixed-state CHQSS for each bulk subregion $b$ by evaluating the distance $d_b$, the reconstruction threshold $r_b$, the secret threshold $s_b$, and some of their differences, such as $d_b-s_b$ and $r_b-s_b$.

In figure \ref{fig:graphs_n-3_distance_disconnected_partial_full}, \ref{fig:graphs_n-3_d_s_difference_pure_mixed}, \ref{fig:graphs_n-3_r_s_threshold_full}, and \ref{fig:graphs_n-3_graph-center_ramp_full}, we colored the vertices of the region graph based on the values of the parameters and the differences.

To study the properties of CHQSS at each vertex, we begin by defining the center of the region graph.
\begin{definition}[Center of region graph]
    For a region graph $G=(V,E)$ with a set of boundary vertices $v_{A_i}$ for $A_i \in [n]$, the center of the region graph, or the set $\mC_G$ of central vertices, with respect to the boundary subregions $A_i \in [n]$, is defined as\footnote{For a graph $G=(V,E)$, the quantity $\underset{u\in V}{\min}\;\underset{v\in V}{\max} \;d_G(u,v)$ is known as a \textit{graph radius}, see, for instance, \cite{Diestel2005graphtheory}.}
    \begin{equation}
        \mC_G : = \{ u\in V \;| \;\underset{v_{A_i}}{\max} \;d_G(u,v_{A_i}) =\underset{w\in V}{\min}\;\underset{v_{A_i}}{\max} \;d_G(w,v_{A_i})  \},
    \end{equation}
    where the maximization is over the boundary vertices, and $d_G$ is the graph distance, or the length of the shortest path between vertices.
\end{definition}

For example, let us consider four vertices $b_0$, $b_1$, $b_2$, and $b_3$, in figure \ref{fig:graphs_n-3_graph-center_ramp_full}-(a). The other vertices are related to these four by the symmetry of the graph. First, for each vertex $b_i$ with $i=0,1,2,3$, we compute the maximum graph distance to a boundary vertex, which is attained at the boundary vertex $v_{A_2}$, i.e.,
\begin{equation}
    d_G(b_0,v_{A_2}) = 2,\;d_G(b_1,v_{A_2}) = 3,\; d_G(b_2,v_{A_2}) = 4, \; d_G(b_3,v_{A_2}) = 4.
\end{equation}
Note that we have multiple boundary vertices that maximize the distance for each choice among the four vertices due to the symmetry of the graph. The vertex that minimizes the maximum distance is $b_0$. Hence, in this example, $b_0$ is the center of the region graph, i.e., $\mC_G=\{b_0\}$.

We choose a path from the center $b_0$ to a boundary vertex, such as $v_{A_1} = b_2$\footnote{One can choose the path between the center and $v_{A_2}$ or $v_{A_3}$. However, the results are independent of the choice because of the symmetry of the graph.}. In figure \ref{fig:plots_n-3_graph-center_d-r-s_differences}, we plot the values of the parameters $d_b$, $r_b$, $s_b$, $d_b-s_b$, and $r_b-s_b$, as a function of distance along the path from the center $b_0$.

The distance is maximum at the center, i.e., $d_{\max} = \big\lceil\frac{n}{2} \big\rceil= 2$. It monotonically decreases toward the boundary. The reconstruction threshold $r_b$ increases and the secret threshold $s_b$ decreases as the distance $d_b$ decreases from theorem \ref{thm:reconstruction_distance} and \ref{thm:secret_distance}, respectively.

The reconstruction threshold attains its maximum at the boundary vertices, except at the vertices homologous to the geometric complement $O$ of $[n]$\footnote{The vertices homologous to the geometric complement $O$ of $[n]$ can never be reconstructed from any combination of boundary subregions in $[n]$. Thus, their reconstruction threshold is undefined.}. Physically, we need $n$ boundary subregions to guarantee the reconstruction of any boundary vertex. For example, to reconstruct the vertex $b_2$ homologous to the boundary subregion $A_1$, every $R \in \msR_b^{\min}$ with $|R|\geq r_{b_2}$ must be able to reconstruct $b_2$. Since $EW(A_2A_3)$ does not contain $b_2$, this is possible only for $r_{b_2}=3$, i.e., $R = [n]$. 

Unlike the reconstruction threshold, the secret threshold reaches its minimum at the boundary vertices, i.e., $s_{\min} = d_{\min} - 1 = 0$.

We have a perfect threshold $(\!(r_{b_0},s_{b_0},n)\!)$-CHQSS at the center $b_0$. First, it is a perfect scheme because $r_b-s_b =1$. In addition, it is a threshold scheme because $\msR_{b_0}^{\min}$ contains all $R\in \mP[n]$ such that $|R| = r_{b_0}=2$, i.e.,
\begin{equation}
    \msR_{b_0}^{\min} = \{A_1A_2,A_2A_3,A_1A_3\}.
\end{equation}

As opposed to CHQSS at $b_0$, $(\!(r_{b_i},s_{b_i},n)\!)$-CHQSS for $i=1,2$ are perfect non-threshold schemes because whether a combination of boundary subregions can reconstruct $b_i$ depends on which boundary subregions it contains, not only its cardinality. That is,
\begin{equation}
    \msR_{b_1}^{\min} = \{A_1A_2, A_1A_3\},
\end{equation}
which does not contain $A_2A_3$ even though $|A_2A_3|=|A_1A_2|$, and 
\begin{equation}
    \msR_{b_2}^{\min} =  \{A_1\},
\end{equation}
which does not contain $A_2,A_3$.

We can see that $(\!( r_{b_1},s_{b_1},n)\!)$-CHQSS is superadditive because $s_{b_1}=d_{b_1}$ in figure \ref{fig:plots_n-3_graph-center_d-r-s_differences}. One can confirm in figure \ref{fig:graphs_n-3_d_s_difference_pure_mixed}-(b) that
\begin{equation}
    b_1 \not\subseteq EW(A_1) \cup EW(A_2A_3)\text{ and } b_1 \subseteq EW(A_1A_2A_3),
\end{equation}
or, equivalently, for $b_1 \subseteq EW(A_1A_2A_3)$,
\begin{equation}\label{eq:notin_complement_b1}
    b_1\not\subseteq EW(A_1) \text{ and } b_1\not\subseteq EW(\overline{A_1}).
\end{equation}




\begin{figure}[t]
    \centering
    \includegraphics[width=0.9\linewidth]{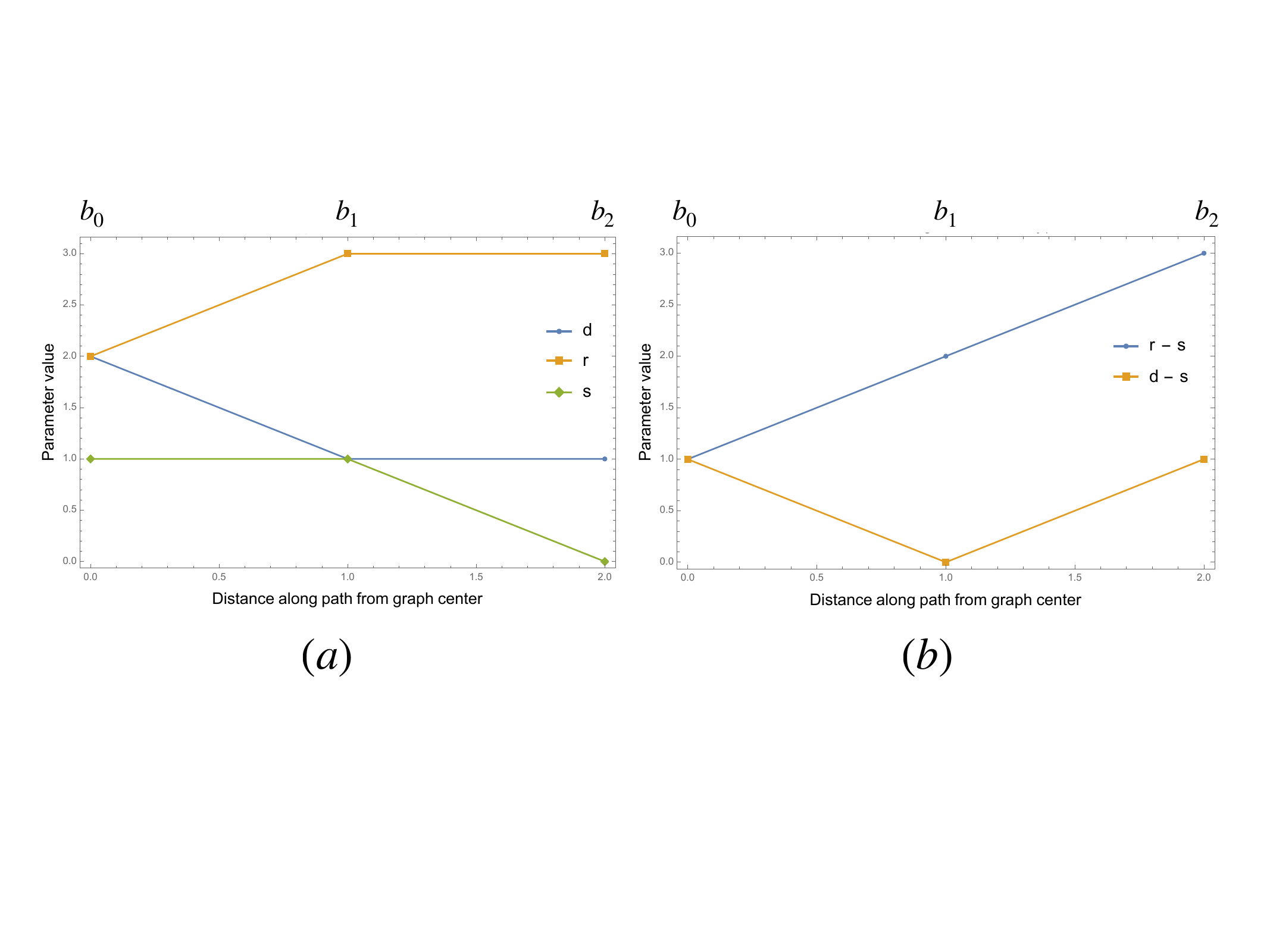}
    \caption{\small{The plots of the parameters $d_b$, $r_b$, $s_b$, and the differences $r_b-s_b$ and $d_b-s_b$ as a function of the distance along the path from the center of the graph: (a) $d_b$, $r_b$, and $s_b$. (b) $r_b-s_b$, and $d_b-s_b$.}}
    \label{fig:plots_n-3_graph-center_d-r-s_differences}
\end{figure}








\section{Discussions}\label{sec:discussions}


We studied the quality of holographic encoding by constructing combinatorial holographic quantum secret-sharing (CHQSS) whose access structure is characterized by the reconstruction threshold $r_b$ and the secret threshold $s_b$. Moreover, we proved the relation between the distance $d_b$ and the thresholds. 

In general, for a better QSS, we want a smaller $r_b$ for easier reconstruction, a larger $s_b$ for higher secrecy, and a larger $d_b$ for greater robustness. We found that any pure-state CHQSS scheme satisfies $d_b-s_b = 1$, whereas the superadditive mixed-state CHQSS schemes satisfy $s_b\geq d_b$. Thus, those mixed-state CHQSS schemes could achieve higher secrecy, but the encoded bulk logical information can be fragile to erasures of boundary subregions. It is intuitive because the superadditive schemes encode bulk logical information to the boundary via genuine multipartite entanglement wedges, see, for instance, Figure \ref{fig:geometry_n-3}-(b). 

In addition, in the mixed-state case, there is a trade-off between distance and the additivity. In some phases, the central vertex of the region graphs has the best distance among the other CHQSS schemes, but it does not exhibit additivity. One can recover the additivity by going away from the center, but the distance decreases monotonically as one moves away from the central vertex. Therefore, the best protected qubit that satisifies the additivity property is not necessarily at the center of $AdS$ space.


The additivity was essential to the exact HQEC, and the existence of the superadditive mixed-state CHQSS underlies the failure of the exact holographic QEC \cite{Faulkner:2020hzi,Leutheusser2025quantumtasksQEC} and is related to constraints on holographic QEC imposed by holographic quantum entropy inequalities \cite{Czech:2025jnw}. 
We now elaborate potential future directions below.

\subsection{Volume of codimension-one surface and its quantum information interpretations}

In this paper, we characterized the bulk subregions by the inclusion relations of entanglement wedges. One could instead characterize the bulk subregions by their volumes. For example, the recent paper \cite{Fujiki:2026ucr} introduced the entanglement polygon, a codimension-one surface, and studied the properties of its volume in various spacetimes. 

The operational interpretations of the volume of a bulk subregion have been explored in terms of complexity\cite{Alishahiha:2015rta,Roy:2017kha,Carmi:2016wjl,Ben-Ami:2016qex,Caceres:2019pgf,Caceres:2018blh,Bakhshaei:2017qud,Caceres:2018luq,Alishahiha:2018lfv,Agon:2018zso,Nakajima:2026agi,Gerbershagen:2024qlz}. A closely related quantity is the fidelity, a distinguishability measure in quantum information, which plays a key role in the construction of the Petz recovery map\cite{Junge:2015lmb,Denes:1986petzmap,Denes:2003petrecovery,Andras:2022petzrecovery,Knill:2002petzrecovery,Lautenbacher:2023sfa,Faulkner:2020iou,Faulkner:2020kit,Furuya:2020tzv} in the study of quantum error correction codes. It is interesting to explore operational interpretations of the volumes of the bulk subregions via holographic fidelity, holographic quantum error correction, and holographic quantum secret sharing.

In the operator algebraic approach, it was proposed that the volume of a bulk subregion can be related to the index of an inclusion of local boundary subalgebras \cite{Leutheusser:2025zvp}. It is also intriguing to study non-perturbative aspects of volumes of bulk subregions in relation to operator algebraic quantum error correction.

\subsection{Multipartite entanglement patterns from holographic QSS}


Entanglement is a resource for realizing quantum secret sharing. Entanglement structures of a quantum state affect an access structure, and thus determine the quality of QSS. For example, absolutely maximally entangled states can be used to construct perfect threshold QSS schemes\cite{Helwig:2012nha}. AME states or perfect tensors have been used to construct tensor-network toy models of holography \cite{Goyeneche:2015fda,Pastawski:2015qua}.

Conversely, a QSS scheme imposes constraints on the entanglement patterns of the quantum state associated with the QSS. This motivates studying the constraints on holographic states and whether they can be read off from the access structure of the holographic QSS schemes, including our combinatorial holographic QSS.





\subsection{Classification of quantum secret sharing}


Quantum secret sharing schemes depend on the class of quantum states. 
One can have stabilizer-state QSS\cite{Matsumoto:2017xuk,Matsumoto:2019wct} and graph-state QSS\cite{Markham:2008kqi,Keet:2010rlc,Sarvepalli:2012nfk}. Holographic quantum states form a restricted subclass of general quantum states, constrained, for example, by the holographic entropy inequalities (HEIs). 

One potential future direction is to study how the HEIs constrain the holographic QSS, and vice versa. In \cite{Czech:2025jnw}, the authors studied the constraints imposed on holographic quantum error correction by the HEIs.

The potential connection between holographic QSS and HEIs could be made through a quantum information rate. The quantum information rate quantifies the performance of quantum secret sharing \cite{Sarvepalli:2010qvu,Imai:2003fuj}. In particular, the paper \cite{Imai:2003fuj} provided an entropic characterization of QSS and an entropic version of the quantum information rate. It is interesting to explore how the HEIs could constrain and characterize the performance of holographic QSS, for example. 

Furthermore, combinatorial aspects of HEIs have been explored in \cite{Grimaldi:2026lbq}. It is also intriguing to explore if their majorization structure can constrain the access structure of the combinatorial holographic quantum secret sharing.

\section*{Acknowledgement}
We would like to thank Joydeep Naskar for discussions. N.\,B. is supported by the DOE Office of Science-ASCR, in particular the grant Novel Quantum Algorithms from Fast Classical Transforms and by Northeastern University. J.\,M. is supported by a graduate assistantship from Northeastern University. K.\,F. acknowledges support from Professor Ning Bao at Northeastern University.








\appendix

\section{Proofs} \label{app:proof_max_distance}

\subsection{Theorem \ref{prop:max_distance}} \label{app:proof_max_distance}

We provide lemma \ref{lem:erasure_equivalent_conditions} before we present the proof of theorem \ref{prop:max_distance}.

\begin{lemma}\label{lem:erasure_equivalent_conditions}
    For a choice of $n$, consider $\msR_{b}^{\min}\subseteq \mP[n]$ and $\mE_E \subseteq [n]$. The following statements are equivalent.
    \begin{enumerate}
        \item $|\msR_b^{\min}\setminus_E \mE_E |= 0$
        \item $\mE_E  \cap R \neq \emptyset$, $\forall R \in \msR_b^{\min}$ 
        \item $[n]\setminus \mE_E  \not\supseteq R$, $\forall R \in \msR_b^{\min}$
    \end{enumerate}
\end{lemma}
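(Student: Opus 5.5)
We prove the three equivalences by the cycle $1\Rightarrow 2\Rightarrow 3\Rightarrow 1$. Only the definition \eqref{eq:set_erasure} of the set erasure and elementary set algebra are needed. Throughout, $\msR_b^{\min}\setminus_E\mE_E$ means the same operation applied to the family $\msR_b^{\min}$, namely $\{R\in\msR_b^{\min}\,|\,R\cap\mE_E=\emptyset\}$.

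For $1\Leftrightarrow 2$, we unpack the definition. By \eqref{eq:set_erasure}, $|\msR_b^{\min}\setminus_E\mE_E|=0$ holds exactly when the family $\{R\in\msR_b^{\min}\,|\,R\cap\mE_E=\emptyset\}$ is empty. That is precisely the statement that $R\cap\mE_E\neq\emptyset$ for every $R\in\msR_b^{\min}$, so both directions are immediate.

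For $2\Leftrightarrow 3$, we fix $R\in\msR_b^{\min}\subseteq\mP[n]$, so $R\subseteq[n]$. Then $R\subseteq[n]\setminus\mE_E$ holds if and only if no element of $R$ lies in $\mE_E$, that is, if and only if $R\cap\mE_E=\emptyset$. Negating both sides gives $[n]\setminus\mE_E\not\supseteq R \iff R\cap\mE_E\neq\emptyset$. Quantifying over all $R\in\msR_b^{\min}$ yields the equivalence of statements 2 and 3.

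There is no real obstacle here. The only point needing care is the notational one above: statement 1 applies $\setminus_E$ to $\msR_b^{\min}$ rather than to $\msR_b$. It may also be worth remarking, for use in the proof of theorem \ref{prop:max_distance}, that the conditions are equivalent to $|\msR_b\setminus_E\mE_E|=0$. This follows because every $R'\in\msR_b$ contains some $R\in\msR_b^{\min}$ by \eqref{def:reconstructible_wedges_2}. Hence if every minimal $R$ meets $\mE_E$, so does every $R'$, and the converse holds since $\msR_b^{\min}\subseteq\msR_b$.
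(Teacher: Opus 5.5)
Your proof is correct and matches the paper's: $1\Leftrightarrow 2$ is read off directly from the definition \eqref{eq:set_erasure} of $\setminus_E$, and $2\Leftrightarrow 3$ is the elementary fact that $R\subseteq[n]\setminus\mE_E$ if and only if $R\cap\mE_E=\emptyset$, which the paper shows by picking some $A_i\in\mE_E\cap R$ in one direction and arguing by contrapositive in the other. Your closing remark that these conditions are also equivalent to $|\msR_b\setminus_E\mE_E|=0$, via \eqref{def:reconstructible_wedges_2}, is a worthwhile addition, since the paper defines the distance on $\msR_b$ but its proof of theorem \ref{prop:max_distance} works with $\msR_b^{\min}$.
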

\begin{proof}
    $(1\Leftrightarrow 2)$The equivalence between the first and second statements can be seen from the definition of $\setminus_E$ in \eqref{eq:set_erasure}.

    We next prove the equivalence between the second and third statements. 
    
    $(2\Rightarrow 3)$ Assuming that $\mE_E  \cap R \neq \emptyset$ for $\forall R \in \msR_b^{\min}$, pick $A_i \in \mE_E \cap R$ for a given $R\in \msR_b^{\min}$. We then have $A_i \not\in [n]\setminus \mE_E $ and thus $ R \not \subseteq [n]\setminus \mE_E$. This holds for any $R\in\msR_b^{\min}$. Thus, the third statement is proved. 

    $(2\Leftarrow 3)$ We prove the contrapositive. If $\mE_E  \cap R = \emptyset$ for some $ R \in \msR_b^{\min}$, then $R$ is contained in the complement $[n]\setminus \mE_E$, i.e., $[n]\setminus \mE_E  \supseteq R$.

\end{proof}

We now begin the proof of theorem \ref{prop:max_distance}.
\begin{proof}
    For a choice of bulk subregion $b$ with $\msR_b\neq \emptyset$, define $u_b:= \underset{R\in \msR_b^{\min}}{\min} |R|$, the minimum size of a minimal authorized set. 
    
    We can choose $\mE_E =R$ for some $R\in \msR_b^{\min}$ with $|R|=u_b$, so that $|\msR_b^{\min}\setminus_E \mE_E |=0$ because, by lemma \ref{lem:pair-wise_intersection}, $R\in \msR_b^{\min}$ is pairwise intersecting with any other elements in $\msR_b^{\min}$. Thus, we have 
    \begin{equation} \label{eq:distance_upperbound1}
        d_b \leq u_b.
    \end{equation}
    
    We can alternatively choose $\mE_E $ such that 
    \begin{equation}\label{eq:condition_2_ver2}
        |[n]\setminus \mE_E | <u_b.
    \end{equation}
    From lemma \ref{lem:erasure_equivalent_conditions}, this choice of $\mE_E $ guarantees that $|\msR_
    b^{\min}\setminus \mE_E |=0$, because the complement $[n]\setminus_E \mE_E$ is too small to contain any element of $\msR_b^{\min}$. From \eqref{eq:condition_2_ver2}, we get
    \begin{equation}
        |\mE_E |> n-u_b, \text{ or } |\mE_E |\geq n-u_b+1.
    \end{equation}
    Since the choice of $\mE_E $ is not necessarily optimal, we have 
    \begin{equation}\label{eq:distance_upperbound2}
        d_b  \leq n-u_b +1.
    \end{equation}
    We choose the minimum upper bound of $d_b$ between \eqref{eq:distance_upperbound1} and \eqref{eq:distance_upperbound2}, i.e.,
    \begin{equation}
        d_b \leq \min(u_b,n-u_b+1).
    \end{equation}
    We then evaluate the maximum distance scanning over $b$. That is,
    \begin{equation}
       d_b \leq d_{\max} = \underset{b}{\max} \min (u_b,n-u_b+1).
    \end{equation}

    If $n$ is even, let $u_b=\frac{n}{2}$. Then, 
    \begin{equation}
        \underset{b}{\max} \min \lp \frac{n}{2},\frac{n}{2}+1 \rp =\frac{n}{2}.
    \end{equation}
    If $n$ is odd, let $u_b=\frac{n+1}{2}$. Then, 
    \begin{equation}
        \underset{b}{\max} \min \lp \frac{n+1}{2},\frac{n+1}{2}+1 \rp =\frac{n+1}{2}.
    \end{equation}
    Therefore, we have
    \begin{equation}
        d_{max} = \Big\lceil \frac{n}{2}\Big\rceil.
    \end{equation}

    We should question whether $d_{max}$ is physically realizable. The answer is positive, at least in pure AdS$_3$, if the entanglement wedges of all $\Big\lceil \frac{n+1}{2} \Big\rceil$ boundary subregions are connected. If so, we can find $b$ such that $|R|=\Big\lceil \frac{n+1}{2} \Big\rceil$ for $\forall R\in \msR_b^{\min}$ and 
    \begin{equation}
        |\msR_b^{\min}| = {}_n C_{\big\lceil \frac{n+1}{2} \big\rceil}.    
    \end{equation}

    There are ${}_{n-1}C_{\big\lceil \frac{n+1}{2} \big\rceil-1}$ elements in $\msR_b^{\min}$, which contain boundary subregion $A_{i_1}\in [n]$. They can be removed from $\msR_b^{\min}$ if $A_{i_1}\in \mE_E $. Similarly, there are ${}_{n-2}C_{\big\lceil \frac{n+1}{2} \big\rceil-1}$ elements in $\msR_b^{\min}$, which contain boundary subregions $A_{i_1},A_{i_2}\in [n]$. If $A_{i_2}\in \mE_E $ as well, they can be removed from $\msR_b^{\min}$. 

    We can add
    \begin{equation}
        (n-1) - \lp \left\lceil \frac{n+1}{2} \right\rceil-1\rp + 1 = \left\lceil \frac{n}{2} \right\rceil
    \end{equation}
    boundary subregions to $\mE_E $. The total number of elements in $\msR_b^{\min}$ that are removed after the size $|\mE_E |$ becomes $\Big\lceil \frac{n}{2} \Big\rceil$ is
    \begin{equation}
        {}_n C_{\big\lceil \frac{n+1}{2} \big\rceil}  = {}_{n-1} C_{\big\lceil \frac{n+1}{2} \big\rceil-1} + \cdots + {}_{\big\lceil \frac{n+1}{2} \big\rceil-1} C_{\big\lceil \frac{n+1}{2} \big\rceil-1}.
    \end{equation}
    Therefore, we can achieve $|\mE_E | = d_{max} = \big\lceil \frac{n}{2} \big\rceil$ such that $|\msR_b^{\min}\setminus \mE_E |=0$ with the set $\msR_b^{\min}$.

\end{proof}

\subsection{Theorem \ref{thm:secret_distance}}\label{app:proof_secret_distance}

We prove theorem \ref{thm:secret_distance} below. 

\begin{proof}
    Consider $\mE_E \in [n]$ such that $|\mE_E| =s_b$. By the definition of $s_b$, we have 
    \begin{equation}
        b\not\subseteq EW(\mE_E).
    \end{equation}

    $(\Rightarrow)$ If $b\not\subseteq EW(\mE_E)$ and $b\subseteq EW(R = [n]\setminus \mE_E)$, we have
    \begin{equation}
        n-s_b =  r_b.
    \end{equation}
    From \eqref{eq:reconstruction_distance}, we get
    \begin{equation}
        s_b = d_b-1.
    \end{equation}
    However, if $b\not\subseteq EW(\mE_E)$ and $b\not\subseteq EW(R=[n]\setminus \mE_E)$, we have 
    \begin{equation}
        r_b \geq \max(s_b+1,n-s_b+1).
    \end{equation}
    If $n>2s_b$, 
    \begin{equation}
        r_b \geq n-s_b+1.
    \end{equation}
    Using \eqref{eq:reconstruction_distance}, we get 
    \begin{equation}
        s_b\geq d_b.
    \end{equation}
    On the contrary, if $n\leq 2s_b$,
    \begin{equation}
        r_b \geq s_b+1.
    \end{equation}
    From \eqref{eq:reconstruction_distance}, 
    \begin{equation}
        n-d_b \geq s_b.
    \end{equation}
    Since $n\leq 2s_b$, we get
    \begin{equation}
        2s_b -d_b \geq n-d_b \geq s_b.
    \end{equation}
    Thus, we again obtain $s_b \geq d_b$.



    $(\Leftarrow)$ If $s_b =d_b-1$ for $\mE_E$ such that $|\mE_E|=s_b$, we have 
    \begin{equation}
        s_b = n-r_b
    \end{equation}
    from \eqref{eq:reconstruction_distance}. This implies that $b\subseteq EW(R \setminus \mE_E)$. Similarly, if $s_b \geq d_b$, then $r_b \geq n-s_b +1$. Hence, $b\not\subseteq EW(R \setminus \mE_E)$. This proves the sufficiency of \eqref{eq:secret_distance} and \eqref{eq:secret_distance_superadditive}.
    
\end{proof}

\section{More examples}\label{app:examples}

In this section, we present examples of pure-state and mixed-state CHQSS schemes in the symmetric cases for distinct holographic phases. In particular, in section \ref{app:examples_pure_state}, we describe the corresponding family of pure-state CHQSS schemes for $n=3,4,5$. In the following two subsections, we explore families of mixed-state CHQSS schemes in distinct holographic phases for $n=3,5$.


\subsection{Pure-state CHQSS schemes: $n=3,4,5$}\label{app:examples_pure_state}

We have figure \ref{fig:examples_pure_n-3_graphs}, \ref{fig:examples_pure_n-4_graphs}, and \ref{fig:examples_pure_n-5_graphs} of a constant time slice of AdS$_3$ partitioned by RT surfaces anchored on boundary subregions in $\mP[n]$ for $n=3,4,5$.

As we have proved in theorem \ref{thm:secret_distance} and corollary \ref{cor:pure-mixed}, we have $d_b-s_b=1$ for any $b$ and $n$, see figures \ref{fig:examples_pure_n-3_plots}-(b), \ref{fig:examples_pure_n-3_plots}-(b), and \ref{fig:examples_pure_n-3_plots}-(b). The gap $r_b-s_b$ for any $b$ increases monotonically as $b$ moves away from the center of the region graph because $r_b$ increases with increasing inhomogeneity of $\msR_b$, while $s_b$ decreases.

\begin{figure}[ht]
    \centering
    \includegraphics[width=0.8\linewidth]{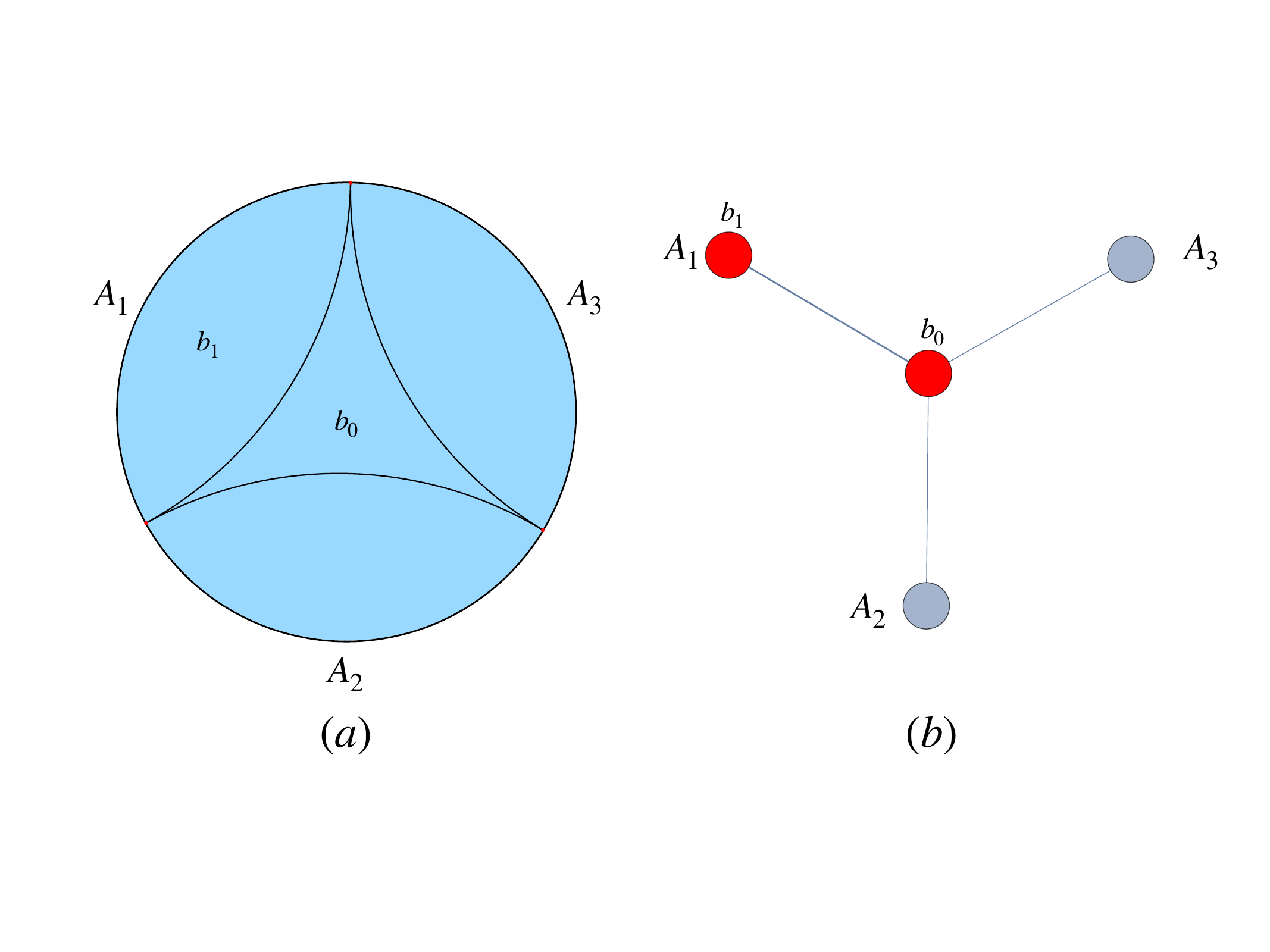}
    \caption{\small{(a) The constant time slice partitioned by RT surfaces of boundary subregions in $\mP[n=3]$ at $\phi =2\pi/3$. (b) The region graph for $n=3$. The vertices $b_0$ and $b_1$ in the path from the center of the region graph to the boundary vertex $A_1$ are colored red.}}
    \label{fig:examples_pure_n-3_graphs}
\end{figure}

\begin{figure}[ht]
    \centering
    \includegraphics[width=0.9\linewidth]{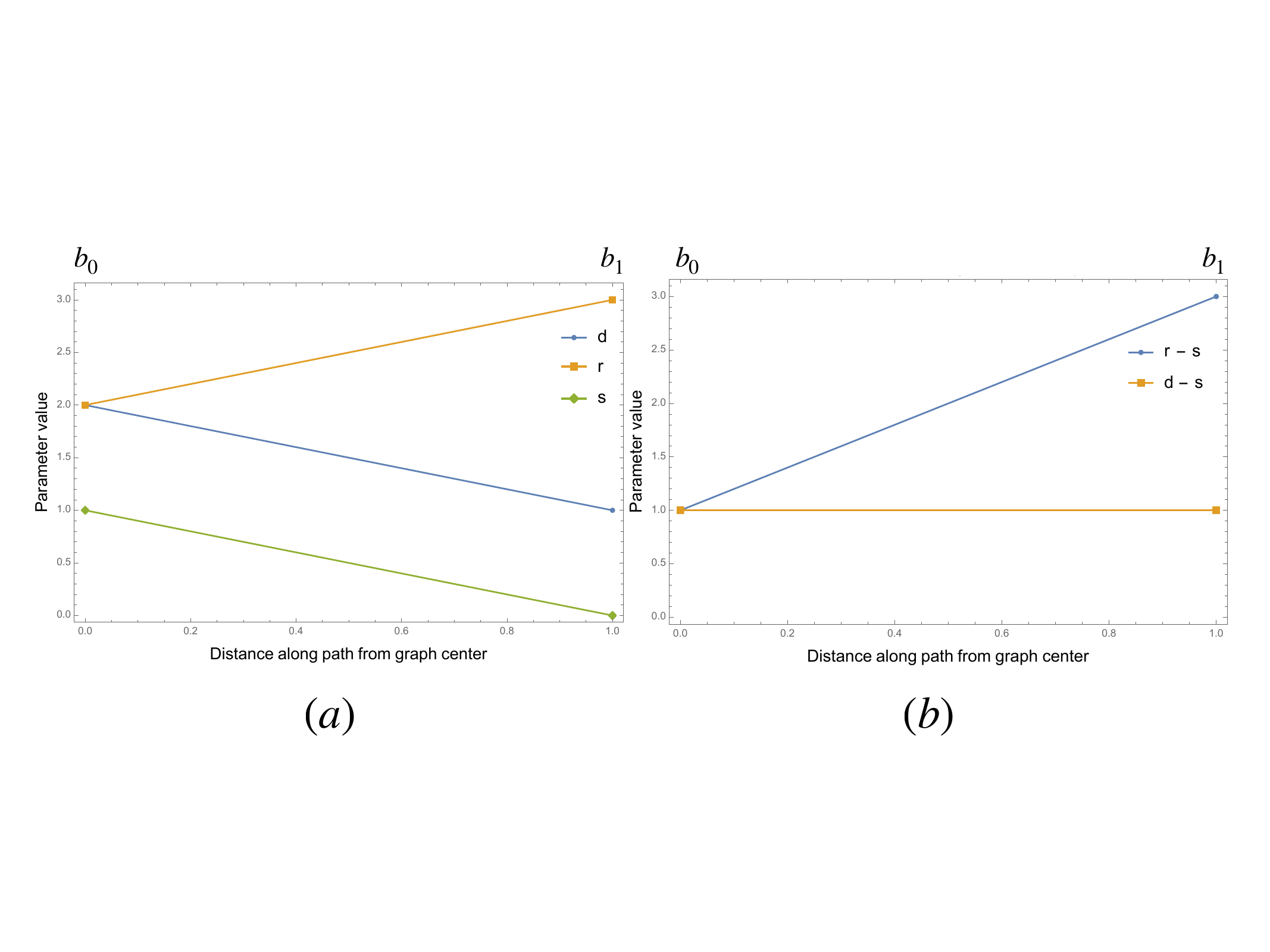}
    \caption{\small{The plots of the parameters $d_b$, $r_b$, $s_b$, and the differences $r_b-s_b$ and $d_b-s_b$ as a function of the distance along the path from the center of the graph: (a) $d_b$, $r_b$, and $s_b$. (b) $r_b-s_b$, and $d_b-s_b$.}}
    \label{fig:examples_pure_n-3_plots}
\end{figure}

\begin{figure}[ht]
    \centering
    \includegraphics[width=0.8\linewidth]{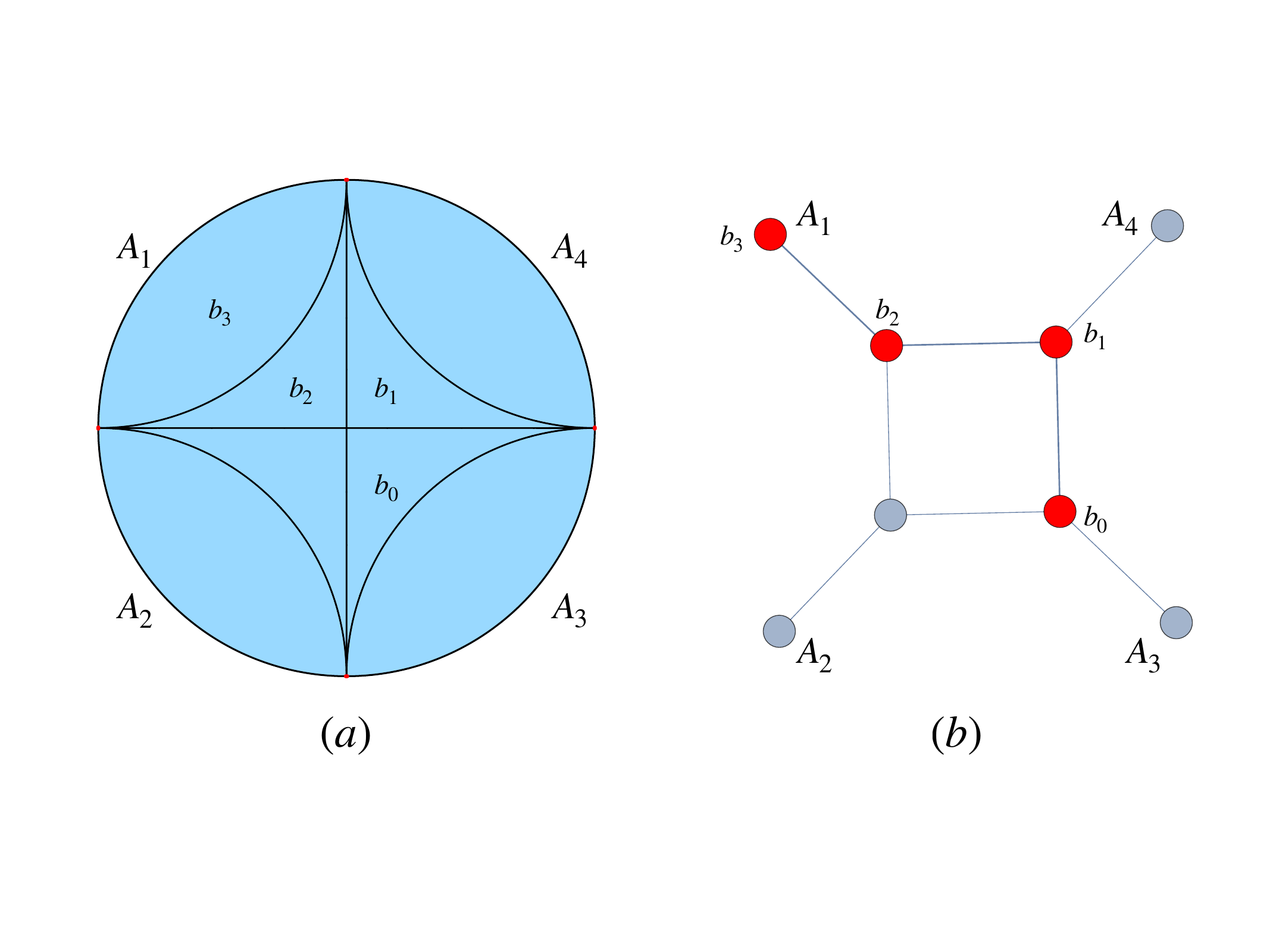}
    \caption{\small{(a) The constant time slice partitioned by RT surfaces of boundary subregions in $\mP[n=4]$ at $\phi =2\pi/4$. (b) The region graph for $n=4$. The vertices $b_0$, $b_1$, $b_2$ and $b_3$ in the path from the center of the region graph to the boundary vertex $A_1$ are colored red.}}
    \label{fig:examples_pure_n-4_graphs}
\end{figure}

\begin{figure}[ht]
    \centering
    \includegraphics[width=0.9\linewidth]{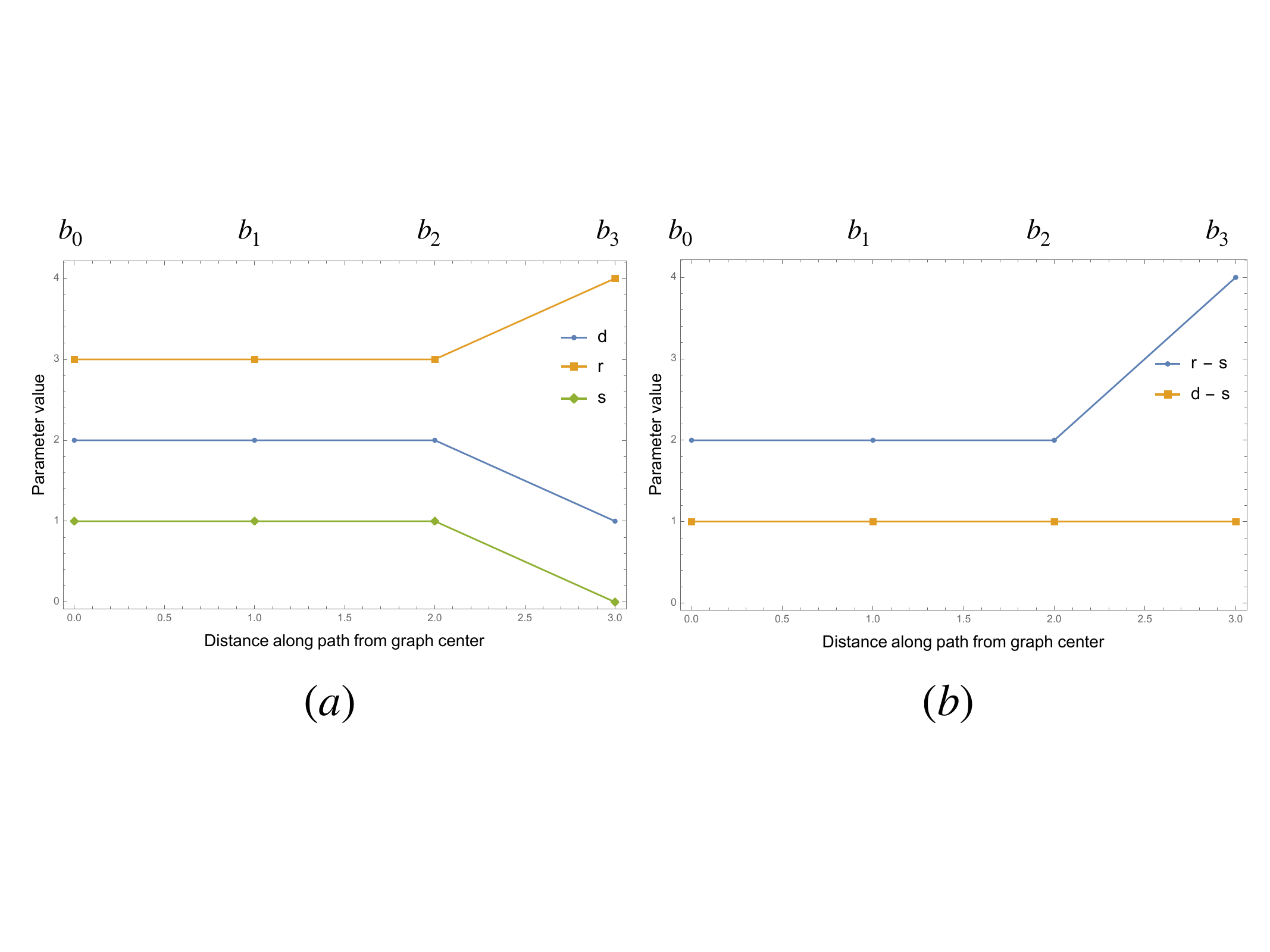}
    \caption{\small{$n=4$, pure-state CHQSS. The plots of the parameters $d_b$, $r_b$, $s_b$, and the differences $r_b-s_b$ and $d_b-s_b$ as a function of the distance along the path from the center of the graph: (a) $d_b$, $r_b$, and $s_b$. (b) $r_b-s_b$, and $d_b-s_b$.}}
    \label{fig:examples_pure_n-4_plots}
\end{figure}

\begin{figure}[ht]
    \centering
    \includegraphics[width=0.8\linewidth]{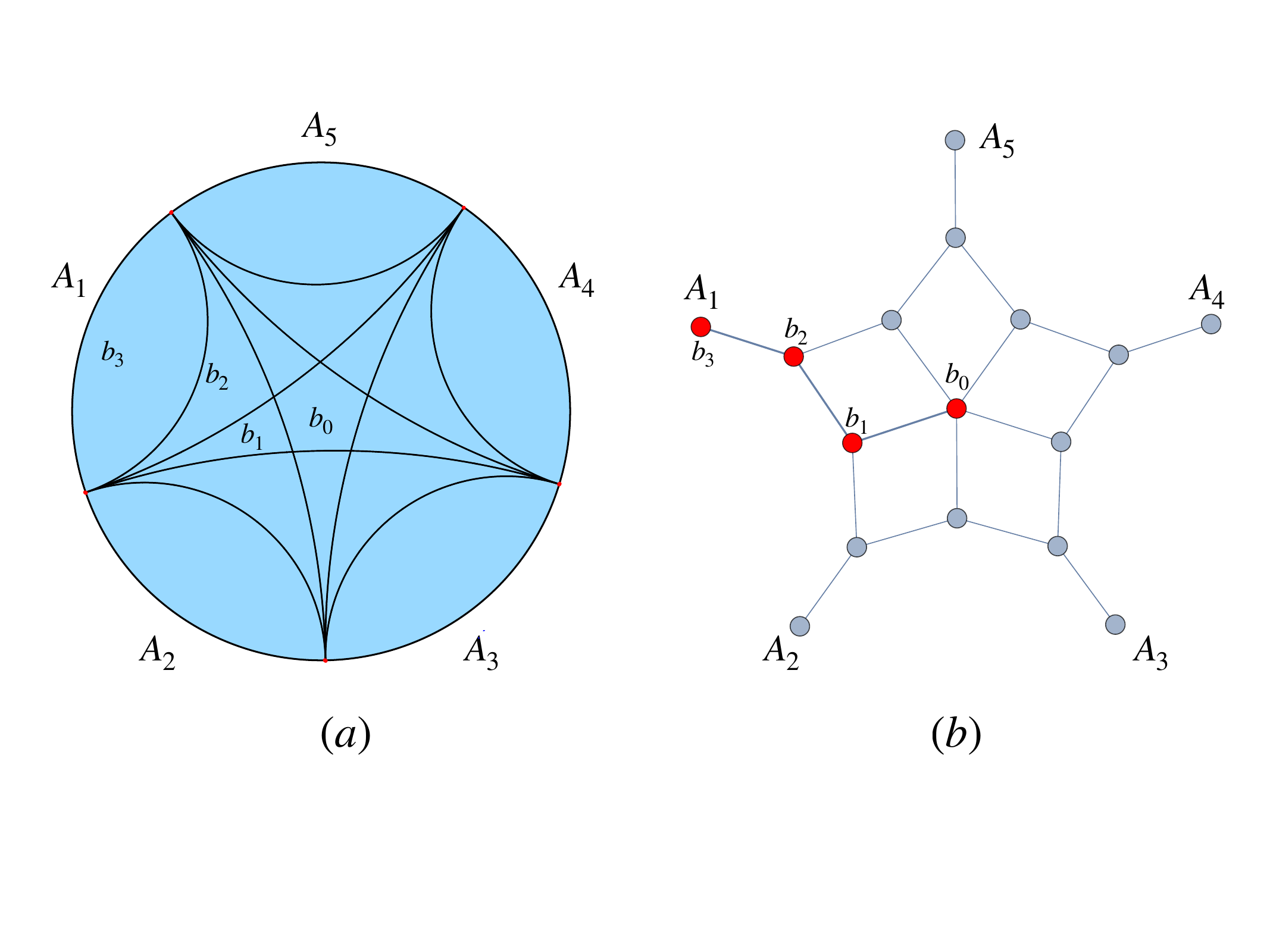}
    \caption{\small{(a) The constant time slice partitioned by RT surfaces of boundary subregions in $\mP[n=5]$ at $\phi =2\pi/5$. (b) The region graph for $n=5$. The vertices $b_0$, $b_1$, $b_2$ and $b_3$ in the path from the center of the region graph to the boundary vertex $A_1$ are colored red.}}
    \label{fig:examples_pure_n-5_graphs}
\end{figure}

\begin{figure}[ht]
    \centering
    \includegraphics[width=0.9\linewidth]{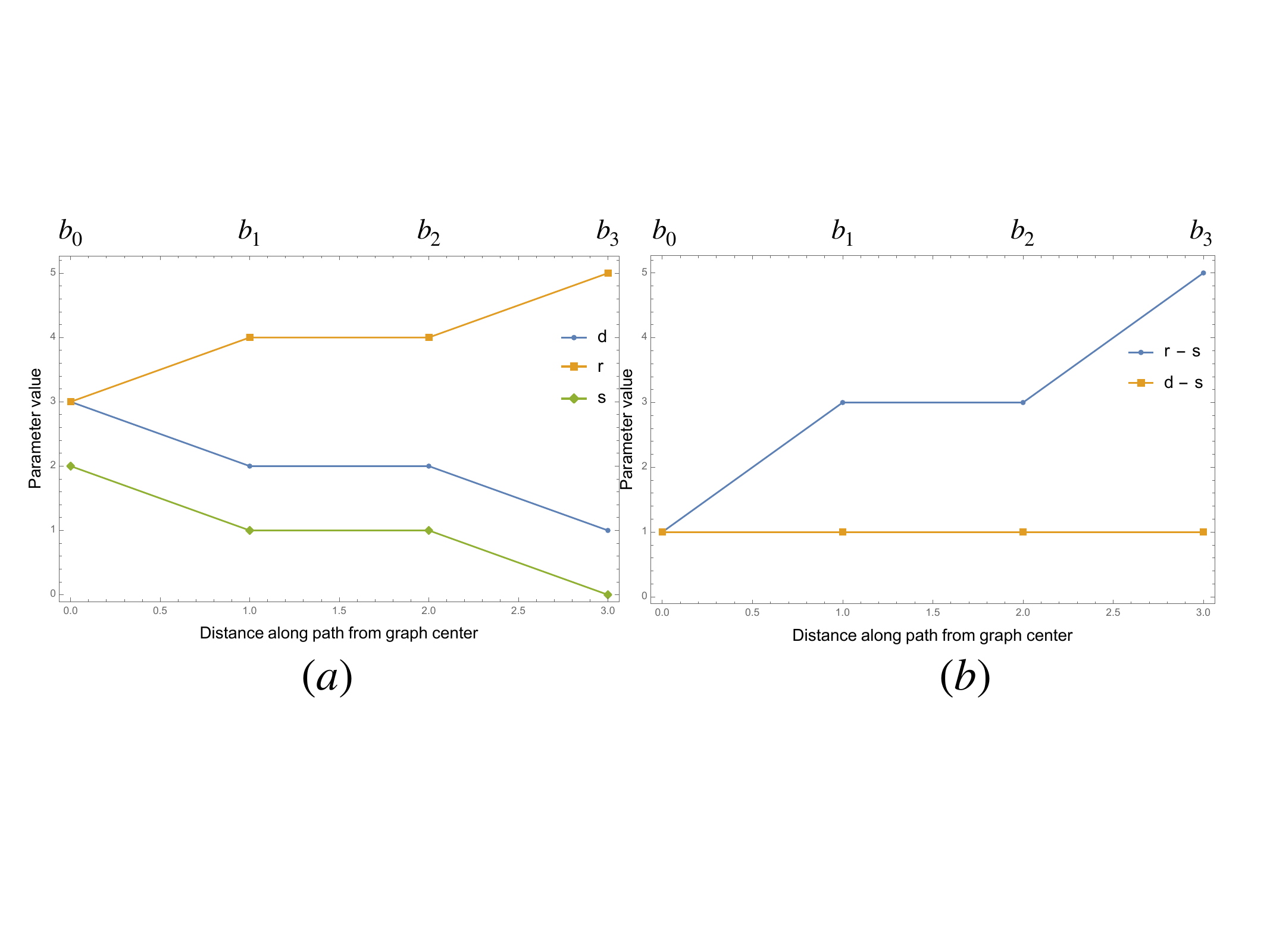}
    \caption{\small{$n=5$, pure-state CHQSS. The plots of the parameters $d_b$, $r_b$, $s_b$, and the differences $r_b-s_b$ and $d_b-s_b$ as a function of the distance along the path from the center of the graph: (a) $d_b$, $r_b$, and $s_b$. (b) $r_b-s_b$, and $d_b-s_b$.}}
    \label{fig:examples_pure_n-5_plots}
\end{figure}

\subsection{Mixed-state CHQSS  schemes: $n=3$}

For $n=3$, there are only two connected phases as described in figure \ref{fig:transition_points_n-3-4}-(a). We have already studied the family of mixed-state CHQSS schemes in the phase where $\phi \geq \phi_1$ in section \ref{sec:CHQSS}. Here, we complete the analysis of $n=3$ CHQSS by presenting the family of mixed-state schemes in the phase where $\phi_0 \leq \phi \leq \phi_1$, see figure \ref{fig:examples_mixed_n-3_graphs} and \ref{fig:examples_mixed_n-3_plots}. One can see that $(\!(r_{b_0},s_{b_0},n)\!)$-CHQSS has $d_b-s_b=-1$ and hence it is superadditive. 

Including the pure-state case, we have three types of $(\!(r_{b_0},s_{b_0},n)\!)$-CHQSS schemes, the first is the pure-state CHQSS in figure \ref{fig:examples_pure_n-3_graphs}, the second is the mixed-state CHQSS in figure \ref{fig:graphs_n-3_d_s_difference_pure_mixed}-(b) where all the boundary subregions in $\mP[3]$ are connected, and the third is the mixed-state CHQSS in figure \ref{fig:examples_mixed_n-3_graphs} where only the tripartite entanglement wedge is connected. In figure \ref{fig:examples_pure_n-3_plots} and \ref{fig:plots_n-3_graph-center_d-r-s_differences}, we have 
\begin{equation}
    r_{b_0} = 2,\;s_{b_0} = 1,\;d_{b_0} = 2    
\end{equation}
for the first two CHQSS schemes, and, in figure \ref{fig:examples_mixed_n-3_plots},
\begin{equation}
    r_{b_0} = 3,\;s_{b_0} = 2,\;d_{b_0} = 1    
\end{equation}
for the last CHQSS. The first two $(\!(r_{b_0},s_{b_0},n)\!)$-CHQSS schemes are additive since $d_{b_0}-s_{b_0} = 1$ regardless of pure-state or mixed-state scheme, whereas the third one is superadditive since $d_{b_0}-s_{b_0} = -1$. In addition, the bulk logical information in $b_0$ in the third case is more fragile than the first two because its $d_{b_0}$ is smaller than theirs. The $d_{b_0}$ of the first two reaches the maximum, i.e., $d_{max}=\lceil 3/2 \rceil = 2$.

\begin{figure}[ht]
    \centering
    \includegraphics[width=0.8\linewidth]{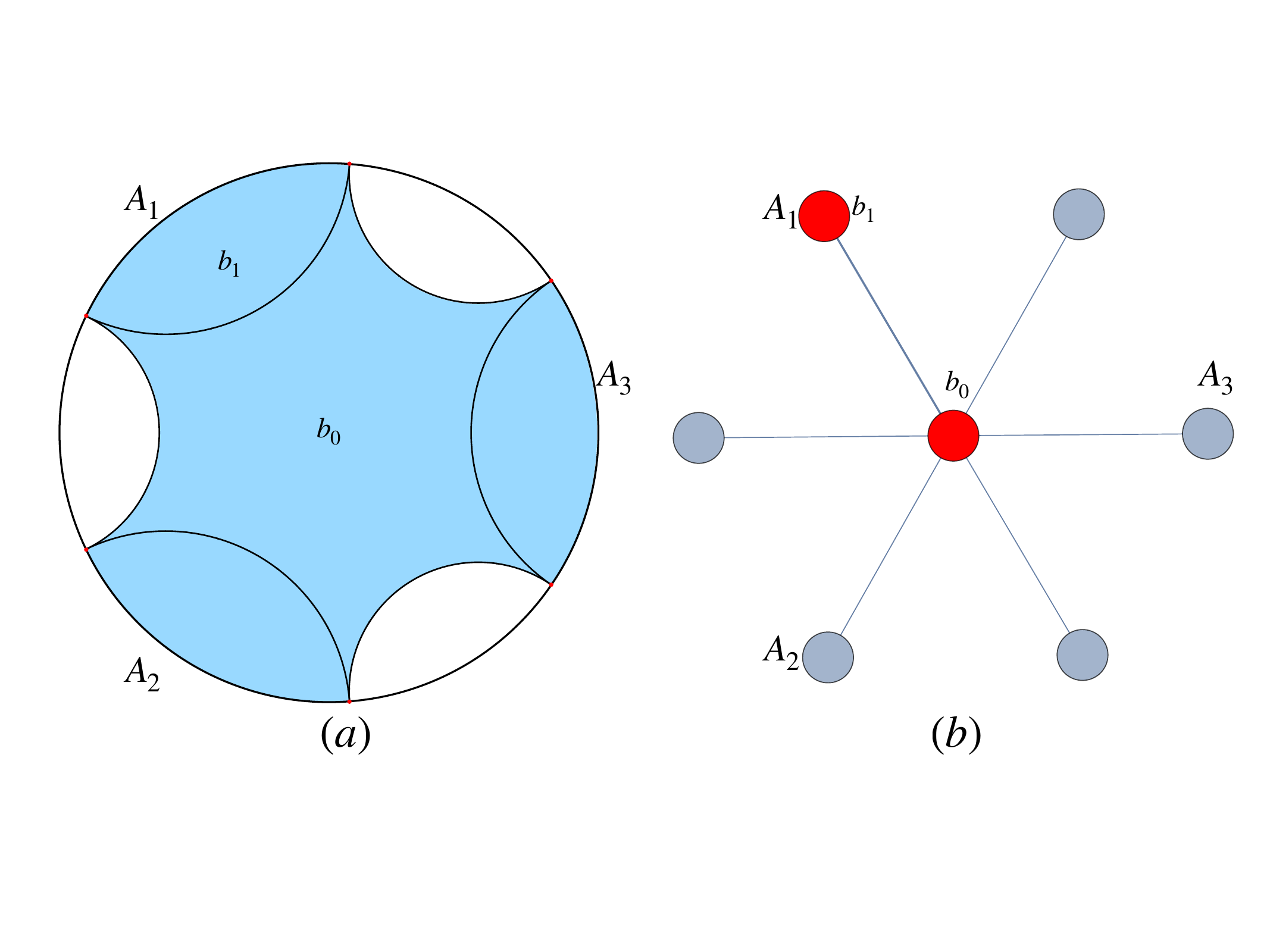}
    \caption{\small{(a) The constant time slice partitioned by RT surfaces of boundary subregions in $\mP[n=3]$ in the phase $\phi_0\leq \phi \leq \phi_1$. (b) The region graph for $n=3$. The vertices $b_0$ and $b_1$ in the path from the center of the region graph to the boundary vertex $A_1$ are colored red.}}
    \label{fig:examples_mixed_n-3_graphs}
\end{figure}

\begin{figure}[ht]
    \centering
    \includegraphics[width=0.9\linewidth]{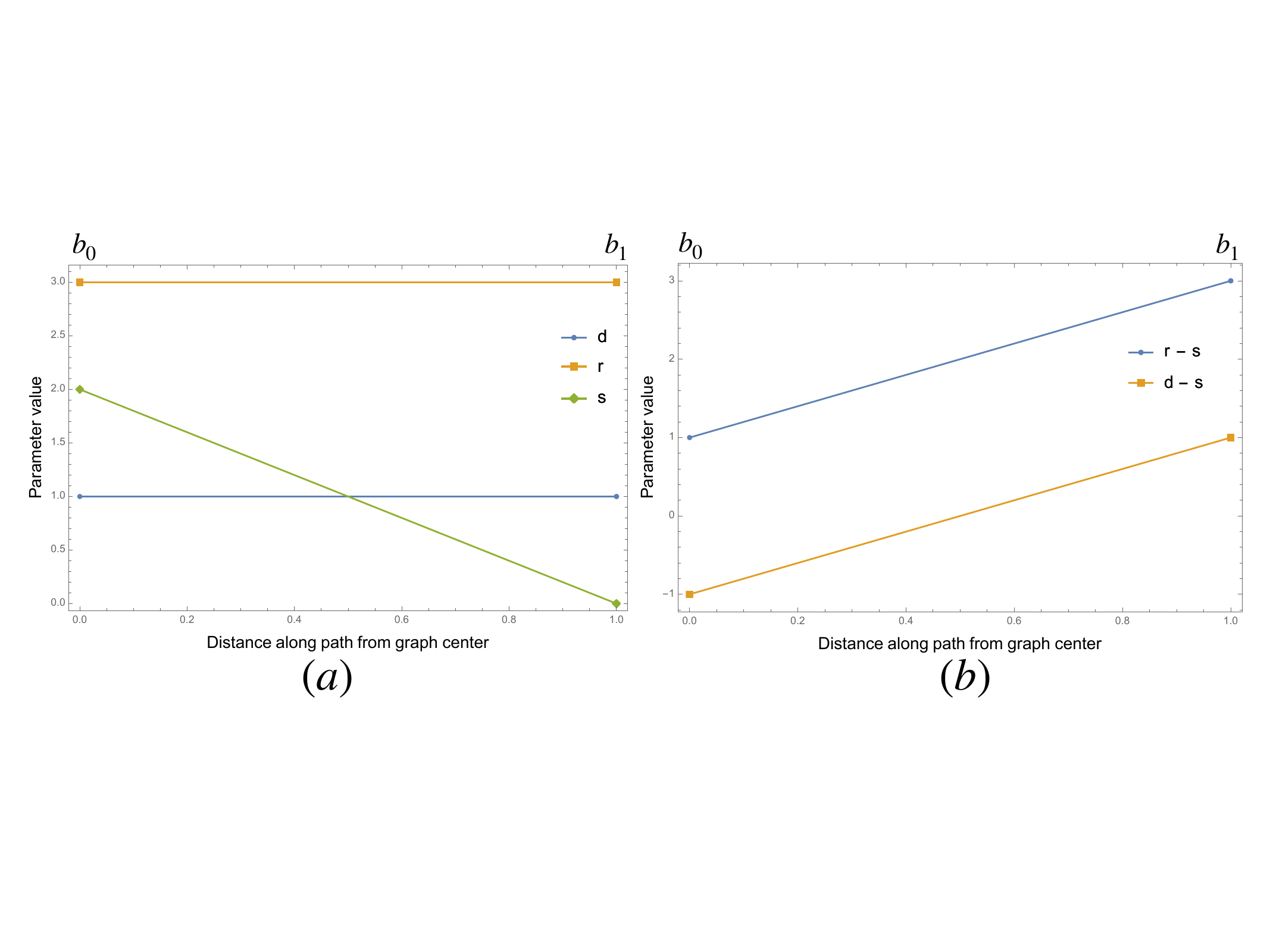}
    \caption{\small{$n=3$, mixed-state CHQSS. The plots of the parameters $d_b$, $r_b$, $s_b$, and the differences $r_b-s_b$ and $d_b-s_b$ as a function of the distance along the path from the center of the graph: (a) $d_b$, $r_b$, and $s_b$. (b) $r_b-s_b$, and $d_b-s_b$.}}
    \label{fig:examples_mixed_n-3_plots}
\end{figure}

\subsection{Mixed-state CHQSS schemes: $n=5$}

In this last subsection, we present the family of mixed-state CHQSS schemes for $n=5$, see figures \ref{fig:examples_mixed_n-5_70_graphs}, \ref{fig:examples_mixed_n-5_75_graphs}, \ref{fig:examples_mixed_n-5_82_graphs},
\ref{fig:examples_mixed_n-5_95_graphs}, and \ref{fig:examples_mixed_n-5_11_graphs} for the geometries and the corresponding region graphs, and figures \ref{fig:examples_mixed_n-5_70_plots}, \ref{fig:examples_mixed_n-5_75_plots}, \ref{fig:examples_mixed_n-5_82_plots},
\ref{fig:examples_mixed_n-5_95_plots}, and \ref{fig:examples_mixed_n-5_11_plots} for the behaviors of the parameters $r_b$, $s_b$, and $d_b$. There are five non-trivial transition points $\phi_0$, $\phi_1$, $\phi_2$, $\phi_3$, and $\phi_4$, and, thus, five distinct connected phases.

When $\phi_0 \leq \phi \leq \phi_1$, only five-partite entanglement wedge of $(1,1,1,1,1)$ is connected. In the phase of $\phi_1 \leq \phi \leq \phi_2$, the four-partite entanglement wedges of the boundary subregion $(1,1,1,2)$ are connected. The phase $\phi_2 \leq \phi \leq \phi_3$ has the tripartite entanglement wedges of the boundary subregions labeled $(1,1,3)$ connected. Within the phase $\phi_3 \leq \phi \leq \phi_4$, the bipartite entanglement wedges of $(1,4)$ begin to connect. After passing the last transition point $ \phi \geq \phi_4 $, another set of tripartite entanglement wedges of $(1,2,2)$ appears as explained in section \ref{sec:transition_points_MEW}.

In the phase $\phi_0 \leq \phi < \phi_3$, only the CHQSS schemes on the boundary vertex are additive. The other CHQSS schemes on the bulk vertices are superadditive because they are not covered by enough layers of multipartite entanglement wedges. In the phase after $\phi_3$, some superadditive CHQSS schemes on certain bulk vertices emerge. For example, the CHQSS of $b_4$ is shown in figure \ref{fig:examples_mixed_n-5_95_plots} and \ref{fig:examples_mixed_n-5_11_plots}.

The gap $d_{b_0}-s_{b_0}$ of the CHQSS at the center of the region graph in each phase increases monotonically as the phase shifts from $\phi_0$ to above $\phi_4$. That is, $d_{b_0}-s_{b_0} = -3 \text{ in }\phi_0 \leq \phi < \phi_1$, $-1 \text{ in }\phi_1 \leq \phi < \phi_2$, $0 \text{ in }\phi_2 \leq \phi < \phi_3$, $0 \text{ in }\phi_3 \leq \phi < \phi_4$, and $1 \text{ in }\phi_4 \leq \phi \leq 2\pi/5$. Thus, the CHQSS in $\phi_4 \leq \phi \leq 2\pi/5$ is additive. Its distance reaches the maximum $d_{b_0}= \lceil 5/2 \rceil=3$.

\begin{figure}[ht]
    \centering
    \includegraphics[width=0.8\linewidth]{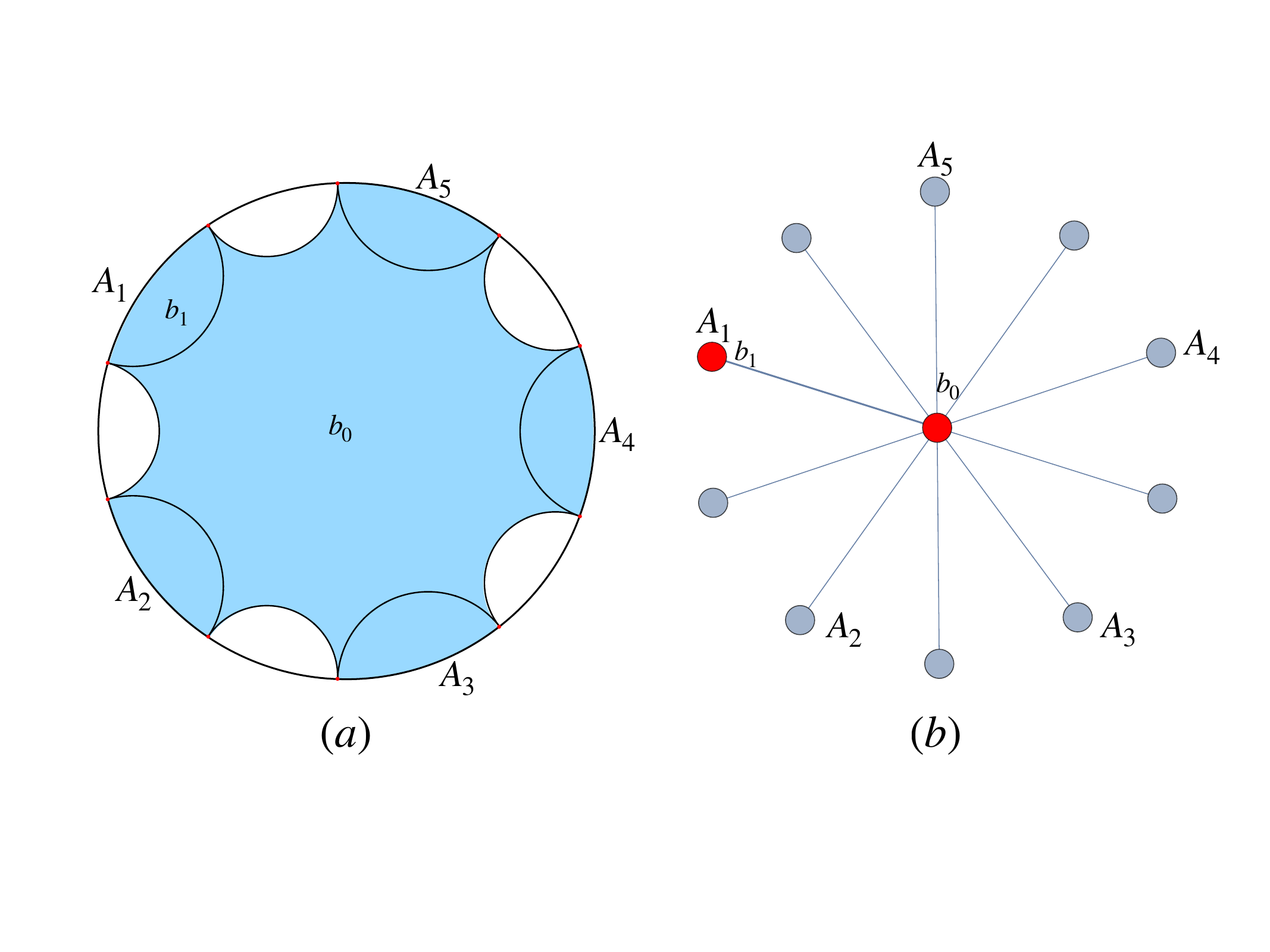}
    \caption{\small{(a) The constant time slice partitioned by RT surfaces of boundary subregions in $\mP[n=5]$ in the phase $\phi_0\leq \phi < \phi_1$. (b) The region graph for $n=5$. The vertices $b_0$ and $b_1$ in the path from the center of the region graph to the boundary vertex $A_1$ are colored red.}}
    \label{fig:examples_mixed_n-5_70_graphs}
\end{figure}

\begin{figure}[ht]
    \centering
    \includegraphics[width=0.9\linewidth]{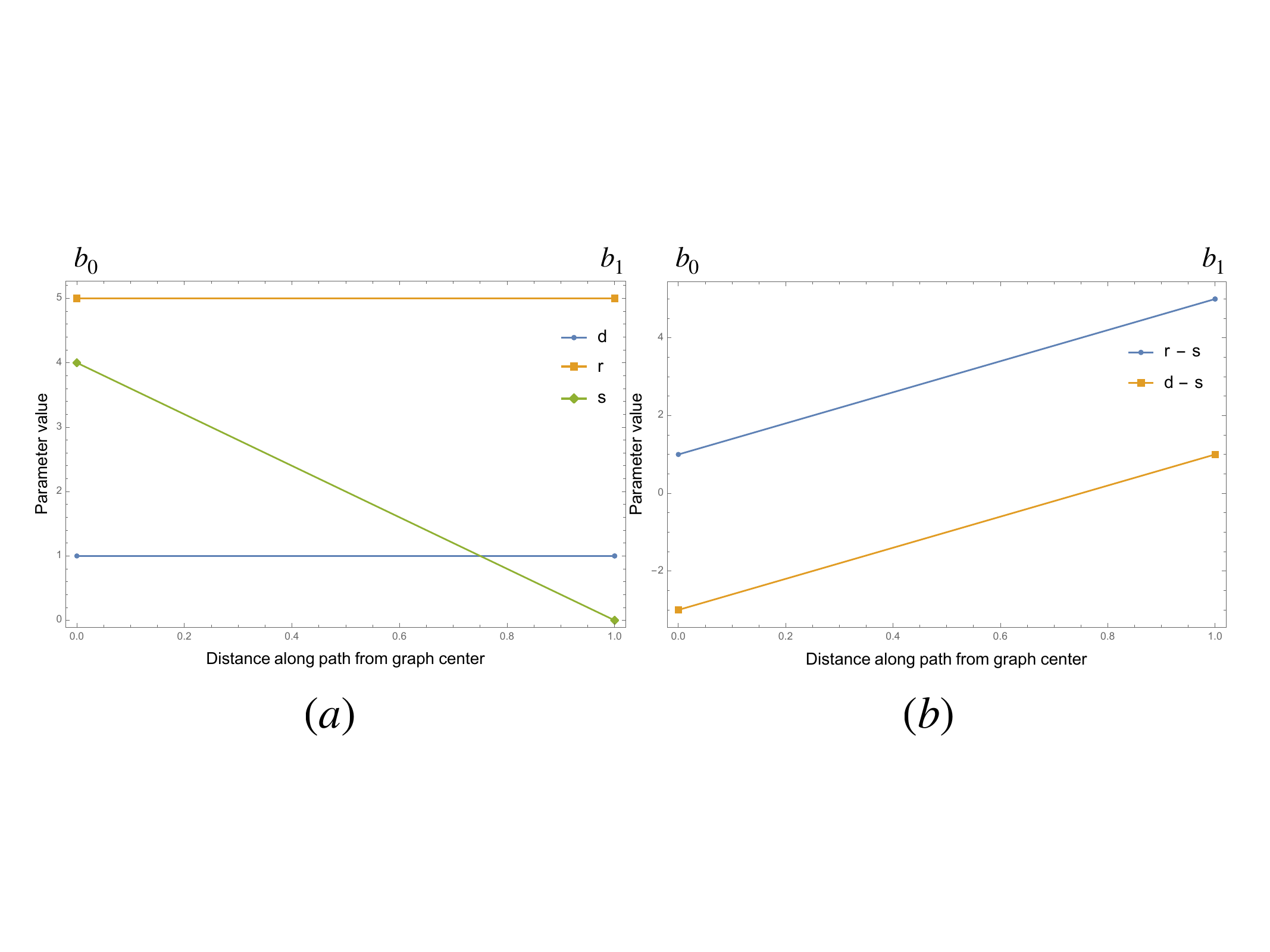}
    \caption{\small{$n=5$, mixed-state CHQSS. The plots of the parameters $d_b$, $r_b$, $s_b$, and the differences $r_b-s_b$ and $d_b-s_b$ as a function of the distance along the path from the center of the graph: (a) $d_b$, $r_b$, and $s_b$. (b) $r_b-s_b$, and $d_b-s_b$.}}
    \label{fig:examples_mixed_n-5_70_plots}
\end{figure}

\begin{figure}[ht]
    \centering
    \includegraphics[width=0.8\linewidth]{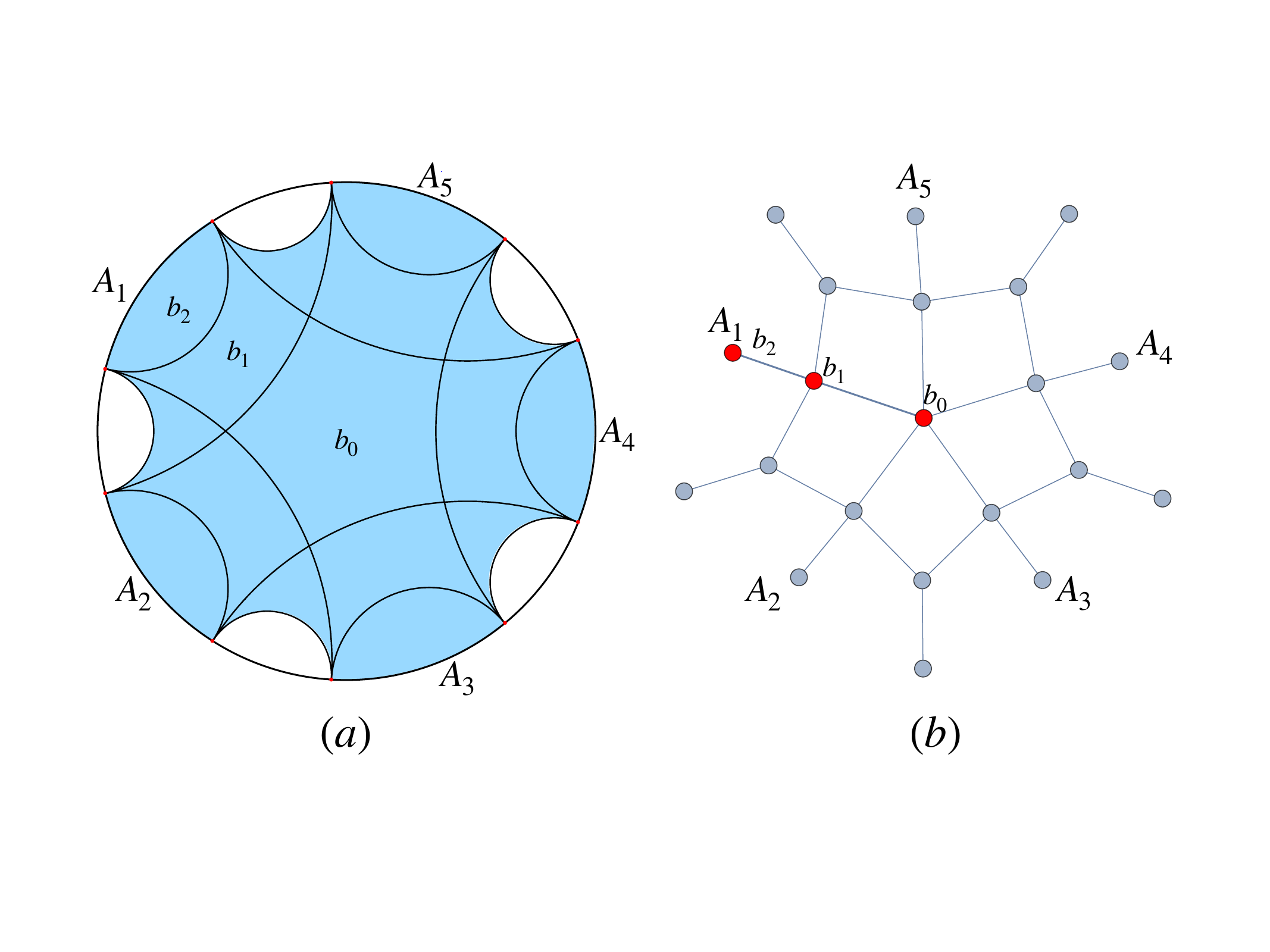}
    \caption{\small{(a) The constant time slice partitioned by RT surfaces of boundary subregions in $\mP[n=5]$ in the phase $\phi_1\leq \phi < \phi_2$. (b) The region graph for $n=5$. The vertices $b_0$, $b_1$, and $b_2$ in the path from the center of the region graph to the boundary vertex $A_1$ are colored red.}}
    \label{fig:examples_mixed_n-5_75_graphs}
\end{figure}

\begin{figure}[ht]
    \centering
    \includegraphics[width=0.9\linewidth]{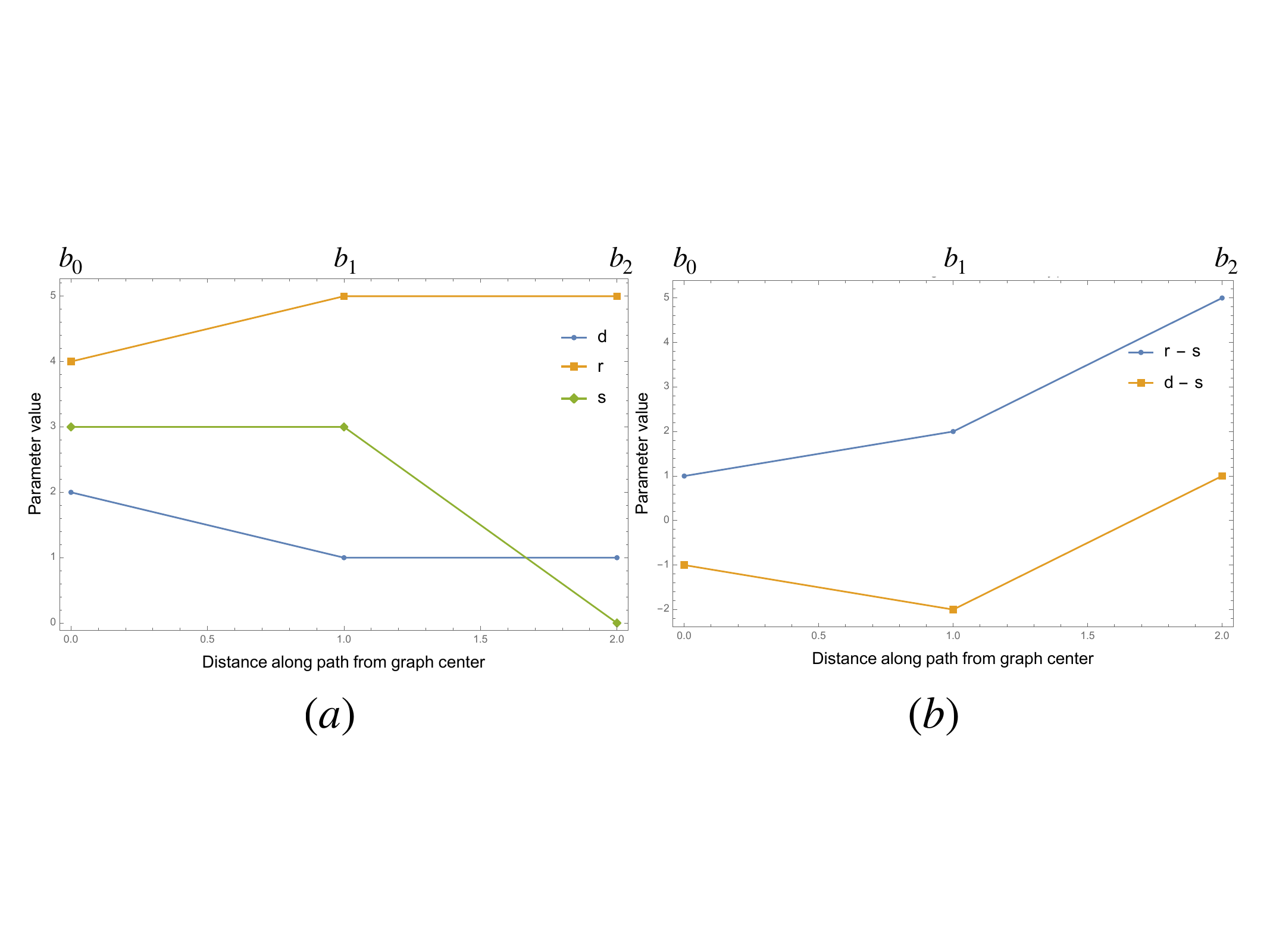}
    \caption{\small{$n=5$, mixed-state CHQSS. The plots of the parameters $d_b$, $r_b$, $s_b$, and the differences $r_b-s_b$ and $d_b-s_b$ as a function of the distance along the path from the center of the graph: (a) $d_b$, $r_b$, and $s_b$. (b) $r_b-s_b$, and $d_b-s_b$.}}
    \label{fig:examples_mixed_n-5_75_plots}
\end{figure}

\begin{figure}[ht]
    \centering
    \includegraphics[width=0.8\linewidth]{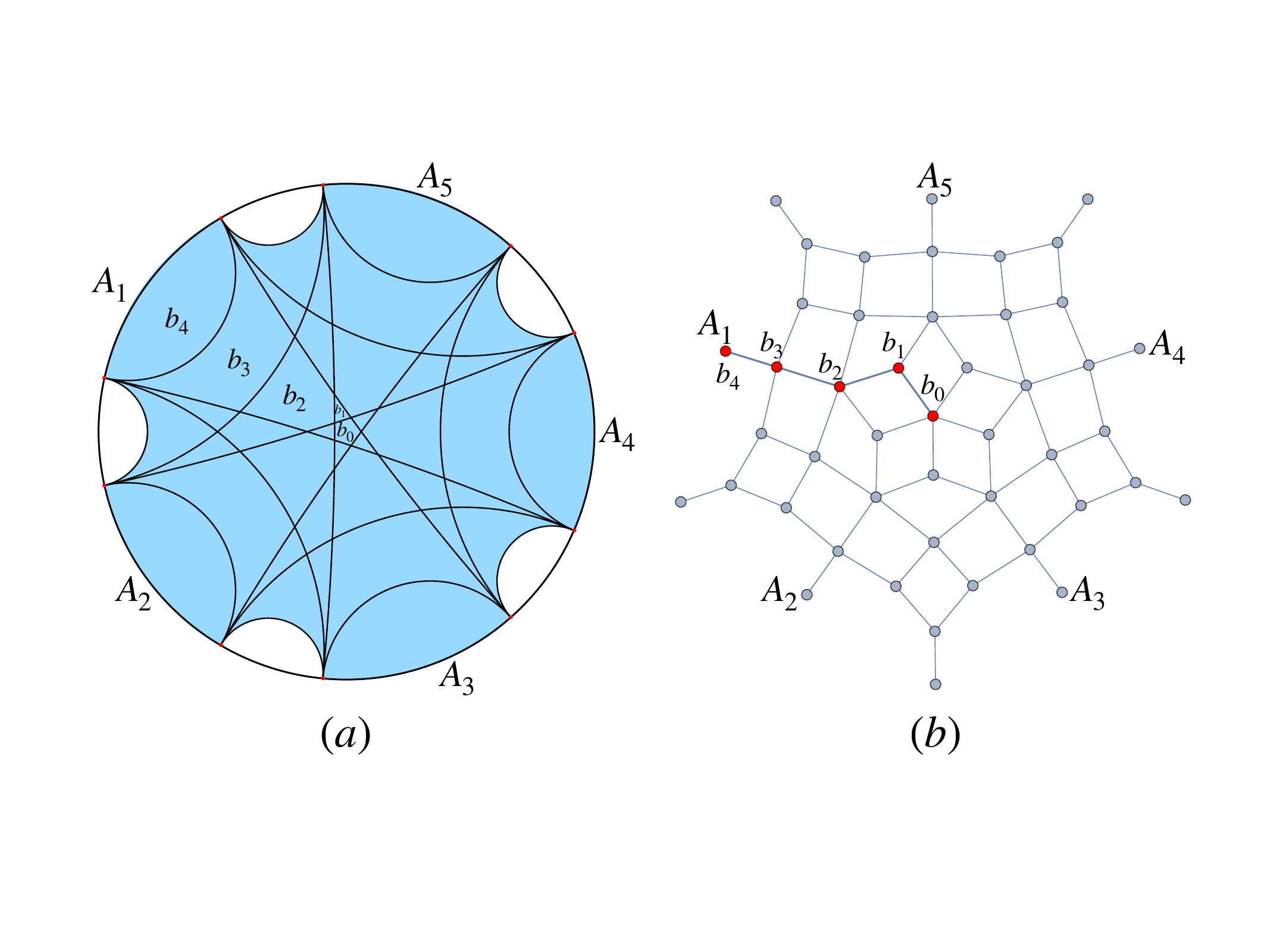}
    \caption{\small{(a) The constant time slice partitioned by RT surfaces of boundary subregions in $\mP[n=5]$ in the phase $\phi_2\leq \phi < \phi_3$. (b) The region graph for $n=5$. The vertices $b_0$, $b_1$, $b_2$, $b_3$ and $b_4$ in the path from the center of the region graph to the boundary vertex $A_1$ are colored red.}}
    \label{fig:examples_mixed_n-5_82_graphs}
\end{figure}

\begin{figure}[ht]
    \centering
    \includegraphics[width=0.9\linewidth]{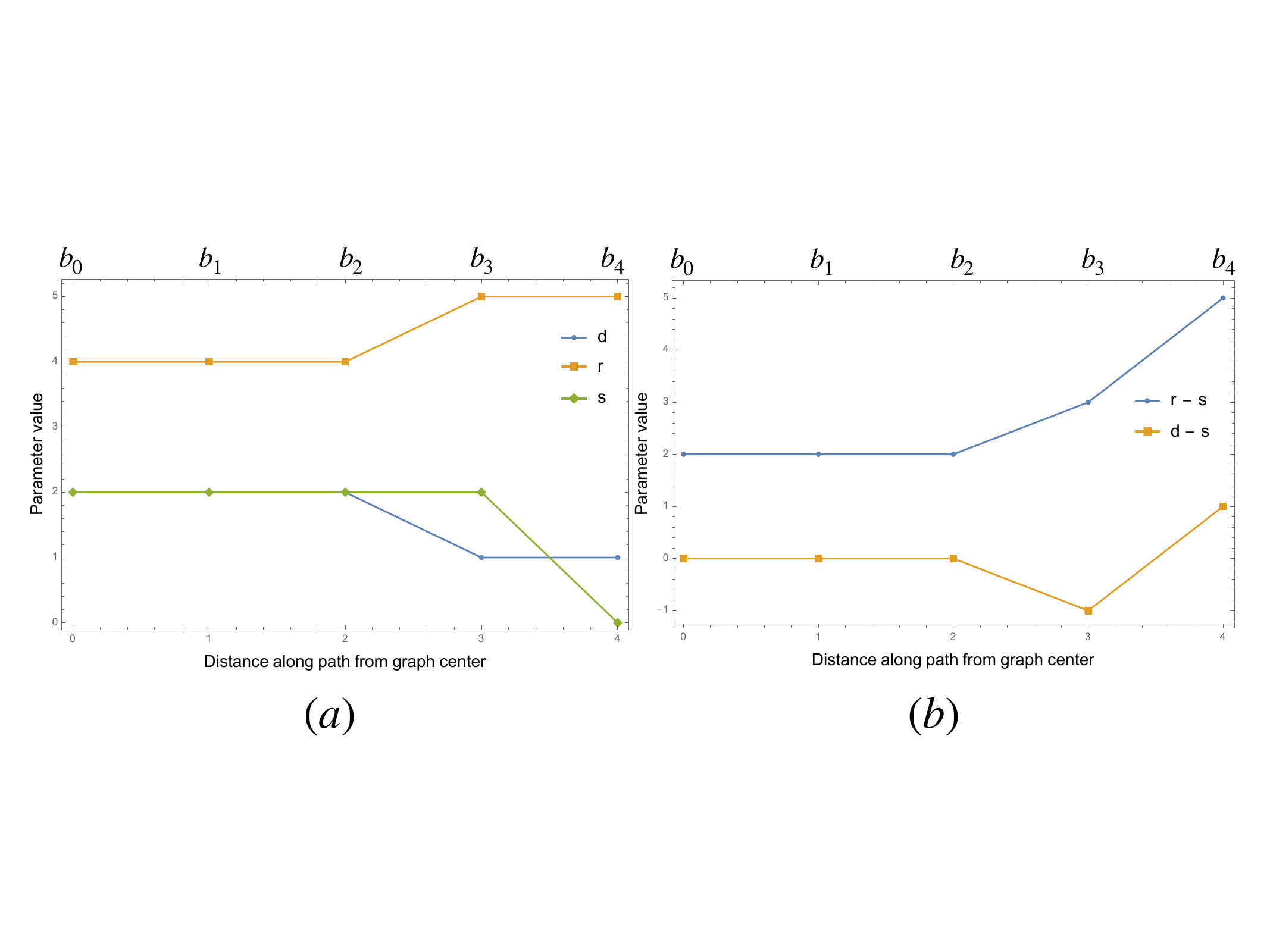}
    \caption{\small{$n=5$, mixed-state CHQSS. The plots of the parameters $d_b$, $r_b$, $s_b$, and the differences $r_b-s_b$ and $d_b-s_b$ as a function of the distance along the path from the center of the graph: (a) $d_b$, $r_b$, and $s_b$. (b) $r_b-s_b$, and $d_b-s_b$.}}
    \label{fig:examples_mixed_n-5_82_plots}
\end{figure}

\begin{figure}[ht]
    \centering
    \includegraphics[width=0.8\linewidth]{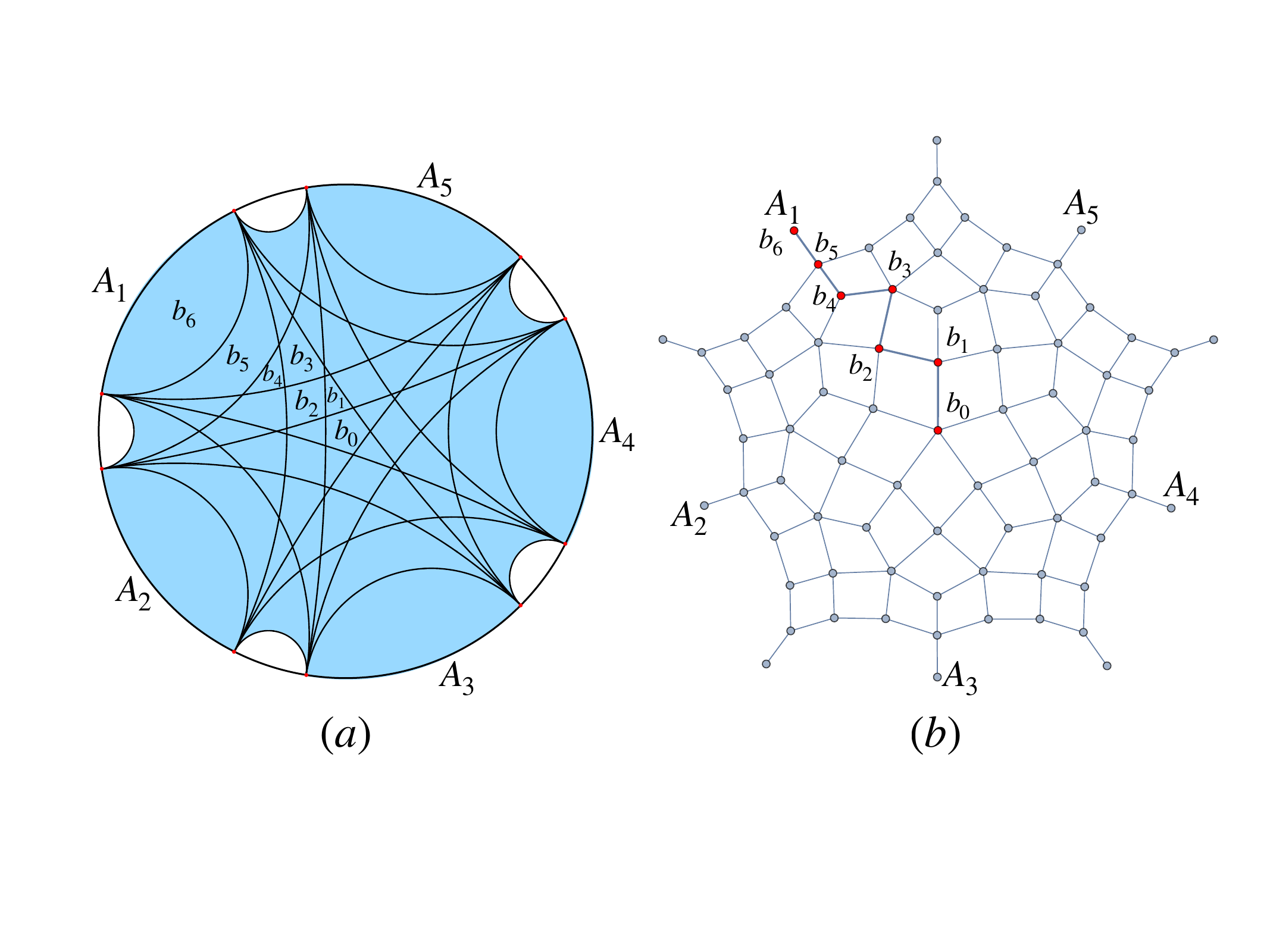}
    \caption{\small{(a) The constant time slice partitioned by RT surfaces of boundary subregions in $\mP[n=5]$ in the phase $\phi_3\leq \phi < \phi_4$. (b) The region graph for $n=5$. The vertices $b_0$, $b_1$, $b_2$, $b_3$, $b_4$, $b_5$ and $b_6$ in the path from the center of the region graph to the boundary vertex $A_1$ are colored red.}}
    \label{fig:examples_mixed_n-5_95_graphs}
\end{figure}

\begin{figure}[ht]
    \centering
    \includegraphics[width=0.9\linewidth]{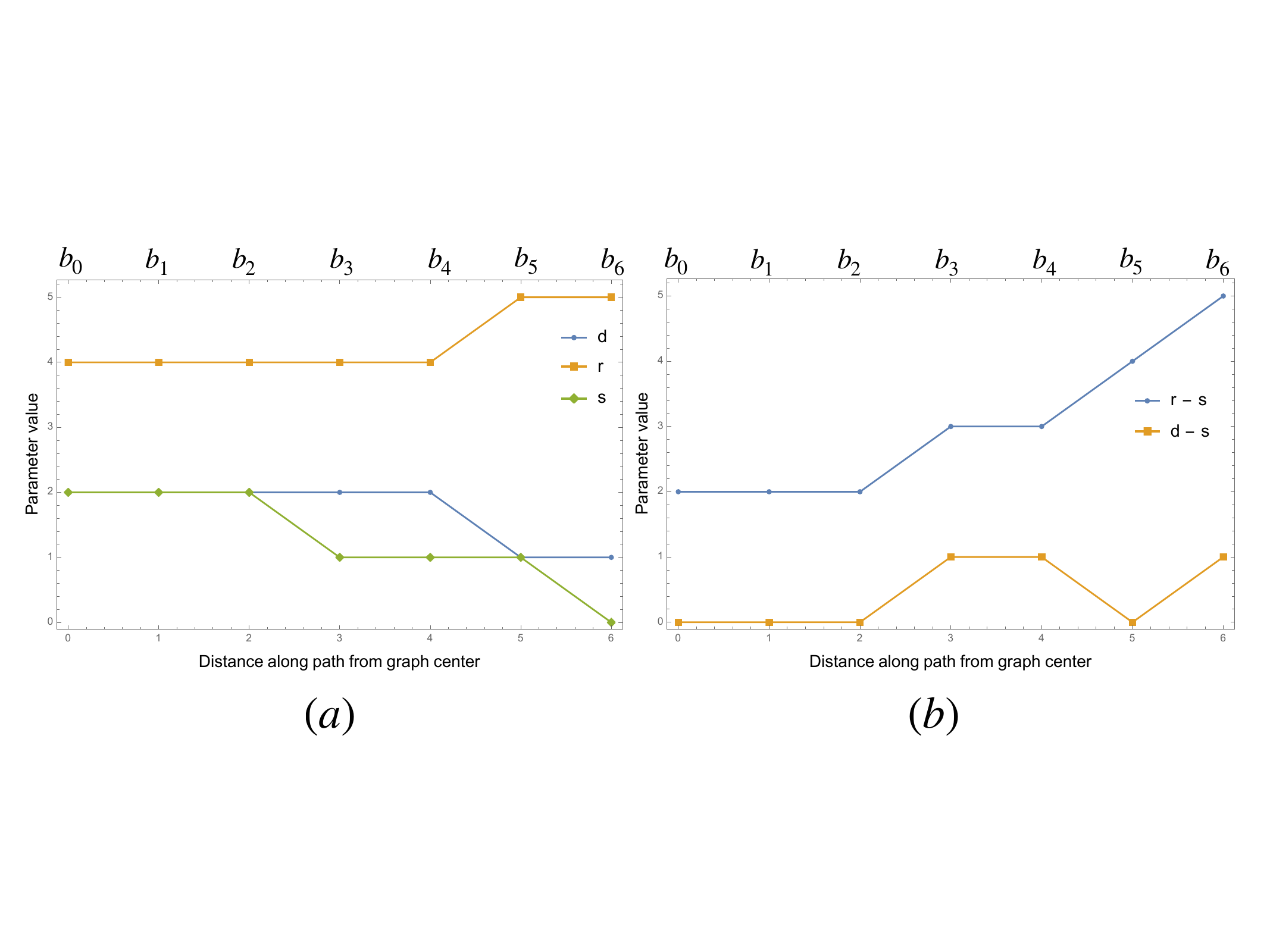}
    \caption{\small{$n=5$, mixed-state CHQSS. The plots of the parameters $d_b$, $r_b$, $s_b$, and the differences $r_b-s_b$ and $d_b-s_b$ as a function of the distance along the path from the center of the graph: (a) $d_b$, $r_b$, and $s_b$. (b) $r_b-s_b$, and $d_b-s_b$.}}
    \label{fig:examples_mixed_n-5_95_plots}
\end{figure}

\begin{figure}[ht]
    \centering
    \includegraphics[width=0.8\linewidth]{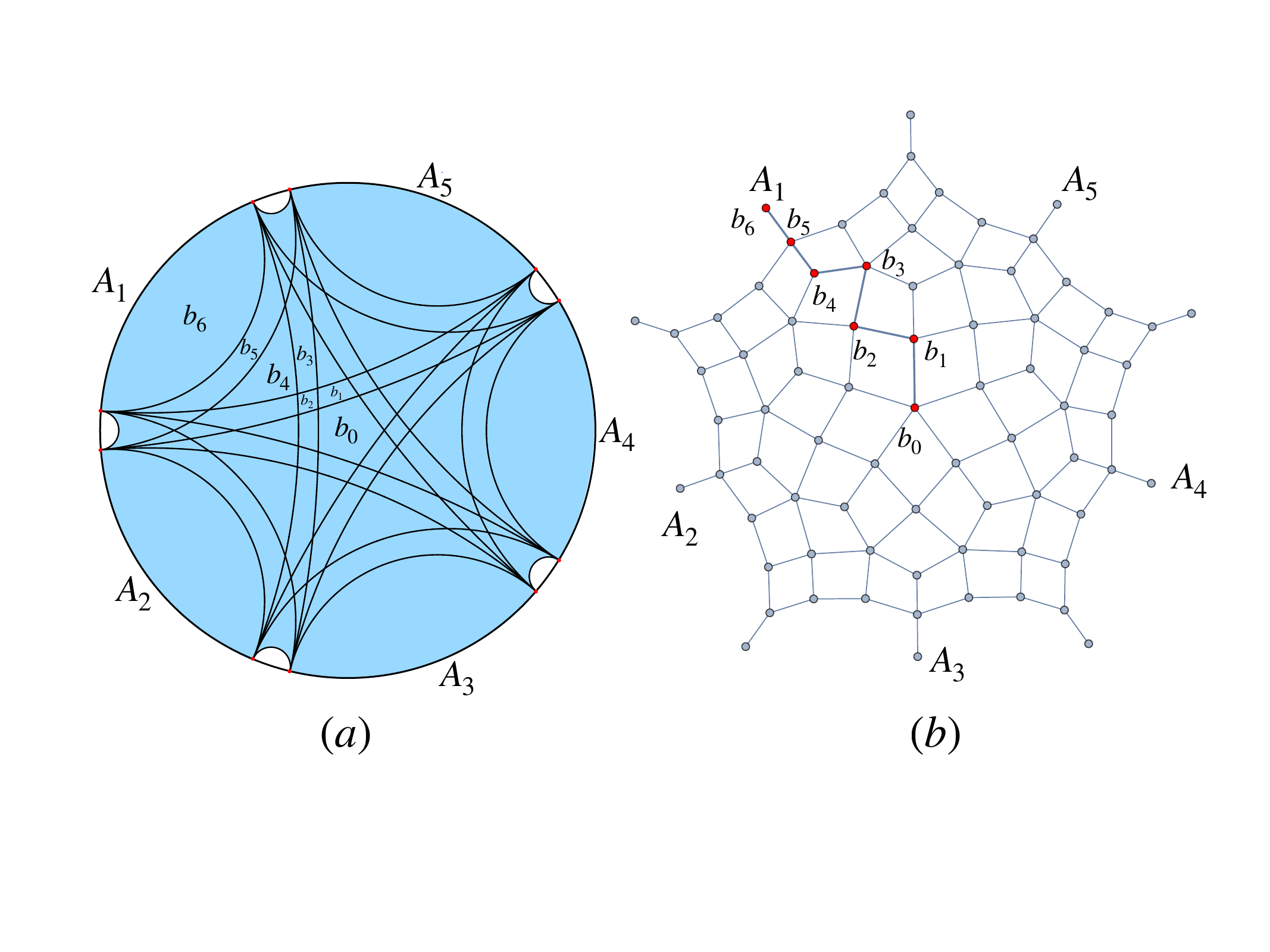}
    \caption{\small{(a) The constant time slice partitioned by RT surfaces of boundary subregions in $\mP[n=5]$ in the phase $\phi_4\leq \phi \leq 2\pi/5$. (b) The region graph for $n=5$. The vertices $b_0$, $b_1$, $b_2$, $b_3$, $b_4$, $b_5$ and $b_6$ in the path from the center of the region graph to the boundary vertex $A_1$ are colored red.}}
    \label{fig:examples_mixed_n-5_11_graphs}
\end{figure}

\begin{figure}[ht]
    \centering
    \includegraphics[width=0.9\linewidth]{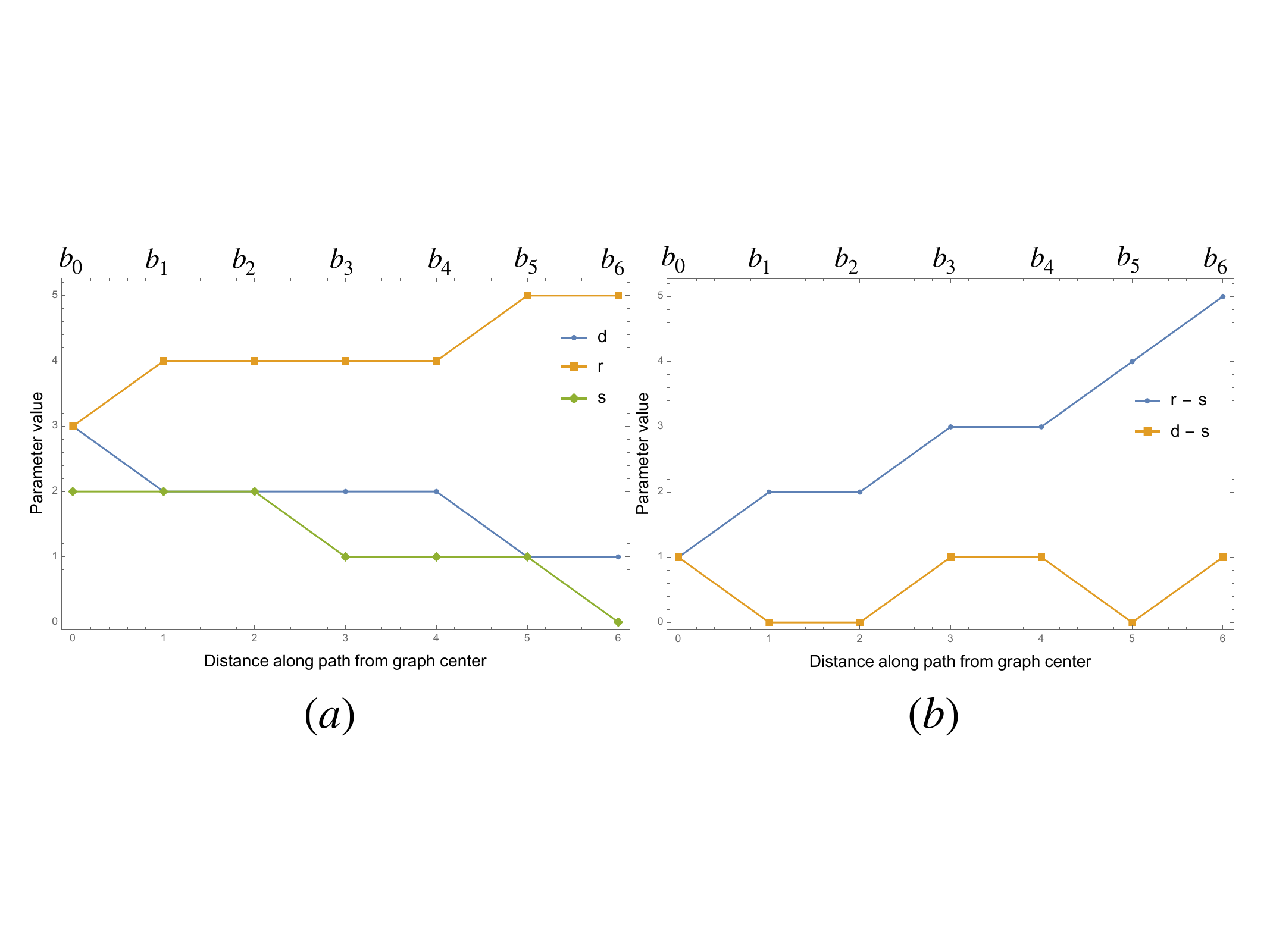}
    \caption{\small{$n=5$, mixed-state CHQSS. The plots of the parameters $d_b$, $r_b$, $s_b$, and the differences $r_b-s_b$ and $d_b-s_b$ as a function of the distance along the path from the center of the graph: (a) $d_b$, $r_b$, and $s_b$. (b) $r_b-s_b$, and $d_b-s_b$.}}
    \label{fig:examples_mixed_n-5_11_plots}
\end{figure}

\clearpage
\bibliographystyle{JHEP}
\bibliography{main}

\end{document}